\documentclass[12pt,reqno]{article}

\usepackage{amsmath}
\usepackage{amssymb}
\usepackage{amsthm}
\usepackage{bm}
\usepackage[bf,footnotesize]{caption}
\usepackage{color}
\usepackage{comment}
\usepackage{eucal}
\usepackage{fullpage}
\usepackage{graphicx}
\usepackage{hhline}
\usepackage{lineno}
\usepackage{longtable}
\usepackage{mathpazo}
\usepackage{mathrsfs}
\usepackage{mathtools}
\usepackage[numbers,sort&compress,numbers]{natbib}
\usepackage{setspace}
\usepackage{subfig}
\usepackage{tcolorbox}
\usepackage{times}
\usepackage{xr}
\usepackage[all,cmtip]{xy}

\newcommand*\patchAmsMathEnvironmentForLineno[1]{%
	\expandafter\let\csname old#1\expandafter\endcsname\csname #1\endcsname
	\expandafter\let\csname oldend#1\expandafter\endcsname\csname end#1\endcsname
	\renewenvironment{#1}%
	{\linenomath\csname old#1\endcsname}%
	{\csname oldend#1\endcsname\endlinenomath}}%
	\newcommand*\patchBothAmsMathEnvironmentsForLineno[1]{%
	\patchAmsMathEnvironmentForLineno{#1}%
	\patchAmsMathEnvironmentForLineno{#1*}}%
	\AtBeginDocument{%
	\patchBothAmsMathEnvironmentsForLineno{equation}%
	\patchBothAmsMathEnvironmentsForLineno{align}%
	\patchBothAmsMathEnvironmentsForLineno{flalign}%
	\patchBothAmsMathEnvironmentsForLineno{alignat}%
	\patchBothAmsMathEnvironmentsForLineno{gather}%
	\patchBothAmsMathEnvironmentsForLineno{multline}%
}

\newcommand{\vx}{\mathbf{x}}
\newcommand{\vy}{\mathbf{y}}
\newcommand{\vX}{\mathbf{X}}
\newcommand{\MSS}{\mathrm{MSS}}

\newcommand{\supp}{Supporting Information}

\theoremstyle{definition}
\newtheorem{corollary}{Corollary}

\newtheorem{example}{Example}
\newtheorem{lemma}{Lemma}
\newtheorem{proposition}{Proposition}
\newtheorem{remark}{Remark}
\newtheorem{theorem}{Theorem}

\newcommand{\eq}[1]{\textbf{Eq.~\ref{eq:#1}}}
\newcommand{\fig}[1]{\textbf{Fig.~\ref{fig:#1}}}
\newcommand{\tab}[1]{\textbf{Tab.~\ref{tab:#1}}}
\newcommand{\lem}[1]{Lemma~\ref{lem:#1}}
\newcommand{\prop}[1]{Proposition~\ref{prop:#1}}
\newcommand{\thm}[1]{Theorem~\ref{thm:#1}}

\newcommand{\pageof}[1]{Page~\pageref{symbol:#1}}
\newcommand{\pageofeq}[1]{Page~\pageref{eq:#1}}

\title{\begin{center} \bfseries\singlespacing
Evaluating the structure-coefficient theorem of evolutionary game theory
\end{center}}
\author{\parbox[c]{16cm}{\onehalfspacing \normalsize \centering ~\\[-0.4cm] Alex McAvoy$^{1,2,3}$ \& John Wakeley$^{1}$ \\ \quad\\ \footnotesize
		$^{1}$Department of Organismic and Evolutionary Biology, Harvard University, Cambridge, MA~02138 \\
		$^{2}$Department of Mathematics, University of Pennsylvania, Philadelphia, PA~19104 \\
		$^{3}$Center for Mathematical Biology, University of Pennsylvania, Philadelphia, PA~19104 \\[0.2cm]}
\date{}
}

\begin{document}

\allowdisplaybreaks

\maketitle

\begin{abstract}
In order to accommodate the empirical fact that population structures are rarely simple, modern studies of evolutionary dynamics allow for complicated and highly-heterogeneous spatial structures. As a result, one of the most difficult obstacles lies in making analytical deductions, either qualitative or quantitative, about the long-term outcomes of evolution. The ``structure-coefficient'' theorem is a well-known approach to this problem for mutation-selection processes under weak selection, but a general method of evaluating the terms it comprises is lacking. Here, we provide such a method for populations of fixed (but arbitrary) size and structure, using easily interpretable demographic measures. This method encompasses a large family of evolutionary update mechanisms and extends the theorem to allow for asymmetric contests to provide a better understanding of the mutation-selection balance under more realistic circumstances. We apply the method to study social goods produced and distributed among individuals in spatially-heterogeneous populations, where asymmetric interactions emerge naturally and the outcome of selection varies dramatically depending on the nature of the social good, the spatial topology, and frequency with which mutations arise.
\end{abstract}

\section{Introduction}
Population structure is nearly ubiquitous in nature and occurs on many levels. Microbes exhibit strong spatial heterogeneity in response to variation in temperature, pH, and vegetation cover in terrestrial soils \citep{franklin:FEMSME:2003} and to harsh environmental gradients and patchy habitats at deep-sea hydrothermal vents \citep{harmsen:AEM:1997}. On a very different scale, global human social networks are known to consist of many highly-connected clusters, often held together by dense cores \citep{mislove:IMC:2007}, which can in turn affect both public opinion \citep{lee:JC:2014} and disease transmission \citep{jo:SR:2021,nande:PLOSCB:2021}. Evolutionary outcomes depend on population structure because it constrains social interactions and the dispersal of offspring.

Fully a century ago, \citet{handh:1921} argued based on observations of spatial heterogeneity in land snails that variation due to chance in small local populations is an important factor of evolution. Early theoretical work by \citet{wright:GEN:1931,wright:GEN:1943} and \citet{malecot:1948} on the distribution of gene frequencies and genetic identity in regularly subdivided populations stimulated later research in population and evolutionary genetics, focused largely on symmetric population structures with discrete, well-mixed subpopulations connected by migration. These include the island model \citep{wright:GEN:1931}, with symmetric migration between every pair of subpopulations, and the one-dimensional and two-dimensional stepping-stone models \citep{kimura:GEN:1964,maruyama:AHG:1970,maruyama:AHG:1971}, with migration only between neighboring subpopulations on a lattice. These same models have been used to study social evolution; e.g.\ see \citet{nakamaru:JTB:1997}. 

Decades of subsequent research clarified the empirical facts of population structure as well as its importance in theoretical models of evolution \citep{slatkin:ARES:1985,slatkin:Science:1987}. Increased relatedness and thus greater genetic identity within local populations is the hallmark of population structure. The cause is local ``genetic drift'' (hereafter just drift) due to the vagaries of reproduction when the number of individuals in a given region is not large and migration is restricted. Frequencies of traits vary from one location to the next, which in turn affects the outcome of selection because it alters the rate of same-type interactions. Using diffusion models, for example, it has been shown that high local identity between individuals can change the effective population size and selection coefficient \citep{cherry:GEN:2003}. These effects can dramatically alter the course of evolution (e.g. through increasing the fixation probability of recessive advantageous mutations \citep{roze:GEN:2003}). Ideas of non-uniform interaction have also been crucial in evolutionary game theory, even in models of otherwise well-mixed, infinite populations \citep{hamilton:JTB:1964a,hamilton:JTB:1964b,grafen:AB:1979,hines:JTB:1979,tao:JTB:2002,taylor:TPB:2006}. That interaction structure and reproductive structure may work together non-independently has been further emphasized by laboratory experiments \citep{wade:EVOL:1980,kerr:Nature:2002}.

Population structure affects who interacts with whom, but it can also fundamentally change the nature of these interactions. In what follows, we use the example of competition between producers of a good (a social trait) and non-producers. Producers generate a good of benefit $b$\label{symbol:b} and pay a cost, $c$\label{symbol:c}, to do so. While the production of a good in isolation is straightforward, its expression within a structured population can take different forms. One option is for a producer to generate a separate good for each neighbor, each of benefit $b$ (to the neighbor) and cost $c$ (to the producer). As a producer's group size grows, so too does the total good generated by the producer, a quality that has been observed in social grooming among primates \citep{lehmann:AB:2007}. For other kinds of divisible goods, such as food shared among vampire bats \citep{wilkinson:Nature:1984} or the time and effort a researcher allocates across several collaborations \citep{jackson:JET:1996}, a more natural assumption might be that the good is produced once and the benefit, $b$, is divided among neighbors \citep{akcay:NC:2018}. How much any one neighbor receives depends on how many share the benefit, which can vary significantly in a spatially-heterogeneous population. Allotments of social goods in structured populations lead naturally to asymmetric games, in which payoffs to individuals depend on not just the trait of immediate interest (e.g.\ producer versus non-producer) but on other factors as well, including group sizes \citep{maynardsmith:CUP:1982,broom:TF:2013}.

In this work, we consider the distribution of social goods in a general model of a graph-structured population, which can include the discrete-subpopulation models mentioned above but more importantly can depict arbitrarily fine-grained heterogeneity \citep{lieberman:Nature:2005,ohtsuki:PRL:2007,ohtsuki:JTB:2007,broom:PRSA:2008}. We are interested in whether a social trait is favored in evolution, with competition determined by a game that may be asymmetric. With the greater flexibility of an arbitrary graph comes greater complexity, which prohibits comprehensive analysis for all but the smallest populations or those with high degrees of symmetry. This trade-off between model complexity and tractability has prompted a variety of approaches ranging from approximation techniques \citep{ohtsuki:Nature:2006,szabo:PR:2007} to exact results on homogeneous population structures \citep{taylor:Nature:2007,chen:AAP:2013,debarre:NC:2014,mullon:JEB:2014,chen:SR:2016} to the analysis of simulation algorithms \citep{hindersin:SR:2019}. We follow a common fruitful way forward, namely to assume that selection is weak and to focus on the stationary frequency of the social trait, in which case increased local relatedness and its effect on evolution can be captured with relative ease \citep{taylor:JTB:1996,rousset:JEB:2000,nowak:Nature:2004,lessard:JMB:2007,antal:PNAS:2009,durrett:EJP:2014}.

When selection is weak and mutations appear sufficiently infrequently, the problem can be reduced to a consideration of fixation probabilities \citep{rousset:JEB:2000,nowak:Nature:2004,fudenberg:JET:2006,vancleve:TPB:2015}. Recent techniques allow one to compute fixation probabilities in populations with arbitrary spatial structure \citep{allen:Nature:2017,mcavoy:JMB:2021}. Under the assumption that a mutation either fixes or is removed from the population prior to the appearance of another mutant, fixation probabilities allow faithful approximations of the mean frequency of a trait \citep{fudenberg:JET:2006}. This assumption is slightly more restrictive than it first appears, as the amount of time it takes for the population to return to a monomorphic state can depend strongly on the nature of the interactions. For instance, it is known that for some kinds of frequency-dependent interactions, in which coexistence is possible for extended periods of time, the maximum allowable mutation probability decays exponentially in the population size \citep{antal:BMB:2006,wu:JMB:2011}. For these kinds of traits, fixation probabilities are no longer necessarily the most meaningful or relevant measures of evolutionary dynamics.

Aside from the question of whether mutations are rare enough for fixation probabilities to be appropriate measures, we note that the mutation probability need not be small at all. In cultural contexts, mutations can reasonably occur with high frequency \citep{frank:EVOL:1997,tarnita:JTB:2009,tarnita:PNAS:2011,traulsen:PNAS:2009,debarre:JTB:2017,debarre:DGA:2020}. Assuming weak selection but not weak mutation, \citet{tarnita:JTB:2009} elegantly boiled the question of whether a trait is favored down to a single quantity called the structure coefficient. To state their theorem, we use the notation of \citet{debarre:NC:2014} and assume that there are two types in the population: $\textrm{S}$\label{symbol:S} (``social'') and $\textrm{NS}$\label{symbol:NS} (``non-social''). Payoffs for interactions between types are given by the matrix
\begin{align}
\bordermatrix{%
& \textrm{S} & \textrm{NS} \cr
\hfill\textrm{S} &\ a_{11} & \ a_{12} \cr
\textrm{NS} &\ a_{21} & \ a_{22} \cr
}\ . \label{eq:symmetric_payoff_matrix}
\end{align} 
The social trait, $\textrm{S}$, is said to be favored by selection (via the game) if its equilibrium frequency is greater than what it would be under neutrality, i.e. if the first-order effect of selection on the mean frequency of the type is positive \citep{rousset:JEB:2000,nowak:Nature:2004}. As this single matrix applies to all pairs of individuals, this is necessarily a symmetric game. Following \citet{tarnita:JTB:2009}, we assume that mutations occur with fixed probability $u>0$\label{symbol:u} per reproduction event, and that the resulting trait is sampled uniformly at random from the space of possible types. In particular, when there are two traits, $\textrm{S}$ is favored by selection if its equilibrium frequency is above $1/2$. Under some mild conditions on the population structure and the evolutionary update rule, \citet{tarnita:JTB:2009} show that there exists a ``structure coefficient,'' $\sigma$\label{symbol:sigma}, such that weak selection favors $\textrm{S}$ over $\textrm{NS}$ in the mutation-selection equilibrium whenever
\begin{align}
\sigma a_{11} + a_{12} &> a_{21} + \sigma a_{22} . \label{eq:sigma_inequality}
\end{align}

The structure coefficient depends on the mutation probability, population structure, and update rule, but not on the payoffs of the game. Thus, under weak selection and considering only the mean frequency of $\textrm{S}$ at equilibrium, $\sigma$ captures all relevant aspects of population structure. It may be noted that $\sigma$ appears in \eq{sigma_inequality} as a weight on the payoffs of same-type interactions. If one models the effects of selection on fecundity and survival separately, using different payoff matrices for each, then the structure-coefficient theorem involves two weights, one for each step of the life cycle \citep{debarre:NC:2014}. These weights still retain the property of applying to same-type interactions, and when the two payoff matrices coincide (or one is zero), a condition involving only one structure coefficient is recovered. Our extension of the structure-coefficient theorem includes two coefficients, though we do not distinguish between different steps of a life cycle using different payoff matrices.

\citet{tarnita:JTB:2009} proved the existence of $\sigma$ but did not give a general method for computing it. One of our goals here is to provide a recipe for calculating $\sigma$ in terms of simple demographic quantities. However, knowing how to calculate $\sigma$ still does not allow us to study the production of divisible social goods. To see why, we can return to the two kinds of social goods considered previously. Following the terminology of \citet{mcavoy:NHB:2020}, we refer to goods in which each neighbor gets $b$ at a cost of $c$ as ``pp-goods'' since both the total benefit and the total cost are proportional (``p'') to the number of interaction partners. The other kind of good, in which a single benefit, $b$, is divided among all neighbors, is termed an ``ff-good'' to indicate that the total benefits and costs are both fixed (``f''), independent of the number of neighbors. In both cases the interaction between any pair of neighboring individuals is a kind of ``donation game'' \citep{sigmund:PUP:2010} but the combined outcome of all such pairwise games across the population differs for pp-goods versus ff-goods.

The case of pp-goods falls within the scope of the model of \citet{tarnita:JTB:2009} since the benefits and costs are the same for every pair of neighbors in the population. In terms of \eq{symmetric_payoff_matrix}, pp-goods satisfy $a_{11}=b-c$, $a_{12}=-c$, $a_{21}=b$, and $a_{22}=0$. Here games are symmetric and population structure only affects $\sigma$. In contrast, with ff-goods population structure usually results in asymmetric games and cannot be summarized by a single coefficient. The payoff to an individual of type $x$ at location $i$ in the game against an individual of type $y$ at location $j$ depends on their numbers of neighbors. We write $a_{xy}^{ij}$\label{symbol:asymmetric_payoff_matrix} to denote the dependence on their locations and suppose that $w_{i}$\label{symbol:degree} and $w_{j}$ are their numbers of neighbors. For brevity in much of what follows, we refer to the individuals at locations $i$ and $j$ simply as ``$i$'' and ``$j$.'' If $i$ is a producer, then $j$ receives a benefit $b/w_{i}$, and $i$ pays a cost $c/w_{i}$ for having $j$ as a neighbor. Similarly, if $j$ is a producer, then $j$ pays $c/w_{j}$ to provide a benefit of $b/w_{j}$ to $i$. Thus, for ff-goods, $a_{11}^{ij}=\left(b/w_{j}-c/w_{i}\right)$, $a_{12}^{ij}=-c/w_{i}$, $a_{21}^{ij}=b/w_{j}$, and $a_{22}^{ij}=0$. Games involving ff-goods in structured populations are symmetric only when every individual has the same number of neighbors such that $a_{xy}^{ij}$ is independent of $i$ and $j$.

Here, we provide a method of analyzing mutation-selection dynamics in populations with finite size, $N$\label{symbol:N}, and fixed (but arbitrary) structure. Specifically, we obtain an extension of the structure-coefficient theorem to games written in terms of $a_{xy}^{ij}$. We prove a result analogous to \eq{sigma_inequality} but which involves more than a single structure coefficient. This allows us to quantify how much a strategy is favored or disfavored, not just whether its mean frequency is above or below the value expected under neutral drift. We provide a general method of computing these structure coefficients in terms of simple demographic quantities, and give a complexity bound on the calculations in terms of the population size. We also consider games with any finite number of strategies, providing a method for calculating the coefficients appearing in the generalization of the structure-coefficient theorem due to \citet{tarnita:PNAS:2011}. 

We use this method to investigate pp-goods versus ff-goods in graph-structured populations, in order to understand how the differences between these goods may be amplified by spatial heterogeneity as well as how the rate of mutation affects the extent to which they are each favored (or disfavored) by selection. The results show strong effects of both population structure and mutation probability, in terms of which traits are favored and by how much. Trivially, as the mutation probability approaches one, selection becomes ineffective, independent of the type of good. But the approach to this can be non-monotonic. The extent to which a trait is favored, and even whether it is favored or disfavored, depends on the mutation probability. Further, the relative advantage or disadvantage of pp-goods versus ff-goods depends on mutation. Finally, we use simulations to assess the sensitivity of these results beyond the limits of weak selection, and we find them to be qualitatively robust.

\section{Theory and Results}
We begin with an overview of the theory developed in detail in {\supp}. We discuss the key parameters involved in assessing the effects selection on the mean frequencies of strategies or traits, and we describe how these quantities are used to compute structure coefficients. \tab{notationtable} provides a glossary of all the notation we use.

\begin{longtable}{cp{9cm}l}
\caption{Glossary of Notation}
\label{tab:notationtable}\\
\hline
Symbol & Description & Defined \\
\hline
\endhead
\hline \endfoot
$b$ & Benefit of a social good & \pageof{b} \\
$c$ & Cost of a social good & \pageof{c} \\
$\textrm{S}$ & The mutant (social) trait & \pageof{S} \\
$\textrm{NS}$ & The resident (non-social) trait & \pageof{NS} \\
$a_{xy}$ & Payoff to $x$ against $y$ in a symmetric game & \pageofeq{symmetric_payoff_matrix} \\
$u$ & Probability of a mutation on reproduction & \pageof{u} \\
$\sigma$ & Structure coefficient of \citet{tarnita:JTB:2009} & \pageof{sigma} \\
$w_{i}$ & Number of neighbors (degree) of $i$ & \pageof{degree} \\
$a_{xy}^{ij}$ & Payoff to $x$ at $i$ against $y$ at $j$ in an asymmetric game & \pageof{asymmetric_payoff_matrix} \\
$N$ & Population size & \pageof{N} \\
$\delta$ & Selection intensity & \pageof{delta} \\
$\left< x_{\textrm{S}}\right>$ & Equilibrium mean frequency of $\textrm{S}$ & \pageof{mean_frequency} \\
$K_{1}^{k\ell},K_{2}^{k\ell}$ & Structure coefficients for asymmetric games & \pageofeq{structure_coefficients} \\
$R$ & Set of individuals to be replaced & \pageof{R} \\
$\alpha$ & Offspring-to-parent map & \pageof{alpha} \\
$n$ & Number of traits & \pageof{n} \\
$\vx$ (resp. $x_{i}$)& State of the population (resp. of $i$) & \pageof{vx} \\
$p_{\left(R,\alpha\right)}\left(\vx\right)$ & Probability of replacement event $\left(R,\alpha\right)$ in state $\vx$ & \pageof{pRalpha} \\
$e_{ij}\left(\vx\right)$ & Probability that $i$ replaces $j$ in state $\vx$ & \pageof{eij} \\
$d_{i}\left(\vx\right)$ & Death probability of $i$ in state $\vx$ & \pageof{di} \\
$A_{ij}$ & Probability of moving from $i$ to $j$ in the ancestral walk & \pageof{Aij} \\
$z_{i}$ & Equilibrium probability that an ancestral line is at $i$ & \pageof{zi} \\
$\pi_{i}$ & Reproductive value of $i$ & \pageof{pi} \\
$z_{i}^{\textrm{mut}}$ & Mean probability that first ancestral mutation is at $i$ & \pageofeq{mui} \\
$\pi_{i}^{\textrm{mut}}$ & Mutation-weighted reproductive value of $i$ & \pageof{pi_mut} \\
$\phi_{ij}$ & Probability that $i$ and $j$ are identical by state & \pageof{phi} \\
$U_{i}\left(\vx\right)$ & Payoff to $i$ in state $\vx$ & \pageof{U_i} \\
$\mathbf{F}\left(\vx\right)$ (resp. $F_{i}\left(\vx\right)$) & Fecundity of the population (resp. of $i$) in state $\vx$ & \pageof{F_i} \\
$m_{k}^{ij}$ & Marginal effect of $k$'s fecundity on $i$ replacing $j$ & \pageof{mkij} \\
$p_{\left(x_{i},x_{j}\right)}^{ij}$ & Neutral probability that $i$ is $x_{i}$ and $j$ is $x_{j}$ & \pageof{neutral_pair} \\
$p_{\left(x_{i},x_{j},x_{k}\right)}^{ijk}$ & Neutral probability that $i$ is $x_{i}$, $j$ is $x_{j}$, and $k$ is $x_{k}$ & \pageof{neutral_triplet} \\
$\Omega_{ij}$ & Interaction structure of the population & \pageof{omega} \\
$w_{ij}$ & Adjacency matrix of the population & \pageof{w_ij} \\
\hline
${}^{\circ}$ & Indicates neutral drift ($\delta =0$) & $\--$
\end{longtable}

Following \citep{rousset:JEB:2000,nowak:Nature:2004} we consider the first-order effect of selection on the mean frequency of a trait when the population is at stationarity. We use the general model of population structure of \citet{allen:JMB:2019} in which a scalar parameter $\delta$\label{symbol:delta} captures the intensity or strength of selection. By stationarity, we mean statistical equilibrium with respect to the processes of mutation, selection via a game in which traits are strategies, and reproduction in a finite population. If $\left< x_{\textrm{S}}\right>$\label{symbol:mean_frequency} is the mean frequency of the social trait $\textrm{S}$ in this stationary distribution, then $\textrm{S}$ is said to be favored when $\left< x_{\textrm{S}}\right>$ is greater than the corresponding prediction under neutrality. This is the framework of the structure-coefficient theorem, which follows from considering $\frac{d}{d\delta}\left< x_{\textrm{S}}\right>$ evaluated at $\delta =0$.

We take the approach in \citet{mcavoy:NHB:2020} and consider the effects of a deviation from neutrality on the fecundities of individuals ($\mathbf{F}$), as well as the effects of the resulting deviations in fecundities on the frequency of the trait. Fecundities depend on the game payoffs and the structure of the population. Thus, we compute the first-order effects of selection using the relationship 
\begin{align}\label{eq:chain}
\frac{d\left< x_{\textrm{S}}\right>}{d\delta} &= \frac{d\left< x_{\textrm{S}}\right>}{d\mathbf{F}} \frac{d\mathbf{F}}{d\delta} .
\end{align} 

Our solution involves three types of measures of population heterogeneity, which in turn depend on fundamental demographic parameters. The first measure, $\pi^{\textrm{mut}}$, is a type of reproductive value that summarizes patterns of identity by descent across the population. The second, $\phi$, is a measure of identity by state, for all pairs of individuals in the population, which determines expected payoffs to individuals. The third, $m_{k}^{ji}$, is the effect of a deviation from neutral fecundity in individual $k$ on the probability that $j$ replaces $i$ in one time step. While the first two measures are evaluated under neutrality ($\delta =0$), the third measure links these quantities to the effects of selection. Note again that, for brevity, by ``$j$ replaces $i$'' we mean the event that the individual at location $j$ replaces the individual at location $i$, and similarly that $i$, $j$, and $k$ are technically locations within the population, not individuals.

Our main result, which is derived in Supplementary Information, says that
\begin{align}
\left< x_{\textrm{S}} \right> &= \frac{1}{2} + \delta\frac{1}{2}\sum_{k,\ell =1}^{N}\left(\substack{K_{1}^{k\ell}\left(a_{11}^{k\ell}-a_{22}^{k\ell}\right) \\ +K_{2}^{k\ell}\left(a_{12}^{k\ell}-a_{21}^{k\ell}\right)}\right) + O\left(\delta^{2}\right) , \label{eq:xA_expansion}
\end{align}
where
\begin{subequations}\label{eq:structure_coefficients}
\begin{align}
K_{1}^{k\ell} &= \frac{1}{2u} \sum_{i,j=1}^{N} \pi_{i}^{\textrm{mut}} m_{k}^{ji} \left( \substack{- \left(\phi_{ik}+\phi_{i\ell}\right) \\ + \left(1-u\right)\left(\phi_{jk}+\phi_{j\ell}\right) + u} \right) ; \\
K_{2}^{k\ell} &= \frac{1}{2u} \sum_{i,j=1}^{N} \pi_{i}^{\textrm{mut}} m_{k}^{ji} \left( \substack{- \left(\phi_{ik}-\phi_{i\ell}\right) \\ + \left(1-u\right)\left(\phi_{jk}-\phi_{j\ell}\right)} \right) .
\end{align}
\end{subequations}
Again we assume that $u>0$, but we note that \eq{structure_coefficients} does not diverge as $u\rightarrow 0$. Both $- \left(\phi_{ik}+\phi_{i\ell}\right) + \left(1-u\right)\left(\phi_{jk}+\phi_{j\ell}\right) + u$ and $-\left(\phi_{ik}-\phi_{i\ell}\right) + \left(1-u\right)\left(\phi_{jk}-\phi_{j\ell}\right)$ in \eq{structure_coefficients} vanish in the limit $u\rightarrow 0$ because all individuals must eventually be identical by state in the absence of mutation. These terms divided by $u$ then each have finite limits as $u\rightarrow 0$. We refer the reader to Refs.~\citep{allen:JMB:2019,mcavoy:JMB:2021} for discussions of the rare-mutation limit in this class of models.

\subsection{Modeling population dynamics}
We consider a discrete-time population model with arbitrary spatial structure. The population evolves through a sequence of replacement events, which consist of pairs, $\left(R,\alpha\right)$, where $R\subseteq\left\{1,\dots ,N\right\}$\label{symbol:R} is the set of individuals to be replaced and $\alpha :R\rightarrow\left\{1,\dots ,N\right\}$\label{symbol:alpha} is the parentage map \citep{allen:JMB:2014,allen:JMB:2019}. In such an event, the individual at location $i\in R$ is replaced by the offspring of the individual at location $\alpha\left(i\right)$. If $i\notin R$, then $i$ simply lives through the time step. We assume that there are no ``empty'' sites.

Each individual has one of $n$\label{symbol:n} traits. We use $\vx$\label{symbol:vx} to denote the current state of the population, which is a list of the trait values of all $N$ individuals. For example, with just $n=2$ possible trait values, $\textrm{S}$ and $\textrm{NS}$, we would have $x_{i}\in\left\{\textrm{S},\textrm{NS}\right\}$ for $i\in\left\{1,\dots,N\right\}$. Here, we consider $n=2$, but our results in {\supp} allow for $n>2$. We use $p_{\left(R,\alpha\right)}\left(\vx\right)$\label{symbol:pRalpha} to denote the probability that replacement event $\left(R,\alpha\right)$ occurs. This ``replacement rule'' \citep{allen:JMB:2014,allen:JMB:2019} depends on the state of the population because individual fecundities are affected by game payoffs. 

We allow symmetric, parent-independent mutation with probability $u$ per individual offspring in a replacement event. That is, with probability $1-u$, the offspring of $\alpha\left(i\right)$ that replaces $i$ has the same strategy as its parent. With probability $u$, the offspring mutates and acquires a strategy uniformly-at-random, for example from $\left\{\textrm{S},\textrm{NS}\right\}$ when $n=2$. Thus, offspring and parent might still be identical by state after a mutation occurs, but they will not be identical by descent. We assume that the mutation probability is positive and is the same for all individuals.

If $u>0$ and a mild assumption on the replacement rule $p_{\left(R,\alpha\right)}\left(\vx\right)$ holds --- namely that at least one individual can propagate its offspring throughout the entire population (see {\supp}, \S\ref{sec:replacement_rules}) --- then this process has a unique stationary distribution with $0< \left< x_{\textrm{S}}\right> <1$. We study this distribution under the assumption of weak selection inherent to the structure-coefficient theorem, with selection intensity $\delta\geqslant 0$ determining the importance of the game to fecundity. Note that $p_{\left(R,\alpha\right)}\left(\vx\right)$ does not depend on $\vx$ when $\delta =0$, because trait values do not affect fecundity under neutrality. We use $p_{\left(R,\alpha\right)}^{\circ}$ to denote the corresponding neutral replacement rule.

The fundamental demographic parameters of individuals are marginal properties of $p_{\left(R,\alpha\right)}\left(\vx\right)$. The first and most important is the probability that $i$ replaces $j$ in one time step,
\begin{align}
e_{ij}\left(\vx\right)\label{symbol:eij} &\coloneqq \sum_{\substack{\left(R,\alpha\right) \\ j\in R,\ \alpha\left(j\right) =i}} p_{\left(R,\alpha\right)}\left(\vx\right) .
\end{align}
Note that this replacement involves the death of the previous individual, $j$. Other demographic quantities are functions of the $e_{ij}\left(\vx\right)$. The second demographic parameter we make extensive use of here is the probability that $i$ dies in one time step, $d_{i}\left(\vx\right)\coloneqq\sum_{j=1}^{N}e_{ji}\left(\vx\right)$\label{symbol:di}. Weak selection causes deviations from the neutral values of these two quantities. We call these neutral values $e_{ij}^{\circ}$ and $d_{i}^{\circ}$ and note they are independent of the population state, $\vx$. Deviations in $e_{ij}$ (and all other demographic quantities) due to selection via the game do depend on $\vx$, but only through changes in the fecundities ($\mathbf{F}$) as in \eq{chain}. We use $\mathbf{F}^{\circ}$ to denote fecundities under neutral drift.

In the following subsections we discuss the three types of measures of population heterogeneity introduced briefly in \eq{structure_coefficients}, which derive from the analysis in {\supp}.

\subsubsection{Reproductive value and identity by descent}
The coefficient $\pi_{i}^{\textrm{mut}}$ in \eq{structure_coefficients} is a mutation-weighted reproductive value, which summarizes patterns of identity by descent between individuals and their direct ancestors at location $i$. To illustrate what $\pi_{i}^{\textrm{mut}}$ represents here, we first recall the classical notion of reproductive value.

Under neutral drift, probabilities of birth, death, and replacement still vary among individuals due to their locations in the population structure. Reproductive values measure the contributions of individuals to distant future generations or, equivalently, the contributions of a long-ago ancestral individuals to the present generation \citep{fisher:OUP:1930,price:AHG:1972,taylor:AN:1990,derrida:PA:2000,engen:AN:2009,barton:G:2011,maciejewski:JTB:2014a}. If $\pi_{i}$ is the reproductive value of $i$, then with probability $1-d_{i}^{\circ}$, individual $i$ survives and brings its value $\pi_{i}$ forth into the next time step; with probability $e_{ij}^{\circ}$, individual $i$ replaces $j$ and, in doing so, acquires the value $\pi_{j}$. At stationarity, the reproductive values satisfy the balance equation $\pi_{i}=\left(1-d_{i}^{\circ}\right)\pi_{i}+\sum_{j=1}^{N}e_{ij}^{\circ}\pi_{j}$.

The solution to this recurrence is unique up to the value of $\sum_{i=1}^{N}\pi_{i}$. A standard normalization is $\sum_{i=1}^{N}\pi_{i}=1$. However, this is not the only relevant normalization, and here it will be useful to work with something different. Reproductive values are closely related to the stationary distribution of a random walk on the population. Let $A_{ij}\coloneqq e_{ji}^{\circ}/d_{i}^{\circ}$\label{symbol:Aij} be the probability that the parent of $i$ was located at $j$, conditioned on the death of $i$. Thus, $j$ replacing $i$ forward in time equates to $i$ moving to $j$ backward in time. This ancestral random walk tracks only parentage and ignores lifespans \citep{allen:JMB:2019}. It has a unique stationary distribution, $z$, which satisfies $z_{i}=\sum_{j=1}^{N}z_{j}A_{ji}$\label{symbol:zi} and represents the equilibrium probability of an ancestral line being at $i$ after many lifetimes. Weighting this probability by the expected lifespan at $i$ gives the reproductive value, $\pi_{i}=z_{i}/d_{i}^{\circ}$\label{symbol:pi}, of $i$.

The distinction between $\pi_{i}$ and $z_{i}$ aids in the interpretation of $\pi_{i}^{\textrm{mut}}$ and its relationship to patterns of identity by descent. Consider the ancestral random walk defined by the matrix $A$. Starting with an individual at location $j$, each step backward in time is associated to a mutation with probability $u$ and to a faithful transmission of genetic material with probability $1-u$. We consider the endpoint, $i$, to be the location of the most ancient individual that is identical by descent with $j$. Thus, the first mutation in the ancestral line from $j$ occurs between $i$ and its parent, $\alpha\left(i\right)$. Since a mutation must eventually arise in this walk, the endpoint defines a distribution over $\left\{1,\dots ,N\right\}$.

For the first mutation to occur at $i$ after $t$ steps in the past, the first $t$ reproduction events in this lineage must be mutation-free, which happens with probability $\left(1-u\right)^{t}$. The probability that the lineage is at $i$ is then $\left(A^{t}\right)_{ji}$. Finally, the parent of $i$ must produce a mutated offspring, which happens with probability $u$. Summing these probabilities over all $t$ gives a total probability of
\begin{align}
\sum_{t=0}^{\infty} u\left(1-u\right)^{t} \left(A^{t}\right)_{ji} &= u\left(\left(I-\left(1-u\right) A\right)^{-1}\right)_{ji} \label{eq:Dji}
\end{align}
that the first mutation happens at $i$, given a starting position of $j$. Because $\left< x_{\textrm{S}}\right>$ is the mean frequency of the trait taken uniformly over the population, our calculations depend on the average
\begin{align}
z_{i}^{\textrm{mut}} &\coloneqq \frac{1}{N}\sum_{j=1}^{N} u \left(\left(I-\left(1-u\right) A\right)^{-1}\right)_{ji} . \label{eq:mui}
\end{align}
$z_{i}^{\textrm{mut}}$ is the analogue of $z_{i}$ but weighted by identity by descent (i.e. with a time scale 
dependent on the mutation probability). In the limit $u\to 0$, $z_{i}^{\textrm{mut}}$ converges to $z_{i}$. At the other extreme, as $u\to 1$, $z_{i}^{\textrm{mut}}$ becomes uniform on $\left\{1,\dots ,N\right\}$ owing to the fact that the mutation will then occur when $t=0$, so \eq{Dji} will tend to one for $i=j$ and zero for $i\neq j$. As a distribution, $\sum_{i=1}^{N}z_{i}^{\textrm{mut}}=1$ for any $u$.

The mutation-weighted reproductive value $\pi_{i}^{\textrm{mut}}$ in \eq{structure_coefficients} is defined by $\pi_{i}^{\textrm{mut}}\coloneqq z_{i}^{\textrm{mut}}/d_{i}^{\circ}$\label{symbol:pi_mut}. Analogous to the expression $\pi_{i}=z_{i}/d_{i}^{\circ}$, mutation-weighted reproductive value captures both the contribution of $i$ to the total population, through the factor $z_{i}^{\textrm{mut}}$, and the time scale of reproduction, through the factor $1/d_{i}^{\circ}$. In the limits $u\to 0$ and $u\to 1$, $\pi_{i}^{\textrm{mut}}$ converges to $\pi_{i}$ and $1/\left(Nd_{i}^{\circ}\right)$, respectively. Further details about how $\pi^{\textrm{mut}}$ arises may be found in {\supp}, \S\ref{sec:mean_change}.

\subsubsection{Identity by state}
When selection is weak, neutral probabilities of identity by state play a key role in outcome of frequency-dependent selection \citep{rousset:JEB:2000,taylor:BMB:2004}. Let $\phi_{ij}$\label{symbol:phi} denote the stationary probability that two individuals at locations $i$ and $j$ are identical by state (have the same type or strategy) under neutral drift. Of course, $\phi_{ii}=1$ for $i=1,\dots ,N$. The other $\phi_{ij}$ for $i \neq j$ can be computed using standard techniques, recursively over a single time step. We present the details in {\supp}, \S\ref{sec:neutral_drift}, where we account for the possible replacement of both $i$ and $j$ and for the possible occurrence of zero, one, or two mutations in \eq{phi_ij_recurrence}. Solving for these $N\left(N-1\right) /2$ probabilities, $\phi_{ij}$ for $i \neq j$, is the main computational burden of our method. In our application to producers of social goods below, we discuss a simple intuitive example of this approach to identities by state.

\subsubsection{Marginal effects of selection on replacement}
Up to this point, our focus has been on neutrality ($\delta = 0$). Since we are interested in the perturbative effects of selection on the neutral process, we also need to understand how changes in individual fecundity affect the transmission of traits. In evolutionary games, the influence of the state of the population on replacement can be modeled using a fecundity function, which assigns to each state a positive quantity representing an individual's competitive abilities (including, but not limited to, the propensity to reproduce). If $U_{i}\left(\vx\right)$\label{symbol:U_i} is the payoff of individual $i$ in state $\vx$, then a typical conversion of payoff to fecundity is of the form $F_{i}\left(\vx\right) =e^{\delta U_{i}\left(\vx\right)}$ or $F_{i}\left(\vx\right) =1+\delta U_{i}\left(\vx\right)$, where $\delta\geqslant 0$ represents the selection intensity. These two forms of $F_{i}$\label{symbol:F_i} are equivalent under weak selection, so we do not need to explicitly choose one \citep{maciejewski:PLoSCB:2014}. In either case, weak selection is modeled as a small deviation, proportional to $\delta$, from a neutral baseline fecundity $F_{i}^{\circ}$ for all $i=1,\dots ,N$. We note that $F_{i}$ represents a general measure of the competitive abilities of $i$, which can pertain to reproduction, survival, or a combination of the two; we use the term ``fecundity'' instead of ``fitness'' to avoid confusion because the latter term is defined in many ways in the literature.

Our analysis of weak selection uses the fact that the replacement rule $p_{\left(R,\alpha\right)}\left(\vx\right)$ depends on $\vx$ because $\vx$ determines game payoffs, which then determine fecundities. The demographic parameters of individuals, which are marginal properties of $p_{\left(R,\alpha\right)}\left(\vx\right)$, also depend implicitly on fecundities. The population's fecundities are given in general by $\mathbf{F}\in\left(0,\infty\right)^{N}$, with $\mathbf{F}=\mathbf{F}^{\circ}$ when $\delta = 0$. We employ $\mathbf{F}$ as a set of intermediate variables between $\delta$ and the main object of our analysis, the mean trait frequency $\left< x_{\textrm{S}}\right>$. In our model, selection is directly mediated by one demographic parameter in particular, the replacement probability $e_{ij}\left(\vx\right)$. As in \citep{mcavoy:NHB:2020}, we define $m_{k}^{ij}\coloneqq\frac{\partial e_{ij}}{\partial F_{k}}\Big\vert_{\mathbf{F}=\mathbf{F}^{\circ}}$\label{symbol:mkij} as a measure of how individual payoffs affect competition in an evolutionary game. These terms are independent of the game itself, and they allow us to express the marginal effects of selection on the probability that $i$ replaces $j$ conveniently in terms of the payoffs of the game,
\begin{align}
e_{ij}\left(\vx\right) &= e_{ij}^{\circ} + \delta\sum_{k=1}^{N} m_{k}^{ij} U_{k}\left(\vx\right) + O\left(\delta^{2}\right) .
\end{align}
Note that we implicitly assume that $p_{\left(R,\alpha\right)}\left(\vx\right)$ is a smooth function of $\delta$ (see {\supp}, \S{SI.1.1}).

\subsection{Mean trait frequencies}
To see how our main result may be written succinctly in terms of probabilities of identity in state ($\phi_{ij}$), as well as the quantities $\pi_{i}^{\textrm{mut}}$ and $m_{k}^{ji}$, it is helpful to begin with the trait values of individuals. Fundamentally, the first-order effect of selection on the mean frequency of the social trait $\textrm{S}$ depends on expected payoffs, which in turn depend on traits. In {\supp}, we show that
\begin{align}
\left< x_{\textrm{S}} \right> &= \frac{1}{2} + \delta\frac{1}{u}\sum_{i,j,k=1}^{N} \pi_{i}^{\textrm{mut}} m_{k}^{ji} \left(\substack{\left(1-u\right)\frac{1}{2}\left<U_{k}\mid j\sim\textrm{S}\right>^{\circ} \\[1mm] + u\frac{1}{2}\left<U_{k}\right>^{\circ} \\[1mm] - \frac{1}{2}\left<U_{k}\mid i\sim\textrm{S}\right>^{\circ}}\right) + O\left(\delta^{2}\right) , \label{eq:xA_intuition2}
\end{align}
where $\left<\cdot\right>^{\circ}$ is the expectation taken over the stationary distribution of trait values across the population under neutrality. Thus, selective events involving $i$ are scaled by the mutation-weighted reproductive value, $\pi_{i}^{\textrm{mut}}$. Then, for each $j$ whose offspring can replace $i$, $m_{k}^{ji}$ quantifies how the performance of each individual $k$ influences the probability that $j$ replaces $i$. Finally, accounting for the possibility of mutation, the influx of $\textrm{S}$ to $i$ (via the reproduction of $j$) and the outflux of $\textrm{S}$ from $i$ (via the death of $i$) depend on three kinds of expected payoffs to individual $k$: $\left<U_{k}\mid j\sim\textrm{S}\right>^{\circ}$ and $\left<U_{k}\mid i\sim\textrm{S}\right>^{\circ}$ are the expected payoffs to $k$ conditioned on $j$ and $i$ having type $\textrm{S}$, respectively, while $\left<U_{k}\right>^{\circ}$ is the unconditional expected payoff to $k$.

Two kinds of marginal probabilities are required to describe the expected payoffs in \eq{xA_intuition2}. Let $p_{\left(x_{i},x_{j}\right)}^{ij}$\label{symbol:neutral_pair} be the probability that $i$ and $j$ have types $x_{i},x_{j}\in\left\{\textrm{S},\textrm{NS}\right\}$. Similarly, define $p_{\left(x_{i},x_{j},x_{k}\right)}^{ijk}$\label{symbol:neutral_triplet} for the triplet $i$, $j$, $k$. Using this notation, the expected payoff to $k$ when $j$ has type $\textrm{S}$ satisfies
\begin{align}
\frac{1}{2}\left<U_{k}\mid j\sim\textrm{S}\right>^{\circ} &= \sum_{\ell =1}^{N}\left(\substack{p_{\left(\textrm{S},\textrm{S},\textrm{S}\right)}^{jk\ell}a_{11}^{k\ell} + p_{\left(\textrm{S},\textrm{S},\textrm{NS}\right)}^{jk\ell}a_{12}^{k\ell} \\ + p_{\left(\textrm{S},\textrm{NS},\textrm{S}\right)}^{jk\ell}a_{21}^{k\ell} + p_{\left(\textrm{S},\textrm{NS},\textrm{NS}\right)}^{jk\ell}a_{22}^{k\ell}}\right) . \label{eq:Eukgivenj}
\end{align}
The prefactor of $1/2$ is the probability that $j$ has type $\textrm{S}$ under neutral drift. The left-hand side of \eq{Eukgivenj} is multiplied by $1-u$ in \eq{xA_intuition2} because if there is no mutation, then the offspring of $j$ that replaces $i$ has type $\textrm{S}$. Since we care only about the influx of $\textrm{S}$ to $i$ from $j$, we do not consider expected payoffs conditioned on $x_{j}=\textrm{NS}$. Similarly, regardless of the type of $j$, the offspring of $j$ will mutate and end up type $\textrm{S}$ with probability $u/2$. In this case, the expected payoff to $k$ is
\begin{align}
\left<U_{k}\right>^{\circ} &= \sum_{\ell =1}^{N}\left(\substack{p_{\left(\textrm{S},\textrm{S}\right)}^{k\ell}a_{11}^{k\ell} + p_{\left(\textrm{S},\textrm{NS}\right)}^{k\ell}a_{12}^{k\ell} \\ + p_{\left(\textrm{NS},\textrm{S}\right)}^{k\ell}a_{21}^{k\ell} + p_{\left(\textrm{NS},\textrm{NS}\right)}^{k\ell}a_{22}^{k\ell}}\right) , \label{eq:Euk}
\end{align}
which does not depend on the trait of $j$. Both \eq{Eukgivenj} and \eq{Euk} correspond to the influx of $\textrm{S}$ to $i$, so they have positive weights in \eq{xA_intuition2}. In contrast, the expected payoff to $k$ when $x_{i}=\textrm{S}$ satisfies
\begin{align}
\frac{1}{2}\left<U_{k}\mid i\sim\textrm{S}\right>^{\circ} &= \sum_{\ell =1}^{N}\left(\substack{p_{\left(\textrm{S},\textrm{S},\textrm{S}\right)}^{ik\ell}a_{11}^{k\ell} + p_{\left(\textrm{S},\textrm{S},\textrm{NS}\right)}^{ik\ell}a_{12}^{k\ell} \\ + p_{\left(\textrm{S},\textrm{NS},\textrm{S}\right)}^{ik\ell}a_{21}^{k\ell} + p_{\left(\textrm{S},\textrm{NS},\textrm{NS}\right)}^{ik\ell}a_{22}^{k\ell}}\right) , \label{eq:Eukgiveni}
\end{align}
which is subtracted because the replacement of $i$ by $j$ results in the loss of this $\textrm{S}$ at $i$. The prefactor of $1/2$ is the probability that $i$ has type $\textrm{S}$ under neutral drift. The left-hand side of \eq{Eukgiveni} corresponds to the outflux of $\textrm{S}$ from $i$ through death. As such, we do not need to consider terms of the form $\left<U_{k}\mid i\sim\textrm{NS}\right>^{\circ}$ because these correspond to zero outflux of $\textrm{S}$ at $i$ through death.

To present our main result in the form suggested by Eqs.~\ref{eq:xA_expansion}--\ref{eq:structure_coefficients}, we express $p_{\left(x_{i},x_{j}\right)}^{ij}$ and $p_{\left(x_{i},x_{j},x_{k}\right)}^{ijk}$ in terms of probabilities of identity by state. Simplifications are available when there are just two possible trait values. We have $p_{\left(\textrm{S},\textrm{S}\right)}^{ij}=p_{\left(\textrm{NS},\textrm{NS}\right)}^{ij}=\phi_{ij}/2$ and $p_{\left(\textrm{S},\textrm{NS}\right)}^{ij}=p_{\left(\textrm{NS},\textrm{S}\right)}^{ij}=\left(1-\phi_{ij}\right) /2$. When there are two possible trait values (and only in this case; see {\supp}), $p_{\left(x_{i},x_{j},x_{k}\right)}^{ijk}$ can be calculated using just pairwise identity probabilities as outlined in Table~\ref{tab:triplets}. These expressions, together with Eqs.~\ref{eq:Eukgivenj}--\ref{eq:Eukgiveni}, reduces \eq{xA_intuition2} to our main result, given by Eqs.~\ref{eq:xA_expansion}--\ref{eq:structure_coefficients}. An overview of the steps involved in evaluating this result in practice is given in Box~\ref{box:steps}.

\renewcommand{\arraystretch}{2}
\begin{table}
\centering
\begin{tabular}{|c|c|}
\hline
{\bf Triplet, \bm{$\left(x_{i},x_{j},x_{k}\right)$}} & {\bf Probability, \bm{$p_{\left(x_{i},x_{j},x_{k}\right)}^{ijk}$}} \\
\hhline{|=|=|}
$\left(\textrm{S},\textrm{S},\textrm{S}\right)$ & $\frac{1}{4}\left(\phi_{ij}+\phi_{ik}+\phi_{jk}-1\right)$ \\
\hline
$\left(\textrm{S},\textrm{S},\textrm{NS}\right)$ & $\frac{1}{4}\left(\phi_{ij}-\phi_{ik}-\phi_{jk}+1\right)$ \\
\hline
$\left(\textrm{S},\textrm{NS},\textrm{S}\right)$ & $\frac{1}{4}\left(-\phi_{ij}+\phi_{ik}-\phi_{jk}+1\right)$ \\
\hline
$\left(\textrm{S},\textrm{NS},\textrm{NS}\right)$ & $\frac{1}{4}\left(-\phi_{ij}-\phi_{ik}+\phi_{jk}+1\right)$ \\
\hline
$\left(\textrm{NS},\textrm{S},\textrm{S}\right)$ & $\frac{1}{4}\left(-\phi_{ij}-\phi_{ik}+\phi_{jk}+1\right)$ \\
\hline
$\left(\textrm{NS},\textrm{S},\textrm{NS}\right)$ & $\frac{1}{4}\left(-\phi_{ij}+\phi_{ik}-\phi_{jk}+1\right)$ \\
\hline
$\left(\textrm{NS},\textrm{NS},\textrm{S}\right)$ & $\frac{1}{4}\left(\phi_{ij}-\phi_{ik}-\phi_{jk}+1\right)$ \\
\hline
$\left(\textrm{NS},\textrm{NS},\textrm{NS}\right)$ & $\frac{1}{4}\left(\phi_{ij}+\phi_{ik}+\phi_{jk}-1\right)$ \\
\hline
\end{tabular}
\caption{Calculating the probability that $i$, $j$, and $k$ have types $x_{i}$, $x_{j}$, and $x_{k}$, respectively, under neutral drift when there are just two possible trait values, $\textrm{S}$ and $\textrm{NS}$. For each triplet of traits, $\left(x_{i},x_{j},x_{k}\right)$, the probability $p_{\left(x_{i},x_{j},x_{k}\right)}^{ijk}$ is a function of the identity-by-state probabilities $\phi_{ij}$, $\phi_{ik}$, and $\phi_{jk}$.\label{tab:triplets}}
\end{table}
\renewcommand{\arraystretch}{1}

\renewcommand{\figurename}{Box.}
\begin{figure}
\centering
\begin{tcolorbox}
\centering
\begin{enumerate}

\item Specify $p_{\left(R,\alpha\right)}\left(\vx\right)$, the probability of choosing replacement event $\left(R,\alpha\right)$ in state $\vx$;

\item Compute the marginal probability that $i$ transmits its offspring to $j$, $e_{ij}\left(\vx\right)$;

\item Compute the death probability of $i$, $d_{i}\left(\vx\right) =\sum_{j=1}^{N}e_{ji}\left(\vx\right)$;

\item Compute the matrix $A$ for the (neutral) ancestral random walk, with $A_{ij}=e_{ji}^{\circ}/d_{i}^{\circ}$;

\item Compute $z_{i}^{\textrm{mut}}=\frac{1}{N}\sum_{j=1}^{N} u \left(\left(I-\left(1-u\right) A\right)^{-1}\right)_{ji}$, where $N$ is the population size and $u$ is the mutation probability, then compute $\pi_{i}^{\textrm{mut}}=z_{i}^{\textrm{mut}}/d_{i}^{\circ}$;

\item Compute the probability that $i$ and $j$ are identical by state, $\phi_{ij}$, using \eq{phi_ij_recurrence};

\item Compute the marginal effect of $k$ on $i$ replacing $j$ at neutrality, $m_{k}^{ij}=\frac{\partial e_{ij}}{\partial F_{k}}\Big\vert_{\mathbf{F}=\mathbf{F}^{\circ}}$, where $F_{k}$ is the fecundity of $k$ and $\mathbf{F}$ is the fecundity vector of the entire population;

\item Compute the structure coefficients $K_{1}^{k\ell}$ and $K_{2}^{k\ell}$ using steps (5)--(7) and \eq{structure_coefficients};

\item Use step (8), together with the payoffs, to compute $\frac{d}{d\delta}\Big\vert_{\delta =0}\left< x_{\textrm{S}}\right>$ by means of \eq{xA_expansion}.

\end{enumerate}
\end{tcolorbox}
\caption{Instructions for calculating the effects of weak selection on the mean frequency of $\textrm{S}$.\label{box:steps}}
\end{figure}
\renewcommand{\figurename}{Figure.}
\setcounter{figure}{0}

\eq{xA_expansion} has a form similar to the structure-coefficient theorem of \citet{tarnita:JTB:2009}. In fact, our result includes this structure-coefficient theorem as a special case. Note that $a_{xy}^{ij}$ implicitly encodes which individuals interact and how payoffs are aggregated through the dependence on $i$ and $j$. For example, we would define the payoffs to be zero for non-interacting pairs of individuals. In the case of a symmetric game described by \eq{symmetric_payoff_matrix}, an interaction structure for the population needs to be specified. Let $\left(\Omega_{ij}\right)_{i,j=1}^{N}$\label{symbol:omega} be a matrix representing this interaction structure. Then, if the individuals at locations $i$ and $j$ have types $\textrm{S}$ and $\textrm{NS}$, respectively, $i$ gets payoff $\Omega_{ij}a_{12}$ and $j$ gets payoff $\Omega_{ji}a_{21}$ from this interaction. If $i$ and $j$ do not interact, then $\Omega_{ij}=0$. For every individual, these scaled payoffs are summed over all interaction partners.

A graph-structured population can be described using an adjacency matrix $\left(w_{ij}\right)_{i,j=1}^{N}$\label{symbol:w_ij}, with $w_{ij}=1$ if $i$ and $j$ are neighbors and $w_{ij}=0$ otherwise. In our model, this would then constrain the replacement rule, with (for example) $e_{ij}\left(\vx\right)=0$ whenever $w_{ij}=0$. A common approach is to set the interaction matrix $\Omega_{ij}=w_{ij}$ to indicate that individuals interact only with neighbors and accumulate the resulting payoffs, e.g. additively. Another approach is to let $\Omega_{ij}=w_{ij}/w_{i}$, where $w_{i}$ is the total number of neighbors of $i$, yielding average payoffs. Both kinds of payoffs are commonly used in evolutionary game theory \citep[see][]{maciejewski:PLoSCB:2014}. Whatever the interaction structure is for a particular population, the structure-coefficient theorem for symmetric games may be obtained from the result above by setting $a_{11}^{ij}=\Omega_{ij}a_{11}$, $a_{12}^{ij}=\Omega_{ij}a_{12}$, $a_{21}^{ij}=\Omega_{ij}a_{21}$, and $a_{22}^{ij}=\Omega_{ij}a_{22}$ in \eq{xA_expansion}.

Our results can be extended to games with $n>2$ strategies. For general matrix games with two strategies, as considered above, or for additive matrix games with $n\geqslant 3$ strategies, the complexity of evaluating the structure-coefficient theorem is bounded by solving a linear system of size $O\left(N^{2}\right)$. For non-additive matrix games with $n\geqslant 3$ strategies, this linear system is of size $O\left(N^{3}\right)$. We give a complete description of these extensions to $n$ strategies in {\supp}.

\section{Applications to producers of social goods}
We conclude with an application of our theoretical results to the evolution of producers of social goods. We are interested in how the evolutionary advantages/disadvantages of making ff-goods versus pp-goods are affected by mutation, for example whether the noise introduced by mutation always favors one kind of good over another. We investigate how these advantages/disadvantages change as mutation goes from weak to strong, and we use simulations to verify the accuracy of our numerical predictions for the mean frequency of the social trait at equilibrium.

We assume that, at each point in time, neighbors on a graph interact and receive a total payoff according to the nature of the goods generated by producers. Let $\left(w_{ij}\right)_{i,j=1}^{N}$ be the adjacency matrix of the population structure, which here is identical to the interaction structure for the game. A producer at location $i$ pays $C_{ij}$ in order to provide $j$ with a benefit of $B_{ij}$. For pp-goods, $B_{ij}=bw_{ij}$ and $C_{ij}=cw_{ij}$; for ff-goods, $B_{ij}=bw_{ij}/w_{i}$ and $C_{ij}=cw_{ij}/w_{i}$, where $w_{i}$ is the number of neighbors of $i$. In other words, we have $a_{11}^{ij}=B_{ji}-C_{ij}$, $a_{12}^{ij}=-C_{ij}$, $a_{21}^{ij}=B_{ji}$, and $a_{22}^{ij}=0$.

An individual's total payoff, $U_{i}\left(\vx\right)$, obtained by summing over all interactions, is converted to fecundity, $F_{i}\left(\vx\right) =e^{\delta U_{i}\left(\vx\right)}$, where $\delta\geqslant 0$ is the intensity of selection and $\vx$ is the current state of the population. The population then updates according to the death-Birth (dB) rule, in which an individual is first selected uniformly at random for death and then the neighbors compete, based on their relative fecundities, to fill the vacancy. The neighbors in this reproduction step are also determined by $\left(w_{ij}\right)_{i,j=1}^{N}$. In this case, the probability that $i$ replaces $j$ when the state is $\vx$ is
\begin{align}
e_{ij}\left(\vx\right) &= \frac{1}{N} \frac{F_{i}\left(\vx\right) w_{ij}}{\sum_{k=1}^{N}F_{k}\left(\vx\right) w_{kj}} . \label{eq:eij_db}
\end{align}

Under neutral drift, the probability that $i$ replaces $j$ is $e_{ij}^{\circ}=\left(w_{ji}/w_{j}\right) /N$, and the probability that $i$ dies in one step of the process is $d_{i}^{\circ}=1/N$. The probability of moving from $i$ to $j$ in one step of the ancestral random walk is then $A_{ij}=w_{ij}/w_{i}$. The stationary distribution for this random walk satisfies $z_{i}=w_{i}/\sum_{j=1}^{N}w_{j}$, which gives $i$ a reproductive value of $\pi_{i}=Nw_{i}/\sum_{j=1}^{N}w_{j}$. For $u>0$, the relevant notion of reproductive value, $\pi_{i}^{\textrm{mut}}$, has a more complicated expression but can easily be found by calculating $z_{i}^{\textrm{mut}}$ from $\left(A_{ij}\right)_{i,j=1}^{N}$ (using \eq{mui}) and letting $\pi_{i}^{\textrm{mut}}=z_{i}^{\textrm{mut}}/d_{i}^{\circ}$.

The pairwise identity-by-state probabilities, $\phi$, satisfy $\phi_{ii}=1$ for every $i=1,\dots ,N$. For $i\neq j$, since exactly one individual is replaced in each time step under dB updating, \eq{phi_ij_recurrence} gives
\begin{align}
\phi_{ij} &= u\frac{1}{2} + \left(1-u\right)\frac{1}{2}\sum_{k=1}^{N} A_{ik} \phi_{kj} + \left(1-u\right)\frac{1}{2}\sum_{k=1}^{N} A_{jk} \phi_{ik} . \label{eq:db_recurrence}
\end{align}
In other words, to determine the probability that $i$ is identical by state to $j\neq i$, we wait until either $i$ or $j$ is replaced by the offspring of some parent $k$. The first term on the right-hand side of \eq{db_recurrence} represents a mutation from the parent, which happens with probability $u$ and leads to $i$ and $j$ being identical by state with probability $1/2$. Otherwise, with probability $1-u$ there is no mutation, $i$ is the one replaced with probability $1/2$, $k$ is the parent with probability $A_{ik}$, and $k$ is identical by state to $j$ with probability $\phi_{kj}$. The same logic applies to the last term, when $j$ is the one replaced.

Finally, the marginal effects of selection on replacement satisfy $m_{i}^{ij}=A_{ji}\left(1-A_{ji}\right) /N$ and $m_{k}^{ij}=-A_{ji}A_{jk}/N$ if $k\neq i$. These marginal effects can be calculated directly from \eq{eij_db}.

Putting all of these ingredients together, we see that \eq{xA_intuition2} (or \eq{xA_expansion}) takes the form
\begin{align}
\left< x_{\textrm{S}} \right> &= \frac{1}{2} + \delta\frac{1-u}{2Nu}\sum_{i,j,k=1}^{N} \pi_{i}^{\textrm{mut}} A_{ij} A_{ik} \Bigg(\underbrace{\sum_{\ell =1}^{N}\left(-\phi_{jj}C_{j\ell}+\phi_{j\ell}B_{\ell j}\right)}_{\textrm{expected payoff to $j$, given $j$ is S}} \nonumber \\
&\qquad\qquad\qquad\qquad\qquad\qquad\qquad -\underbrace{\sum_{\ell =1}^{N}\left(-\phi_{jk}C_{k\ell}+\phi_{j\ell}B_{\ell k}\right)}_{\textrm{expected payoff to $k$, given $j$ is S}}\Bigg) +O\left(\delta^{2}\right) . \label{eq:db_main_condition}
\end{align}
This expression has a simple interpretation: As with \eq{xA_intuition2}, the consequences of selection at location $i$ are scaled by the mutation-weighted reproductive value, $\pi_{i}^{\textrm{mut}}$. Each pair of other locations $j$ and $k$ affects what happens at $i$ according first to the probabilities $A_{ij}$ and $A_{ik}$. Next, as depicted in \fig{db_condition}, the selective effect of a producer at $j$ depends on the expected payoff to $j$ compared to that of the other neighbor at $k$. Here, $i$ represents the individual chosen for death, and producers are favored when a neighboring producer is more successful than its other competitors, on average.

\begin{figure}
\centering
\includegraphics[width=0.8\textwidth]{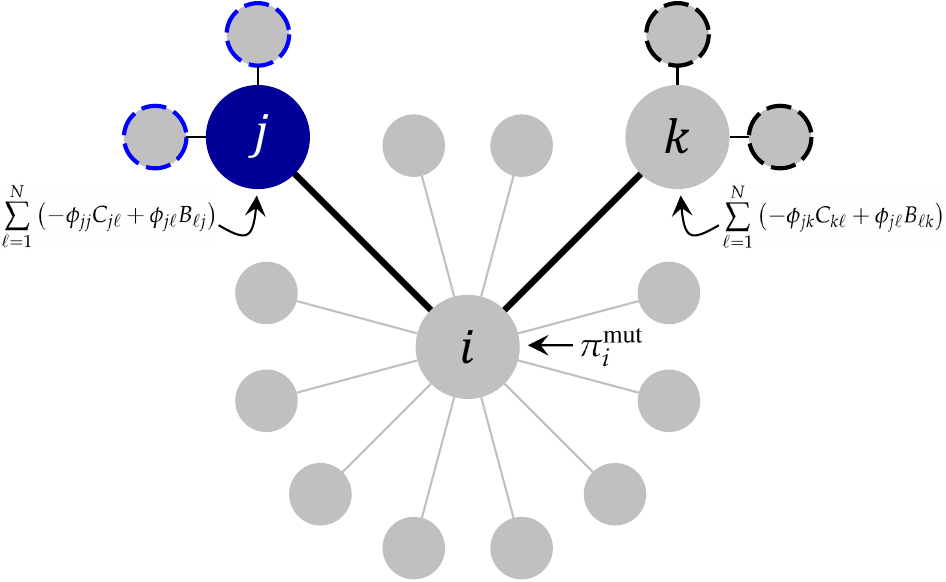}
\caption{Intuition for the effects of selection on trait frequency. A location, $i$, is first chosen with probability proportional to its mutation-weighted reproductive value, $\pi_{i}^{\textrm{mut}}$. Although all individuals are replaced with the same probability under death-Birth updating, $\pi_{i}^{\textrm{mut}}$ quantifies how important $i$ is as a location for a social trait to be propagated. If $i$ is replaced, then neighbors compete to fill the vacancy. Since we are concerned with producers of a social good (blue), the relevant comparison is between a neighboring producer and other neighbors. We choose two neighbors, $j$ and $k$, with probability $A_{ij}$ and $A_{ik}$, respectively, which represent the probabilities of $j$ and $k$ replacing $i$ under neutral drift (conditioned on $i$ being replaced). If a producer is placed at $j$, then the probability that another node, $\ell$, is a producer is equal to the probability that $j$ and $\ell$ are identical by state, $\phi_{j\ell}$. Using this distribution, we can calculate the expected payoffs for the two neighbors, $j$ and $k$, which are depicted next to each of the corresponding nodes. It is the difference between the payoff to the producer, $j$, and the other neighbor, $k$, that determines the effects of selection on the mean frequency of producers. The larger this difference is, the more strongly favored producers are.\label{fig:db_condition}}
\end{figure}

As \fig{ff_vs_pp}A shows, producers of ff-goods can be favored even when $b<c$. This seemingly paradoxical behavior is consistent to what is observed for ff-goods from an analysis of fixation probabilities in the weak-mutation limit \citep{mcavoy:NHB:2020}, namely that they benefit from the existence of a small number of highly-connected hubs. When producers are rare and are introduced at a well-connected location, they are selected against, but drift can lead to the establishment of a small colony of reciprocating producers among the neighboring nodes. Even when $b<c$, this can result in a successful producer at the hub because they receive multiple benefits (or partial benefits due to division) and pay only a single cost. Having a relatively large payoff, selection then favors the propagation of the hub's offspring (producers) to the periphery, provided the mutation probability is not too large, which serves to strengthen the hub's selective advantage. Similarly, when producers are widespread, a non-producer mutant is more likely to arise as a neighbor of this highly-connected individual than at the hub itself. Since the hub then has an extremely large payoff owing to the existence of many separate donations, each of benefit $b$ (or close to $b$), the hub can easily replace the mutant non-producer by a producer. Producers therefore thrive by taking over highly-connected hubs and populating the neighborhoods with copies of themselves. These neighboring locations are likely to hold producers for long periods of time, which gives them more opportunities to spread their offspring to other locations in the population, even when at a reproductive disadvantage relative to other competitors.

\begin{figure}
\centering
\includegraphics[width=0.8\textwidth]{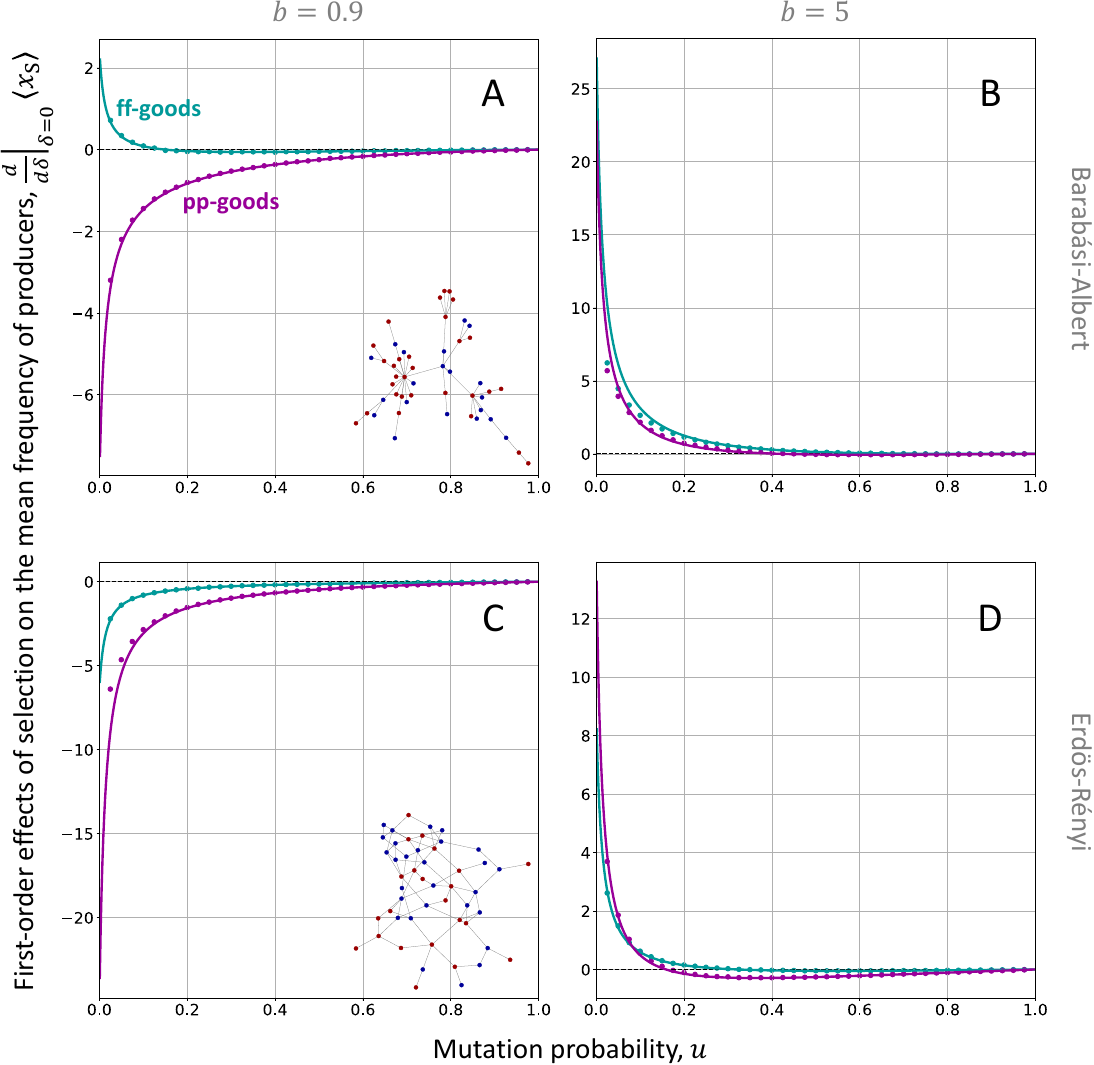}
\caption{Effects of selection on the mean frequencies of simple prosocial behaviors. The solid lines indicate calculations performed using the structure-coefficient theorem (Eqs.~\ref{eq:xA_expansion}--\ref{eq:structure_coefficients}). Dots depict the simulated first-order effects of selection, $\left(\left< x_{\textrm{S}}\right> -1/2\right) /\delta$, where $\left<x_{\textrm{S}}\right>$ is calculated for a selection intensity of $\delta =0.05$ by observing the mean frequency of $\textrm{S}$ over $10^{8}$ generations. A and B show a single Barab\'{a}si-Albert (preferential-attachment) graph of size $N=50$, with $b<c$ in A and $b>c$ in B. Even when $b<c$ (here, $b=0.9$ and $c=1$), we see that producers of ff-goods can be favored in the mutation-selection equilibrium. C and D show a single Erd\"{o}s-R\'{e}nyi graph of size $N=50$, with $b<c$ in C and $b>c$ in D. When $b>c$ (here, $b=5$ and $c=1$), whether selection favors producers of ff-goods more than those of pp-goods can depend on the mutation probability. In this example, producers of pp-goods are favored more when the mutation probability is small, but this ranking reverses for slightly larger values of $u$ (even when both are favored relative to neutral drift). The scales of the vertical axes vary to best illustrate the results.\label{fig:ff_vs_pp}}
\end{figure}

However, this advantage enjoyed by producers of ff-goods diminishes to zero and even becomes (slightly) negative as the mutation probability increases. For one thing, a high mutation probability means that a central producer is more likely to propagate non-producers to neighboring positions. It also means that when a highly-connected producer dies, it is more likely to be replaced by a non-producer via mutation alone, even when all (or most) neighbors are producers. The detrimental effects of increased mutation probabilities are seen in all examples for which producers of ff-goods or pp-goods are favored when mutation probabilities are small (\fig{ff_vs_pp}A,B,D), both when $b<c$ and $b>c$.

When the degree distribution is more tightly concentrated around the mean, the population is less likely to have highly-connected hubs surrounded by individuals of small degree. As a result, neighboring producers must share their benefit among a larger number of two-step neighbors. Consequently, the argument we made for why producers can be favored even when $b<c$ on Barab\'{a}si-Albert graphs breaks down on Erd\"{o}s-R\'{e}nyi graphs (see \fig{ff_vs_pp}C). In addition, \fig{ff_vs_pp}D shows that even when $b>c$ and producers of both kinds of goods are favored (i.e. for mutation probabilities that are not too high), the ranking of which one is favored more can be reversed by changing the mutation probability. Thus, the noise introduced by mutation does not affect pp-goods and ff-goods uniformly. \fig{ff_vs_pp}A,B,D also demonstrates that for a single kind of good, there need not be any monotonicity in the effects of selection as the mutation probability increases: increasing $u$ can be harmful to the abundance of producers when $u$ is small but beneficial to the abundance of producers when $u$ is larger. Finally, and not surprisingly, selection via any kind of game becomes completely ineffective as $u \to 1$.

The differences between these two kinds of population structures, in terms of how they affect mutation-selection dynamics, can be summarized by their associated probabilities of identity by descent (Eqs.~\ref{eq:db_recurrence}--\ref{eq:db_main_condition} and \fig{db_condition}). In \fig{db_condition}, if $j$, a producer, has neighbors (blue dashed circles) who are themselves not highly connected (e.g. for the structure in \fig{ff_vs_pp}A,B), then $j$ is likely to replace these neighbors when they are selected for death. As a result, $j$ and its immediate neighbors are more likely to be identical by state, provided $u$ is not too large, which means that the neighbors produce benefits and provide relatively large fractions of their benefits to $j$. In contrast, the neighbors (black dashed circles) of a competitor, $k$, are generally less likely to be identical by state to $j$ because paths from $j$ to $k$ (and $k$'s neighbors) go through much more highly connected hubs. Thus, even though $k$ itself is less likely to pay a cost, $c$, than $j$, $k$ also receives fewer benefits than $j$, on average, resulting in a competitive advantage for $j$ in replacing $i$. For structures like that of \fig{ff_vs_pp}C,D, the same basic logic of \fig{db_condition} holds, but now the degree distribution is different, and we see fewer instances of highly connected nodes being surrounded by much less connected individuals. As a result, the ratio $b/c$ must generally be larger in order to offset lower probabilities of identity by state, larger net costs paid by $j$ (in the case of pp-goods), and smaller shares of goods being donated to $j$ by neighbors (in the case of ff-goods).

In {\supp}, we include a version of \fig{ff_vs_pp} with even weaker selection ($\delta =0.01$ instead of $\delta = 0.05$), which, as expected, shows even better agreement with the calculations (see \fig{ff_vs_pp_weaker}). We also illustrate the effects of larger values of $\delta$ (Figs.~\ref{fig:stronger_selection}--\ref{fig:stronger_selection_same_plot}) to assess the departures from these predictions as the strength of selection increases.

\section{Discussion}
Since its genesis in modeling animal conflict \citep{maynardsmith:Nature:1973,maynardsmith:CUP:1982}, evolutionary game theory has proven itself to be a tremendous resource in the study of social evolution. But any useful modeling framework necessitates the development of methods and tools for its analysis. Deterministic approaches, such as those based on the replicator equation \citep{taylor:MB:1978}, render the resulting models amenable to standard techniques from the theory of dynamical systems \citep{hofbauer:CUP:1998} but may neglect important evolutionary phenomena. Stochastic perspectives on modeling population dynamics can incorporate the effects of drift and fine-grained spatial structure, but several practical complications also arise. Outside of the realm of weak selection, there need not be any efficient algorithm for evaluating a population's evolutionary dynamics, in general \citep{ibsenjensen:PNAS:2015}, a conclusion that has prompted other kinds of computational techniques, such as the development of more efficient algorithms for numerical simulation \citep{hindersin:SR:2019}. Our contribution here is a method to study frequencies of types at stationarity under weak selection.

Our focus on weak selection involves the assumption that replacement probabilities are smooth functions of the selection intensity, $\delta$. One consequence is that the system under small $\delta$ is qualitatively similar to the system under neutral drift ($\delta =0$). For the latter, we assume that replacement probabilities are independent of the state, $\vx$, in the absence of selection. We also make a technical assumption that the population evolves as a coherent whole, which ensures that there exists a unique stationary distribution to study in the first place. These assumptions on the model are formally stated in {\supp}. Of course, although they are biologically reasonable, these assumptions also limit the scope of the model and should be taken into consideration when using our results to go beyond the examples illustrated here.

Another limitation is that, although the population can have arbitrary size and structure, both are fixed as the evolutionary process unfolds. Dynamic population structures are both realistic and recognized (largely through numerical simulations) as having a strong impact on the evolution of prosocial traits like cooperation \citep{wardil:SR:2014,khoo:SR:2016,akcay:NC:2018,khoo:JCN:2018}. Related to this point, an interesting extension would account for feedback between individuals and the environment as a way to capture ecosystem engineering \citep{hauert:JTB:2019,estrela:TEE:2019,tilman:NC:2020}. From a computational perspective, it would also be intriguing to explore how these components affect the complexity of quantitative analysis. These considerations are among many that would be necessary for a deeper understanding of social evolution \citep{akcay:AN:2020}.

In our examples, we find that the effects of selection on producers of two different kinds of social goods can be quite sensitive to mutation probabilities. This effect is conceptually similar to the ways in which the infidelity of Mendelian transmission of diploid phenotypes can alter evolutionary game theoretic predictions relative to those based on simple haploid systems \citep{grafen:BP:2014,rubin:BP:2016}. We compare goods that are produced with benefits and costs in proportion to the number of interaction partners (pp-goods) to goods that are produced with fixed benefits and costs then get distributed among interaction partners (ff-goods). We show that increasing the mutation probability can reverse the direction of selection and can change the relative ranking of the two types, even when producers of both kinds of goods are favored by selection. Qualitatively, one thing that the analysis of social goods on heterogeneous graphs yields is the fact that mean frequencies need not be monotonic in the mutation probability. \citet{traulsen:PNAS:2009} also found non-monotonicity but only in games with more than two traits. \citet{debarre:JTB:2017} observed only monotonic effects of mutation probabilities on mean frequencies, but the structures considered there are homogeneous, which can obscure the difference between social goods. Non-monotonicity may be inferred from the results of \citet{debarre:DGA:2020} for small to moderate migration rates (changing who interacts with whom) in an island model of population structure.

The importance of high mutation rate has long been recognized in evolutionary theory. If it is high enough, the resulting ``quasispecies'' crosses an error threshold beyond which adaptation is impossible \citep{eigen:TJPC:1988}. This model has been used to study human immunodeficiency viruses, which are highly diverse owing to rapid mutation \citep{nowak:TEE:1992}. The replicator equation has been generalized to include mutation \citep{page:JTB:2002}, bridging the gap between replicator dynamics \citep{taylor:MB:1978} for frequency-dependent selection (without mutation) and quasispecies evolution under strong mutation and frequency-\emph{independent} selection. A notable application is to language learning in populations \citep{komarova:JTB:2004}. In simple (genetically single-locus) models, increased mutation may just reduce the efficacy of selection, eventually resulting in neutral evolution, as we find here. But it is important to recognize that empirical situations are likely more complicated because increased mutation rates apply to all loci across the genome and many mutations are deleterious; for one example, see \citet{sprouffske:PLOSGEN:2018}.

When the population is heterogeneous, the distribution of social goods naturally gives rise to asymmetric games. Asymmetric contests have an extensive history in theoretical biology, dating back nearly half a century. In a pioneering theoretical work on animal behavior, \citet{maynardsmith:AB:1976} showed that while evolutionarily stable states are frequently mixed in symmetric contests, variability in traits and behaviors is actually the exception in asymmetric contests. Furthermore, they argued that asymmetric interactions should be taken as the norm, either as the result of organismal differences like size or strength, or as the result of ``uncorrelated'' asymmetries such as those that arise when one individual discovers a resource and another is a latecomer.

One takeaway from earlier studies on asymmetric games, as well as from our own examples of social goods, is that there is unequivocally both a loss of generality and a loss of realism in approximating interactions by symmetric games. The assumption of homogeneous interactions severely limits the scope of a model. This aspect of social evolution is still relatively underexplored in the literature. For example, the spatially-mediated heterogeneity arising in the production of ff-goods underscores, among other things, the need for an asymmetric version of the structure-coefficient theorem. Even though no mathematical model can accurately capture every aspect of a population evolving in the natural world, striving for refinements that are simultaneously realistic and amenable to analysis is necessary for a deeper understanding of population dynamics.

\section*{Acknowledgments}
The authors thank Florence D\'{e}barre for a wealth of insightful comments on this work. A. M. was supported by a Simons Postdoctoral Fellowship (Math+X) at the University of Pennsylvania.

\section*{Code availability}
Code supporting the findings of this study may be found at https://github.com/alexmcavoy/sigma.

\setcounter{equation}{0}
\setcounter{figure}{0}
\setcounter{section}{0}
\setcounter{table}{0}
\renewcommand{\thesection}{SI.\arabic{section}}
\renewcommand{\thesubsection}{SI.\arabic{section}.\arabic{subsection}}
\renewcommand{\theequation}{SI.\arabic{equation}}
\renewcommand{\thetable}{SI.\arabic{table}}
\renewcommand{\thefigure}{SI.\arabic{figure}}

\section{Modeling mutation-selection processes}

\subsection{Replacement and mutation}\label{sec:replacement_rules}
We consider the evolutionary dynamics of $n$ strategies, labeled $1,\dots ,n$, in a population of finite, fixed size, $N$. The state of the population is specified by a configuration of strategies, $\vx\in\left\{1,\dots ,n\right\}^{N}$. We use the notion of a \emph{replacement event} \citep{allen:JMB:2014,allen:JMB:2019} to describe a general model of reproduction. A replacement event is a pair $\left(R,\alpha\right)$, where $R\subseteq\left\{1,\dots ,N\right\}$ is the set of individuals who will die and be replaced by offspring in a given time step, where $\alpha :R\rightarrow\left\{1,\dots ,N\right\}$ is a parentage (offspring-to-parent) map. In state $\vx$, a replacement event is chosen based on a distribution, $\left\{p_{\left(R,\alpha\right)}\left(\vx\right)\right\}_{\left(R,\alpha\right)}$, known as a \emph{replacement rule}. That is, the probability that replacement event $\left(R,\alpha\right)$ occurs is equal to  $p_{\left(R,\alpha\right)}\left(\vx\right)$ when the population's state is $\vx$. We use $i$, $j$, $k$, etc., to denote individuals at particular sites in this general model of population structure.

If $i\in R$, then the strategy of $i$ in the next step depends on the strategy of its parent, which is denoted $\alpha\left(i\right)$. For a fixed mutation probability, $u>0$, $i$ inherits the strategy of $\alpha\left(i\right)$ with probability $1-u$. With probability $u$, a mutation occurs and $i$ takes on a new strategy in the set $\left\{1,\dots ,n\right\}$ uniformly at random. If $i\not\in R$, in other words if $i$ lives through this time step, its strategy remains unchanged. Thus, given a particular replacement event $\left(R,\alpha\right)$ and the current state of the population $\left(\vx\right)$, the probability that $i$ has strategy $y$ in the next time step is
\begin{align}
\mathbb{P}\left[ X_{i}^{t+1} = y\mid \left(R,\alpha\right),\ \vX^{t}=\vx \right] &= 
\begin{cases}
\mathbb{1}_{x_{i}^{t}}\left(y\right) & i\not\in R , \\
& \\
\left(1-u\right)\mathbb{1}_{x_{\alpha\left(i\right)}^{t}}\left(y\right) + u\frac{1}{n} & i\in R ,
\end{cases}
\end{align}
where $\mathbb{1}_{x}\left(y\right)$ is $1$ if $x=y$ and $0$ otherwise.

Our notation corresponds to that of \citep{allen:JMB:2014,allen:JMB:2019}, with some minor differences; see \tab{notationtable} in the main text and Table 1 of \citet{allen:JMB:2019}. We note here that $p_{\left(R,\alpha\right)}\left(\vx\right)$ depends on the configuration of strategies, $\vx$, by way of a game, which determines the fecundities of individuals. A pairwise interaction matrix $\left(\Omega_{ij}\right)_{i,j=1}^{N}$ specifies who plays with whom, and the resulting payoffs to  individuals determine their fecundities, which we denote by a vector $\mathbf{F}$. A global parameter $\delta\geqslant 0$, representing the strength or intensity of selection, determines the extent to which an individual's payoff in the game affects its fecundity. When $\delta = 0$, the population evolves neutrally. In what follows, we use the dependence on fecundities to quantify the first-order ($\delta$ near zero) effects of selection on key components of population heterogeneity.

Although the replacement rule $p_{\left(R,\alpha\right)}\left(\vx\right)$ can be used to model arbitrary population structures, our analysis depends on the following three mild assumptions. Note that the parentage map $\alpha :R\rightarrow\left\{1,\dots ,N\right\}$ may be augmented to include the cases $i\not\in R$, by the definition
\begin{align}
\widetilde{\alpha}\left(i\right) &= 
\begin{cases}
i & i\not\in R , \\
& \\
\alpha\left(i\right) & i\in R 
\end{cases}
\end{align}
where now $\widetilde{\alpha}:\left\{1,\dots ,N\right\}\rightarrow\left\{1,\dots ,N\right\}$.
We assume that:
\begin{enumerate}

\item For every $\vx$ and $\left(R,\alpha\right)$, $p_{\left(R,\alpha\right)}\left(\vx\right)$ is a smooth function of $\delta$;

\item Under neutral drift ($\delta =0$), $p_{\left(R,\alpha\right)}\left(\vx\right)$ is independent of the state, $\vx$;

\item There exists $i\in\left\{1,\dots ,N\right\}$, an integer $m\geqslant 1$, and a set of replacement events $\left\{\left(R_{k},\alpha_{k}\right)\right\}_{k=1}^{m}$ such that \emph{(i)} $p_{\left(R_{k},\alpha_{k}\right)}\left(\vx\right) >0$ for every $k\in\left\{1,\dots ,m\right\}$ and $\vx\in\left\{1,\dots ,n\right\}^{N}$; \emph{(ii)} $i\in R_{k}$ for some $k\in\left\{1,\dots ,m\right\}$; and \emph{(iii)} $\widetilde{\alpha}_{1}\circ\widetilde{\alpha}_{2}\circ\cdots\circ\widetilde{\alpha}_{m}\left(j\right) =i$ for every $j\in\left\{1,\dots ,N\right\}$.

\end{enumerate}
Thus, we assume that $p_{\left(R,\alpha\right)}\left(\vx\right)$ deviates from the corresponding neutral replacement rule by order $\delta$ when $\delta$ is small. We use $p_{\left(R,\alpha\right)}^{\circ}$ to denote the probability of $\left(R,\alpha\right)$ under neutrality ($\delta =0$). The third assumption has been called the ``fixation axiom'' \citep{allen:JMB:2019} and guarantees that at least one individual can generate a lineage that takes over the population. It also guarantees that the resulting Markov chain on $\left\{1,\dots ,n\right\}^{N}$ has a unique mutation-selection stationary (MSS) distribution, $\pi_{\MSS}$.

Our results concerning the effects of selection depend on simple demographic parameters of individuals that are marginal properties of a given replacement rule. In state $\vx$, the probability that $i$ produces an offspring that replaces $j$ in the next step is
\begin{align}
e_{ij}\left(\vx\right) &\coloneqq \sum_{\substack{\left(R,\alpha\right) \\ j\in R,\ \alpha\left(j\right) =i}} p_{\left(R,\alpha\right)}\left(\vx\right) .
\end{align}
Note that this means that $j$ died in that step. Under neutral drift, we denote this quantity by $e_{ij}^{\circ}$, which does not depend on $\vx$.

\subsection{Quantifying the mean change in a strategy's abundance}\label{sec:mean_change}
Given that the state is $\vx$ at time $t$, the probability that $i$ has strategy $y$ at time $t+1$ is
\begin{align}
\mathbb{P} &\left[ X_{i}^{t+1}=y \mid \vX^{t}=\vx \right] \nonumber \\
&= \left(1-d_{i}\left(\mathbf{x}\right)\right)\mathbb{1}_{y}\left(x_{i}\right) + \sum_{j=1}^{N}e_{ji}\left(\mathbf{x}\right)\left(\left(1-u\right)\mathbb{1}_{y}\left(x_{j}\right) + u\frac{1}{n}\right) .
\end{align}
It follows that the expected value of the $v$-weighted abundance of $y$ in the step after state $\vx$ is
\begin{align}
\mathbb{E} &\left[ \sum_{i=1}^{N}v_{i}\mathbb{1}_{y}\left(X_{i}^{t+1}\right) \mid \vX^{t}=\vx \right] \nonumber \\
&= \sum_{i=1}^{N}v_{i}\mathbb{P}\left[ X_{i}^{t+1}=y \mid \vX^{t}=\vx \right] \nonumber \\
&= \sum_{i=1}^{N}v_{i}\mathbb{1}_{y}\left(x_{i}\right) + \sum_{i,j=1}^{N}v_{i}e_{ji}\left(\mathbf{x}\right)\left(\left(1-u\right)\mathbb{1}_{y}\left(x_{j}\right) + u\frac{1}{n}-\mathbb{1}_{y}\left(x_{i}\right)\right) .
\end{align}
Given that the state is $\vx$, the expected change in the $v$-weighted abundance of $y$ in one step is then
\begin{align}
\Delta_{v,y}\left(\vx\right) &\coloneqq \sum_{i,j=1}^{N}v_{i}e_{ji}\left(\mathbf{x}\right)\left(\left(1-u\right)\mathbb{1}_{y}\left(x_{j}\right) + u\frac{1}{n}-\mathbb{1}_{y}\left(x_{i}\right)\right) . \label{eq:delhatdef}
\end{align}
Taking the expectation of both sides of \eq{delhatdef} with respect to the MSS distribution, we find that $\mathbb{E}_{\MSS}\left[\Delta_{v,y}\left(\vX\right)\right] =0$ (since the expected change in a quantity must be zero in a stationary distribution). Differentiating both sides of this equation with respect to $\delta$ at $\delta =0$ gives
\begin{align}
-\frac{d}{d\delta}\Bigg\vert_{\delta =0}\mathbb{E}_{\MSS}\left[\Delta_{v,y}^{\circ}\left(\vX\right)\right] &= \mathbb{E}_{\MSS}^{\circ}\left[\frac{d}{d\delta}\Bigg\vert_{\delta =0}\Delta_{v,y}\left(\vX\right)\right] . \label{eq:delprime}
\end{align}
We consider the left- and right-hand sides of this equation separately.

\subsubsection{Left-hand side of \eq{delprime}, $-\frac{d}{d\delta}\Bigg\vert_{\delta =0}\mathbb{E}_{\MSS}\left[\Delta_{v,y}^{\circ}\left(\vX\right)\right]$}
We first note that under neutral drift ($\delta =0$),
\begin{align}
\Delta_{v,y}^{\circ}\left(\vx\right) &= \sum_{i,j=1}^{N}\left(\left(1-u\right) v_{j} e_{ij}^{\circ} -v_{i} e_{ji}^{\circ}\right)\mathbb{1}_{y}\left(x_{i}\right) + u\frac{1}{n}\sum_{i,j=1}^{N} v_{i} e_{ji}^{\circ} .
\end{align}
We are interested in the effects of selection on the mean frequency of $y$. Since $u\left(1/n\right)\sum_{i,j=1}^{N} v_{i}e_{ji}^{\circ}$ is independent of both $\vx$ and $\delta$, we have $\frac{d}{d\delta}\Big\vert_{\delta =0}\mathbb{E}_{\MSS}\left[u\left(1/n\right)\sum_{i,j=1}^{N} v_{i}e_{ji}^{\circ}\right] =0$, which gives
\begin{align}
-\frac{d}{d\delta}\Bigg\vert_{\delta =0}\mathbb{E}_{\MSS}\left[\Delta_{v,y}^{\circ}\left(\vX\right)\right] &= \frac{d}{d\delta}\Bigg\vert_{\delta =0} \mathbb{E}_{\MSS}\left[ \sum_{i,j=1}^{N}\left(v_{i} e_{ji}^{\circ}-\left(1-u\right) v_{j} e_{ij}^{\circ}\right) \mathbb{1}_{y}\left(X_{i}\right) \right] .
\end{align}
Since we are interested in $\frac{d}{d\delta}\Big\vert_{\delta =0}\left(1/N\right)\sum_{i=1}^{N}\mathbb{P}\left[X_{i}=y\right]$, to obtain an expression for the effects of weak selection on this mean frequency, it suffices to find a vector $v$ such that the equation
\begin{align}
\sum_{i,j=1}^{N}\left(v_{i} e_{ji}^{\circ}-\left(1-u\right) v_{j} e_{ij}^{\circ}\right) \mathbb{1}_{y}\left(x_{i}\right) &= \frac{1}{N}\sum_{i=1}^{N}\mathbb{1}_{y}\left(x_{i}\right) \label{eq:meanFrequency}
\end{align}
holds for every $\vx\in\left\{1,\dots ,n\right\}^{N}$ and $y\in\left\{1,\dots ,n\right\}$. In particular, we wish to find $v$ such that $\sum_{j=1}^{N}\left(v_{i} e_{ji}^{\circ}-\left(1-u\right) v_{j} e_{ij}^{\circ}\right) =1/N$ for all $i=1,\dots ,N$. Let $A$ be the transition probability matrix for the ancestral random walk, defined by $A_{ij}\coloneqq e_{ji}^{\circ}/d_{i}^{\circ}$, where $d_{i}^{\circ}\coloneqq\sum_{j=1}^{N}e_{ji}^{\circ}$ is the death probability of $i$ at neutral drift \citep{allen:JMB:2019}. In terms of ancestry, $A_{ij}$ is the probability that a lineage at location $i$ in the population now came from location $j$ in the previous time step. Expanding each summation over $j$ on the left-hand side of \eq{meanFrequency} gives
\begin{align}
\sum_{j=1}^{N}\left( v_{i} e_{ji}^{\circ}-\left(1-u\right) v_{j} e_{ij}^{\circ}\right) &= \sum_{j=1}^{N} \left(v_{j}d_{j}^{\circ}\right) \left(I-\left(1-u\right) A\right)_{ji} .
\end{align}
Since $A$ is stochastic and $u>0$, $I-\left(1-u\right) A$ must be invertible; otherwise, $1/\left(1-u\right)$ would be an eigenvalue of $A$, which is strictly greater than one. Therefore, the vector $v^{\ast}$ defined by
\begin{align}
v_{i}^{\ast} &\coloneqq \frac{1}{d_{i}^{\circ}}\frac{1}{N}\sum_{j=1}^{N}\left(\left(I-\left(1-u\right) A\right)^{-1}\right)_{ji}\label{eq:vasti}
\end{align}
is the unique solution to \eq{meanFrequency}. For this weighting, we have the equation
\begin{align}
-\frac{d}{d\delta}\Bigg\vert_{\delta =0}\mathbb{E}_{\MSS}\left[\Delta_{v^{\ast},y}^{\circ}\left(\vX\right)\right] &= \frac{d}{d\delta}\Bigg\vert_{\delta =0}\frac{1}{N}\sum_{i=1}^{N}\mathbb{P}_{\MSS}\left[X_{i}=y\right] . \label{eq:dDeltaNeutral}
\end{align}

\begin{remark}
Let $\varphi_{v,y}\left(\vx\right)\coloneqq\sum_{i=1}^{N}v_{i}\mathbb{1}_{y}\left(x_{i}\right)$ and $f_{y}\left(\vx\right)\coloneqq\left(1/N\right)\sum_{i=1}^{N}\mathbb{1}_{y}\left(x_{i}\right)$. To understand the first-order effects of selection on the mean frequency of $y$ from \eq{delprime}, we seek $v$ such that for some $C\in\mathbb{R}$, the discrete Laplacian of $\varphi_{v,y}$, defined by $\Delta\varphi_{v,y}\left(\vx\right)\coloneqq\Delta_{v,y}^{\circ}\left(\vx\right)$, satisfies
\begin{align}
\Delta\varphi_{v,y}\left(\vx\right) &= \frac{1}{N}\sum_{i=1}^{N}\mathbb{1}_{y}\left(x_{i}\right) + C = f_{y}\left(\vx\right) + C
\end{align}
A solution to this equation, $v^{\ast}$, is therefore a solution to a discrete version of Poisson's equation. The constant $C$ is immaterial since we are concerned with the marginal effects of selection.
\end{remark}

As noted in the main text, the quantity $v_{i}^{\ast}$ is a type of reproductive value \citep{fisher:OUP:1930,maciejewski:JTB:2014a} but one that depends on the mutation probability because it accounts for identity by descent. As noted there, we write $v_{i}^{\ast}=\pi_{i}^{\textrm{mut}}/u$ to maintain the analogy to a classical notion of reproductive value. As a result, from here on, we write our expressions in terms of $\pi^{\textrm{mut}}=uv^{\ast}$ instead of $v^{\ast}$ directly.

\subsubsection{Right-hand side of \eq{delprime}, $\mathbb{E}_{\MSS}^{\circ}\left[\frac{d}{d\delta}\Bigg\vert_{\delta =0}\Delta_{v,y}\left(\vX\right)\right]$}
The update rules we are interested in have the property that they depend on the state, $\vx$, through an intermediate state-to-fecundity map, $\vx\mapsto\mathbf{F}\left(\vx\right)$. The probability $p_{\left(R,\alpha\right)}\left(\vx\right)$ of replacement event $\left(R,\alpha\right)$ depends on $\vx$ only via the function $\mathbf{F}\left(\vx\right)$. We will assume that the fecundity of $i$ depends on its payoff, $U_{i}$, according to the map $F_{i}\left(\vx\right) =\exp\left\{\delta U_{i}\left(\vx\right)\right\}$, where $\delta\geqslant 0$ is the selection intensity. For weak selection, this mapping is equivalent to other commonly used payoff-to-fecundity mappings such as $F_{i}\left(\vx\right) =1+\delta U_{i}\left(\vx\right)$ \citep{maciejewski:PLoSCB:2014}. In particular, $\mathbf{F}^{\circ}=\mathbf{1}$ when $\delta =0$.

Let $m_{k}^{ij}\coloneqq\frac{\partial}{\partial F_{k}}\Big\vert_{\mathbf{F}=\mathbf{F}^{\circ}}e_{ij}\left(\mathbf{F}\right)$ be the marginal effect of small change in $k$'s fecundity, relative to its value under neutrality, on the probability of $i$ replacing $j$ \citep{mcavoy:NHB:2020}. The derivative of $e_{ij}$ at $\delta =0$ can thus be expressed in terms of these marginal effects and the payoff functions as follows:
\begin{align}
\frac{d}{d\delta}\Bigg\vert_{\delta =0} e_{ij}\left(\vx\right) &= \sum_{k=1}^{N} m_{k}^{ij} U_{k}\left(\vx\right) .
\end{align}
At $\delta =0$, the derivative of the expected change in the $v$-weighted abundance of $y$ is then
\begin{align}
\frac{d}{d\delta}\Bigg\vert_{\delta =0}\Delta_{v,y}\left(\vx\right) &= \frac{d}{d\delta}\Bigg\vert_{\delta =0}\sum_{i,j=1}^{N} v_{i} e_{ji}\left(\vx\right) \left( \left(1-u\right)\mathbb{1}_{y}\left(x_{j}\right) + u\frac{1}{n} - \mathbb{1}_{y}\left(x_{i}\right) \right) \nonumber \\
&= \sum_{i,j,k=1}^{N} v_{i} m_{k}^{ji} U_{k}\left(\vx\right) \left( \left(1-u\right)\mathbb{1}_{y}\left(x_{j}\right) + u\frac{1}{n} - \mathbb{1}_{y}\left(x_{i}\right) \right) . \label{eq:dDelta}
\end{align}

\subsubsection{Putting the left- and right-hand sides of \eq{delprime} together}
Combining \eq{dDeltaNeutral} and \eq{dDelta}, we see that
\begin{align}
\frac{d}{d\delta} &\Bigg\vert_{\delta =0}\frac{1}{N}\sum_{i=1}^{N}\mathbb{P}_{\MSS}\left[X_{i}=y\right] \nonumber \\
&= \mathbb{E}_{\MSS}^{\circ}\left[\sum_{i,j,k=1}^{N} v_{i}^{\ast} m_{k}^{ji} U_{k}\left(\vX\right) \left( \left(1-u\right)\mathbb{1}_{y}\left(X_{j}\right) + u\frac{1}{n} - \mathbb{1}_{y}\left(X_{i}\right) \right)\right] \nonumber \\
&= \sum_{i,j,k=1}^{N} v_{i}^{\ast} m_{k}^{ji} \left( \substack{\left(1-u\right) \mathbb{E}_{\MSS}^{\circ}\left[\mathbb{1}_{y}\left(X_{j}\right) U_{k}\left(\vX\right)\right] + u\frac{1}{n}\mathbb{E}_{\MSS}^{\circ}\left[U_{k}\left(\vX\right)\right] \\ - \mathbb{E}_{\MSS}^{\circ}\left[\mathbb{1}_{y}\left(X_{i}\right) U_{k}\left(\vX\right)\right]} \right) \nonumber \\
&= \sum_{i,j,k=1}^{N} v_{i}^{\ast} m_{k}^{ji} \left( \substack{\left(1-u\right) \frac{1}{n} \mathbb{E}_{\MSS}^{\circ}\left[U_{k}\left(\vX\right)\mid X_{j}=y\right] + u\frac{1}{n}\mathbb{E}_{\MSS}^{\circ}\left[U_{k}\left(\vX\right)\right] \\ - \frac{1}{n}\mathbb{E}_{\MSS}^{\circ}\left[U_{k}\left(\vX\right)\mid X_{i}=y\right]} \right) , \label{eq:frequencyDelta}
\end{align}
where we have used the fact that $\mathbb{P}_{\MSS}^{\circ}\left[X_{i}=y\right] =1/n$ for every $i=1,\dots ,N$. This gives \eq{xA_intuition2}.

\subsection{Mean strategy frequencies in matrix games}
Following \citet{tarnita:PNAS:2011}, our primary interest is in matrix games played between two players at a time. As the structure-coefficient theorem of  \citet{tarnita:PNAS:2011} applies only to symmetric games, we begin with these and reserve our extension of the theorem to asymmetric games until \textbf{\S\ref{sec:asymmetric}}. Let
\begin{align}
\bordermatrix{
&\ 1 &\ 2 &\ \cdots &\ n \cr
1\,\,\, &\ a_{11} &\ a_{12} &\ \cdots &\ a_{1n} \cr
2 &\ a_{21} &\ a_{22} &\ \cdots &\ a_{2n} \cr
\,\vdots &\ \vdots & \vdots & \ddots &\ \vdots \cr
n &\ a_{n1} &\ a_{n2} &\ \cdots &\ a_{nn}
}\label{eq:symmetricPayoffMatrix}
\end{align}
be such a payoff matrix. An individual using strategy $x$ against an opponent using $y$ gets $a_{xy}$ (and the opponent gets $a_{yx}$). These payoffs are averaged over partners to give a total payoff 
\begin{align}
U_{i}\left(\vx\right) &= \sum_{\ell =1}^{N} \Omega_{i\ell} \sum_{y_{i},y_{\ell}=1}^{n} \mathbb{1}_{y_{i}}\left(x_{i}\right)\mathbb{1}_{y_{\ell}}\left(x_{\ell}\right) a_{y_{i}y_{\ell}} , \label{eq:Ui_symmetric}
\end{align}
where $i$ is the individual in question and $\left(\Omega_{ij}\right)_{i,j=1}^{N}$ is an interaction matrix.

\begin{example}[Games on graphs]
Suppose that $\left(w_{ij}\right)_{i,j=1}^{N}$ is the adjacency matrix for an undirected, unweighted graph with $N$ vertices. If games are played with neighbors based on \eq{symmetricPayoffMatrix}, then the regime of accumulated payoffs is obtained by letting $\Omega_{ij}=w_{ij}$. The regime of averaged payoffs can be recovered by letting $\Omega_{ij}=w_{ij}/w_{i}$, where $w_{i}=\sum_{k=1}^{N}w_{ik}$ is the degree of vertex $i$.
\end{example}

In this setting, it follows from \eq{frequencyDelta} that
\begin{align}
\frac{d}{d\delta}&\Bigg\vert_{\delta =0}\frac{1}{N} \sum_{i=1}^{N}\mathbb{P}_{\MSS}\left[X_{i}=y\right] \nonumber \\
&= \sum_{i,j,k=1}^{N} v_{i}^{\ast} m_{k}^{ji} \sum_{\ell =1}^{N} \Omega_{k\ell} \sum_{y_{k},y_{\ell}=1}^{n} \left( \substack{\left(1-u\right)\mathbb{E}_{\MSS}^{\circ}\left[\mathbb{1}_{y}\left(X_{j}\right)\mathbb{1}_{y_{k}}\left(X_{k}\right)\mathbb{1}_{y_{\ell}}\left(X_{\ell}\right)\right] + u\frac{1}{n}\mathbb{E}_{\MSS}^{\circ}\left[\mathbb{1}_{y_{k}}\left(X_{k}\right)\mathbb{1}_{y_{\ell}}\left(X_{\ell}\right)\right] \\ - \mathbb{E}_{\MSS}^{\circ}\left[\mathbb{1}_{y}\left(X_{i}\right)\mathbb{1}_{y_{k}}\left(X_{k}\right)\mathbb{1}_{y_{\ell}}\left(X_{\ell}\right)\right]} \right) a_{y_{k}y_{\ell}} . \label{eq:mainBeforeJoint}
\end{align}
From \eq{mainBeforeJoint}, we see that we need to know the joint strategy distribution for up to three distinct individuals in the neutral MSS distribution. Once this joint distribution is known for all triples of individuals, we can use \eq{mainBeforeJoint} to calculate the first-order effects of selection on the mean frequency of strategy $y$ in the population. We calculate this joint distribution in the next section.

\section{Neutral drift}\label{sec:neutral_drift}
Due the symmetry of the mutation process, the stationary distribution, $\pi_{\MSS}^{\circ}$, has the following permutation invariance (which is just a slightly more refined version of the symmetry argument given by \citet{tarnita:PNAS:2011}):
\begin{lemma}\label{lem:symmetry}
Let $\tau :\left\{1,\dots ,n\right\}\rightarrow\left\{1,\dots ,n\right\}$ be a bijection. For each $\vx\in\left\{1,\dots ,n\right\}^{N}$, we have
\begin{align}
\pi_{\MSS}^{\circ}\left( \tau\left(\vx\right) \right) &= \pi_{\MSS}^{\circ}\left( \vx \right) ,
\end{align}
where $\tau\left(\vx\right)$ is the state obtained from $\vx$ by applying $\tau$ coordinatewise, i.e. $\tau\left(\vx\right)_{i}=\tau\left(x_{i}\right)$.
\end{lemma}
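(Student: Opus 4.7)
The plan is to exploit the fact that, under neutral drift, the dynamics decouple into a state-independent replacement rule and a mutation mechanism that treats all strategies symmetrically. Concretely, I would show that the neutral transition kernel $P^{\circ}_{\vx\to\vy}$ is invariant under the diagonal action of any bijection $f$ on strategies, and then conclude from uniqueness of the stationary distribution that $\pi_{\MSS}^{\circ}$ inherits this invariance.

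First I would check the kernel-level invariance. Under neutrality, $p_{(R,\alpha)}(\vx) = p_{(R,\alpha)}^{\circ}$ is independent of $\vx$ by Assumption 2, so the entire $\vx$-dependence of $P^{\circ}_{\vx\to\vy}$ sits inside the indicator factors $\mathbb{1}_{x_{\alpha(i)}}(y_i)$ and $\mathbb{1}_{x_i}(y_i)$. Because $f$ is a bijection, $\mathbb{1}_{f(a)}(f(b)) = \mathbb{1}_{a}(b)$ for all $a,b$, and the mutation term $u/n$ is manifestly strategy-blind. Term by term, this gives
\begin{align}
P^{\circ}_{f(\vx)\to f(\vy)} &= \sum_{(R,\alpha)} p_{(R,\alpha)}^{\circ} \prod_{i\in R}\!\Bigl((1-u)\mathbb{1}_{x_{\alpha(i)}}(y_i) + \tfrac{u}{n}\Bigr) \prod_{i\notin R} \mathbb{1}_{x_i}(y_i) = P^{\circ}_{\vx\to\vy}.
\end{align}

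Second, I would define the pushforward measure $\widetilde{\pi}(\vx) \coloneqq \pi_{\MSS}^{\circ}(f(\vx))$ and verify that it is stationary for $P^{\circ}$. Using the kernel invariance above and the change of variables $\vx' = f(\vx)$, $\vy' = f(\vy)$,
\begin{align}
\sum_{\vx} \widetilde{\pi}(\vx) P^{\circ}_{\vx\to\vy} = \sum_{\vx} \pi_{\MSS}^{\circ}(f(\vx)) P^{\circ}_{f(\vx)\to f(\vy)} = \sum_{\vx'} \pi_{\MSS}^{\circ}(\vx') P^{\circ}_{\vx'\to f(\vy)} = \pi_{\MSS}^{\circ}(f(\vy)) = \widetilde{\pi}(\vy),
\end{align}
so $\widetilde{\pi}$ is a stationary distribution. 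Finally, because Assumption 3 (the fixation axiom) together with $u>0$ guarantees that the chain has a unique stationary distribution, we conclude $\widetilde{\pi} = \pi_{\MSS}^{\circ}$, which is the claim.

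There is no serious obstacle here; the argument is essentially bookkeeping once one notices that neutrality removes all $\vx$-dependence from $p_{(R,\alpha)}$ and that symmetric parent-independent mutation is equivariant under permutations of $\{1,\dots,n\}$. The only point worth writing carefully is the substitution step that turns invariance of the kernel into stationarity of the pushforward, to make clear that the bijectivity of $f$ on $\{1,\dots,n\}^N$ is what lets us reindex the sum without loss.
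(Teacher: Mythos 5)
Your proposal is correct and follows essentially the same route as the paper: both establish that the neutral transition kernel is equivariant under the diagonal action of $f$ (via the state-independence of $p_{\left(R,\alpha\right)}^{\circ}$ and the strategy-blindness of symmetric mutation), deduce that the pushforward $\pi_{\MSS}^{\circ}\circ f$ is again stationary, and invoke uniqueness of the stationary distribution. The only cosmetic difference is that you isolate the kernel invariance $P^{\circ}_{f\left(\vx\right)\rightarrow f\left(\vy\right)}=P^{\circ}_{\vx\rightarrow\vy}$ as a standalone step before reindexing, whereas the paper performs the equivalent manipulation inline within the stationarity equation.
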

\begin{proof}
Let $P_{\vx\rightarrow\vy}^{\circ}$ denote the probability of transitioning from $\vx$ to $\vy$ in one time step, under neutral drift. Since $\pi_{\MSS}^{\circ}$ is a stationary distribution for this chain, for each $\vx\in\left\{1,\dots ,n\right\}^{N}$ we have
\begin{align}
\pi_{\MSS}^{\circ}\left( \tau\left(\vx\right) \right) &= \mathbb{E}_{\vy\sim\pi_{\MSS}^{\circ}} \left[ P_{\vy\rightarrow \tau\left(\vx\right)}^{\circ}\right] \nonumber \\
&= \mathbb{E}_{\vy\sim\pi_{\MSS}^{\circ}} \left[ \mathbb{E}_{\left(R,\alpha\right)}^{\circ} \left[ \substack{\prod_{i\in R}\left(\left(1-u\right)\mathbb{1}_{\tau^{-1}\left(y_{\alpha\left(i\right)}\right)}\left(x_{i}\right) +\frac{u}{n}\right) \vspace{0.2cm} \\ \times \prod_{i\not\in R}\mathbb{1}_{\tau^{-1}\left(y_{i}\right)}\left(x_{i}\right)} \right] \right] \nonumber \\
&= \mathbb{E}_{\vy\sim\pi_{\MSS}^{\circ}} \left[ \mathbb{E}_{\left(R,\alpha\right)}^{\circ} \left[ \substack{\prod_{i\in R}\left(\left(1-u\right)\mathbb{1}_{\tau^{-1}\left(\vy\right)_{\alpha\left(i\right)}}\left(x_{i}\right) +\frac{u}{n}\right) \vspace{0.2cm} \\ \times \prod_{i\not\in R}\mathbb{1}_{\tau^{-1}\left(\vy\right)_{i}}\left(x_{i}\right)} \right] \right] \nonumber \\
&= \mathbb{E}_{\tau\left(\vy\right)\sim\pi_{\MSS}^{\circ}} \left[ \mathbb{E}_{\left(R,\alpha\right)}^{\circ} \left[ \substack{\prod_{i\in R}\left(\left(1-u\right)\mathbb{1}_{y_{\alpha\left(i\right)}}\left(x_{i}\right) +\frac{u}{n}\right) \vspace{0.2cm} \\ \times \prod_{i\not\in R}\mathbb{1}_{y_{i}}\left(x_{i}\right)} \right] \right] \nonumber \\
&= \mathbb{E}_{\tau\left(\vy\right)\sim\pi_{\MSS}^{\circ}} \left[ P_{\vy\rightarrow\vx}^{\circ}\right] .
\end{align}
Thus, $\pi_{\MSS}^{\circ}\circ\tau$ is also a stationary distribution, so it must be equal to $\pi_{\MSS}^{\circ}$ by uniqueness.
\end{proof}

In our analysis, we need to know the joint strategy distribution for up to three distinct individuals, $i$, $j$, and $k$, in the neutral MSS distribution. By \lem{symmetry}, we can reduce this problem to whether pairs of individuals have the same strategy or different strategies. For example, let $\phi_{ij}$ be the probability that $i$ and $j$ have the same type in the neutral MSS distribution. The probability that $i$ and $j$ both have type $x\in\left\{1,\dots ,n\right\}$ is then $\phi_{ij}/n$. The probability that $i$ has type $x$ and $j$ has type $y\neq x$ is $\left(1-\phi_{ij}\right) /\left(n\left(n-1\right)\right)$. Thus, the joint distribution of $i$ and $j$ can be characterized in terms just $\phi_{ij}$. When $u=1$, strategies are chosen independently and uniformly-at-random from $\left\{1,\dots ,n\right\}$, so $\phi_{ij}=1/n$ for all $i,j\in\left\{1,\dots ,N\right\}$ with $i\neq j$. When $u\in\left(0,1\right)$, we have:
\begin{proposition}\label{prop:phi_ij}
The terms $\left\{\phi_{ij}\right\}_{i,j=1}^{N}$ are uniquely defined by $\phi_{ij}=1$ when $i=j$ and, when $i\neq j$,
\begin{align}
\phi_{ij} &= \underbrace{u^{2}\frac{1}{n} \sum_{\substack{\left(R,\alpha\right) \\ \left|\left\{i,j\right\}\cap R\right| =2}} p_{\left(R,\alpha\right)}^{\circ}}_{2\textrm{ mutations into }\left\{i,j\right\}} + \underbrace{u\frac{1}{n}\sum_{\substack{\left(R,\alpha\right) \\ \left|\left\{i,j\right\}\cap R\right| =1}} p_{\left(R,\alpha\right)}^{\circ} + 2u\left(1-u\right)\frac{1}{n}\sum_{\substack{\left(R,\alpha\right) \\ \left|\left\{i,j\right\}\cap R\right| =2}} p_{\left(R,\alpha\right)}^{\circ}}_{1\textrm{ mutation into }\left\{i,j\right\}} \nonumber \\[0.5cm]
&\qquad + \underbrace{\sum_{\left(R,\alpha\right)} p_{\left(R,\alpha\right)}^{\circ} \left(1-u\right)^{\left|\left\{i,j\right\}\cap R\right|} \phi_{\widetilde{\alpha}\left(i\right)\widetilde{\alpha}\left(j\right)}}_{0\textrm{ mutations into }\left\{i,j\right\}} . \label{eq:phi_ij_recurrence}
\end{align}
\end{proposition}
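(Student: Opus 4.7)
The strategy is to derive the recurrence by one-step stationarity of $\pi_{\MSS}^{\circ}$, conditioning on the replacement event and then stratifying by the number of mutations that land on the two tracked sites. By stationarity, $\phi_{ij}=\mathbb{P}_{\MSS}^{\circ}[X_{i}^{t+1}=X_{j}^{t+1}]$, so I would compute this one-step-ahead probability by first conditioning on $(R,\alpha)$ (drawn under the neutral rule $p_{(R,\alpha)}^{\circ}$) and on the configuration $\vx=\vX^{t}\sim\pi_{\MSS}^{\circ}$. For $i=j$ the identity $\phi_{ii}=1$ is immediate, so the content is the case $i\neq j$.

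Next I would write out the one-step evolution at sites $i$ and $j$ by cases on $|\{i,j\}\cap R|\in\{0,1,2\}$. If $s\in\{i,j\}$ survives (i.e.\ $s\notin R$) it keeps strategy $x_{s}=x_{\widetilde{\alpha}(s)}$; if $s\in R$, then with probability $1-u$ it inherits $x_{\alpha(s)}=x_{\widetilde{\alpha}(s)}$, and with probability $u$ it mutates to a strategy chosen uniformly from $\{1,\dots,n\}$. I would group the resulting outcomes by the number of mutations that occur among the at most $|\{i,j\}\cap R|$ offspring placed at $i$ and $j$. For the \emph{two-mutation} contribution (which requires $|\{i,j\}\cap R|=2$), both new strategies are independent uniform draws, matching with probability $1/n$, giving the first underbraced term. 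For the \emph{one-mutation} contribution, the one fresh uniform draw matches whatever strategy is at the other site with probability $1/n$ regardless of what that strategy is; this handles the sub-case of one site in $R$ (probability $u$) and the sub-case of both sites in $R$ with exactly one mutating (probability $2u(1-u)$), reproducing the middle underbraced term. Here I would invoke \lem{symmetry} (permutation-invariance of $\pi_{\MSS}^{\circ}$) to justify that the matching probability against a uniform mutant is $1/n$ regardless of the marginal distribution at the other site. For the \emph{zero-mutation} contribution, both sites carry the parental strategies with total probability $(1-u)^{|\{i,j\}\cap R|}$, so the conditional match probability is exactly $\mathbb{P}_{\MSS}^{\circ}[X_{\widetilde{\alpha}(i)}=X_{\widetilde{\alpha}(j)}]=\phi_{\widetilde{\alpha}(i)\widetilde{\alpha}(j)}$, where the convention $\phi_{kk}=1$ correctly handles the coalescent case $\widetilde{\alpha}(i)=\widetilde{\alpha}(j)$.

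Summing the three contributions over all replacement events $(R,\alpha)$ weighted by $p_{(R,\alpha)}^{\circ}$ yields exactly \eq{phi_ij_recurrence}. For uniqueness I would argue as follows: the recurrence above is a linear system in the unknowns $\{\phi_{ij}\}_{i\neq j}$ of the form $\phi=b+T\phi$, where the linear operator $T$ has entries summing, for each pair $(i,j)$, to $\sum_{(R,\alpha)}p_{(R,\alpha)}^{\circ}(1-u)^{|\{i,j\}\cap R|}\leqslant 1$. Because the fixation axiom ensures there is positive probability that at least one of $i,j$ is replaced on some finite composition of replacement events, strict inequality holds after enough iterations, so some power $T^{m}$ has row sums strictly less than $1$; this makes $I-T$ invertible and pins down a unique solution. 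Alternatively, uniqueness follows at once from uniqueness of the neutral MSS distribution, since $\phi_{ij}$ is by definition an expectation under $\pi_{\MSS}^{\circ}$.

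The main obstacle I anticipate is the bookkeeping: ensuring the cases stratified by the number of mutations into $\{i,j\}$ are disjoint and exhaustive, that the $1/n$ factor is correctly applied in each mutation scenario via \lem{symmetry}, and that the $\phi_{\widetilde{\alpha}(i)\widetilde{\alpha}(j)}$ term cleanly absorbs the degenerate coalescent case where $\widetilde{\alpha}(i)=\widetilde{\alpha}(j)$ by virtue of $\phi_{kk}=1$. Everything else is routine.
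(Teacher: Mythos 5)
Your proposal is correct and follows essentially the same route as the paper's proof: a one-step stationarity analysis conditioned on the replacement event, with outcomes stratified by the number of mutations landing on $\left\{i,j\right\}$ (the paper first writes the full one-step equation by cases on $\left|\left\{i,j\right\}\cap R\right|$ and then regroups by mutation count, which is the same computation). Your explicit uniqueness argument via the eventually-strictly-substochastic operator is a welcome addition that the paper leaves implicit; the only quibble is that the $1/n$ matching probability against a uniform mutant is immediate from independence of the mutant draw and does not actually require \lem{symmetry}.
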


\begin{proof}
By definition, $\phi_{ii}=1$. For $i\neq j$, a one-step analysis of the neutral process gives
\begin{align}
\phi_{ij} &= \mathbb{P}^{\circ}\left[ \left|\left\{i,j\right\}\cap R\right| =0 \right] \phi_{ij} + \sum_{\substack{\left(R,\alpha\right) \\ \left\{i,j\right\}\cap R=\left\{i\right\}}} p_{\left(R,\alpha\right)}^{\circ} \Bigg[ \substack{\phi_{\alpha\left(i\right) j} \left(1-u+u\frac{1}{n}\right) \\ +\left(1-\phi_{\alpha\left(i\right) j}\right)\left(u\frac{1}{n}\right)} \Bigg] \nonumber \\
&\qquad + \sum_{\substack{\left(R,\alpha\right) \\ \left\{i,j\right\}\cap R=\left\{j\right\}}} p_{\left(R,\alpha\right)}^{\circ} \Bigg[ \substack{\phi_{i\alpha\left(j\right)} \left(1-u+u\frac{1}{n}\right) \\ +\left(1-\phi_{i\alpha\left(j\right)}\right)\left(u\frac{1}{n}\right)} \Bigg] \nonumber \\
&\qquad + \sum_{\substack{\left(R,\alpha\right) \\ \left|\left\{i,j\right\}\cap R\right| =2}} p_{\left(R,\alpha\right)}^{\circ} \Bigg[ \substack{\phi_{\alpha\left(i\right)\alpha\left(j\right)} \left(\left(1-u\right)^{2}+2u\left(1-u\right)\frac{1}{n} + u^{2}\frac{1}{n}\right) \\ +\left(1-\phi_{\alpha\left(i\right)\alpha\left(j\right)}\right) \left(2u\left(1-u\right)\frac{1}{n}+u^{2}\frac{1}{n}\right)} \Bigg] . \label{eq:phi_ij_oneStep}
\end{align}
To get \eq{phi_ij_recurrence}, we decompose \eq{phi_ij_oneStep} based on the number of mutations leading to each term. More specifically, since the number of individuals in $\left\{i,j\right\}$ replaced in one step of the process is between zero and two (inclusive), we consider at most two mutations arising in $\left\{i,j\right\}$.

The portion of \eq{phi_ij_oneStep} associated to exactly two mutations into $\left\{i,j\right\}$ is
\begin{align}
\sum_{\substack{\left(R,\alpha\right) \\ \left|\left\{i,j\right\}\cap R\right| =2}} p_{\left(R,\alpha\right)}^{\circ} \Bigg[ \substack{\phi_{\alpha\left(i\right)\alpha\left(j\right)} \left(u^{2}\frac{1}{n}\right) \\ +\left(1-\phi_{\alpha\left(i\right)\alpha\left(j\right)}\right) \left(u^{2}\frac{1}{n}\right)} \Bigg] &= u^{2}\frac{1}{n}\sum_{\substack{\left(R,\alpha\right) \\ \left|\left\{i,j\right\}\cap R\right| =2}} p_{\left(R,\alpha\right)}^{\circ} .
\end{align}
If both $i$ and $j$ are replaced, then mutations occur in both locations with probability $u^{2}$. Since the type of a mutation is chosen uniformly at random from the two available strategies, the probability that $i$ and $j$ are identical by state after both mutations is $1/n$.

The portion of \eq{phi_ij_oneStep} associated to exactly one mutation into $\left\{i,j\right\}$ is
\begin{align}
\sum_{\substack{\left(R,\alpha\right) \left\{i,j\right\}\cap R=\left\{i\right\}}} &p_{\left(R,\alpha\right)}^{\circ} \Bigg[ \substack{\phi_{\alpha\left(i\right) j} \left(u\frac{1}{n}\right) \\ +\left(1-\phi_{\alpha\left(i\right) j}\right)\left(u\frac{1}{n}\right)} \Bigg] + \sum_{\substack{\left(R,\alpha\right) \\ \left\{i,j\right\}\cap R=\left\{j\right\}}} p_{\left(R,\alpha\right)}^{\circ} \Bigg[ \substack{\phi_{i\alpha\left(j\right)} \left(u\frac{1}{n}\right) \\ +\left(1-\phi_{i\alpha\left(j\right)}\right)\left(u\frac{1}{n}\right)} \Bigg] \nonumber \\
&\qquad + \sum_{\substack{\left(R,\alpha\right) \\ \left|\left\{i,j\right\}\cap R\right| =2}} p_{\left(R,\alpha\right)}^{\circ} \Bigg[ \substack{\phi_{\alpha\left(i\right)\alpha\left(j\right)} \left(2u\left(1-u\right)\frac{1}{n}\right) \\ +\left(1-\phi_{\alpha\left(i\right)\alpha\left(j\right)}\right) \left(2u\left(1-u\right)\frac{1}{n}\right)} \Bigg] \nonumber \\
&= u\frac{1}{n}\sum_{\substack{\left(R,\alpha\right) \\ \left|\left\{i,j\right\}\cap R\right| =1}} p_{\left(R,\alpha\right)}^{\circ} + 2u\left(1-u\right)\frac{1}{n}\sum_{\substack{\left(R,\alpha\right) \\ \left|\left\{i,j\right\}\cap R\right| =2}} p_{\left(R,\alpha\right)}^{\circ} .
\end{align}
If exactly one of $i$ and $j$ is replaced, then there can be at most one mutation into the set $\left\{i,j\right\}$. The probability of one mutation is $u$, in which case the mutated type matches the type of the unreplaced individual with probability $1/n$. If both of $i$ and $j$ are replaced, then the probability that a mutation occurs at exactly one of $i$ and $j$ is $2u\left(1-u\right)$. The probability that the mutated type is that of the inherited type (at the other location) is $1/n$.

Finally, the portion of \eq{phi_ij_oneStep} associated to zero mutations into $\left\{i,j\right\}$ is
\begin{align}
\sum_{\left(R,\alpha\right)} p_{\left(R,\alpha\right)}^{\circ} \left(1-u\right)^{\left|\left\{i,j\right\}\cap R\right|} \phi_{\widetilde{\alpha}\left(i\right)\widetilde{\alpha}\left(j\right)} .
\end{align}
If neither $i$ nor $j$ is replaced, then no mutations into $\left\{i,j\right\}$ are possible, so $i$ and $j$ are identical by state in the next step if and only if they are identical by state in the previous step. If exactly one of $i$ and $j$ is replaced, then the probability of zero mutations into $\left\{i,j\right\}$ is $1-u$. In this case, if $i$ (resp. $j$) is replaced, then $i$ and $j$ are identical by state if and only if $\alpha\left(i\right)$ and $j$ (resp. $i$ and $\alpha\left(j\right)$) are identical by state. And if both of $i$ and $j$ are replaced, then the probability of zero mutations into $\left\{i,j\right\}$ is $\left(1-u\right)^{2}$, and $i$ and $j$ are identical by state if and only if $\alpha\left(i\right)$ and $\alpha\left(j\right)$ are identical by state.

Collecting all of these terms then gives \eq{phi_ij_recurrence}, as desired.
\end{proof}
Therefore, we can determine $\left\{\phi_{ij}\right\}_{i,j=1}^{N}$ by solving a linear system with $O\left(N^{2}\right)$ terms.

For $i\neq j$, there are just two possibilities: $i$ and $j$ have the same strategy or they do not. However, if there are $n\geqslant 3$ strategies, then for $i,j,k\in\left\{1,\dots ,N\right\}$ with $i\neq j\neq k\neq i$, there are five equivalence classes: \emph{(i)} $x_{i}=x_{j}=x_{k}$, \emph{(ii)} $x_{i}=x_{j}\neq x_{k}$, \emph{(iii)} $x_{i}=x_{k}\neq x_{j}$, \emph{(iv)} $x_{i}\neq x_{j}=x_{k}$, and \emph{(v)} $x_{i}\neq x_{j}\neq x_{k}\neq x_{i}$. Using the subscript $L\in\left\{1,2,3,4,5\right\}$ to denote the equivalence class, let $\psi_{ijk}^{\left(L\right)}$ be the probability that $\left(X_{i},X_{j},X_{k}\right)$ belongs to class $\mathcal{C}^{\left(L\right)}$ in the neutral MSS distribution,
\begin{align}
\psi_{ijk}^{\left(L\right)} &= \sum_{\left(x_{i},x_{j},x_{k}\right)\in\mathcal{C}^{\left(L\right)}} \mathbb{P}_{\MSS}^{\circ}\left[ X_{i}=x_{i}, X_{j}=x_{j}, X_{k}=x_{k} \right] .
\end{align}
The size of each of these equivalence classes are as follows:
\begin{subequations}
\begin{align}
\left|\mathcal{C}^{\left(1\right)}\right| &= n ; \\
\left|\mathcal{C}^{\left(2\right)}\right| = \left|\mathcal{C}^{\left(3\right)}\right| = \left|\mathcal{C}^{\left(4\right)}\right| &= n\left(n-1\right) ; \\
\left|\mathcal{C}^{\left(5\right)}\right| &= n\left(n-1\right)\left(n-2\right) .
\end{align}
\end{subequations}
For example, if $x\neq y\neq z\neq x$, then the probability that $i$, $j$, and $k$ have strategies $x$, $y$, and $z$ in the neutral MSS distribution is $\psi_{ijk}^{\left(5\right)}/\left(n\left(n-1\right)\left(n-2\right)\right)$. The joint distribution of $i$, $j$, and $k$ can thus be described by at most five quantities, $\left\{\psi_{ijk}^{\left(L\right)}\right\}_{L=1}^{5}$. In fact, we can reduce it to just $\psi_{ijk}^{\left(1\right)}$:
\begin{lemma}\label{lem:psi_ijk}
For $L=2,3,4,5$, we can express $\psi_{ijk}^{\left(L\right)}$ in terms of $\psi_{ijk}^{\left(1\right)}$ as follows:
\begin{subequations}\label{eq:psi_ijk}
\begin{align}
\psi_{ijk}^{\left(2\right)} &= \phi_{ij}-\psi_{ijk}^{\left(1\right)} ; \\
\psi_{ijk}^{\left(3\right)} &= \phi_{ik}-\psi_{ijk}^{\left(1\right)} ; \\
\psi_{ijk}^{\left(4\right)} &= \phi_{jk}-\psi_{ijk}^{\left(1\right)} ; \\
\psi_{ijk}^{\left(5\right)} &= 1+2\psi_{ijk}^{\left(1\right)}-\phi_{ij}-\phi_{ik}-\phi_{jk} .
\end{align}
\end{subequations}
\end{lemma}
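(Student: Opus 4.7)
The plan is to derive the four identities in \eq{psi_ijk} by elementary probability bookkeeping, using that $\left\{\mathcal{C}^{\left(L\right)}\right\}_{L=1}^{5}$ partitions the sample space of $\left(X_{i},X_{j},X_{k}\right)$ for pairwise distinct $i,j,k$. No additional appeal to \lem{symmetry} is needed beyond what already justifies summarizing the joint distribution through these five class probabilities in the first place; the work here is purely set-theoretic.

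First, I would derive the formula for $\psi_{ijk}^{\left(2\right)}$ by decomposing the event $\left\{X_{i}=X_{j}\right\}$ according to whether $X_{k}$ matches the common value of $X_{i}$ and $X_{j}$. Under the neutral MSS distribution, this event splits into two disjoint sub-events corresponding exactly to $\mathcal{C}^{\left(1\right)}$ (all three equal) and $\mathcal{C}^{\left(2\right)}$ ($X_{i}=X_{j}\neq X_{k}$), so that $\phi_{ij}=\psi_{ijk}^{\left(1\right)}+\psi_{ijk}^{\left(2\right)}$ and therefore $\psi_{ijk}^{\left(2\right)}=\phi_{ij}-\psi_{ijk}^{\left(1\right)}$. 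Identical reasoning applied to the event $\left\{X_{i}=X_{k}\right\}$ (splitting on whether $X_{j}$ matches) and to $\left\{X_{j}=X_{k}\right\}$ (splitting on whether $X_{i}$ matches) yields the formulas for $\psi_{ijk}^{\left(3\right)}$ and $\psi_{ijk}^{\left(4\right)}$.

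Finally, the formula for $\psi_{ijk}^{\left(5\right)}$ will follow from normalization. Since the five classes cover all possible assignments to $\left(X_{i},X_{j},X_{k}\right)$,
\begin{align}
\sum_{L=1}^{5}\psi_{ijk}^{\left(L\right)} &= 1 .
\end{align}
Substituting the three expressions just derived for $\psi_{ijk}^{\left(2\right)}$, $\psi_{ijk}^{\left(3\right)}$, and $\psi_{ijk}^{\left(4\right)}$ and solving for $\psi_{ijk}^{\left(5\right)}$ gives $\psi_{ijk}^{\left(5\right)}=1+2\psi_{ijk}^{\left(1\right)}-\phi_{ij}-\phi_{ik}-\phi_{jk}$, as claimed.

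There is no real obstacle: the entire proof reduces to identifying disjoint events and invoking the law of total probability. The only substantive content in the surrounding setup is that, via \lem{symmetry}, the full $n^{3}$-way joint distribution of $\left(X_{i},X_{j},X_{k}\right)$ compresses into just these five class probabilities --- and once that reduction is in place, the four identities above link them together, leaving $\psi_{ijk}^{\left(1\right)}$ as the single remaining quantity that must be determined by a separate recurrence (analogous to \prop{phi_ij}).
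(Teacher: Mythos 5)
Your proposal is correct and matches the paper's own proof: the first three identities follow by decomposing the pairwise-identity events into the relevant equivalence classes, and the fourth follows from the normalization $\sum_{L=1}^{5}\psi_{ijk}^{\left(L\right)}=1$. Nothing further is needed.
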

\begin{proof}[Proof (sketch)]
The first three equations are true more-or-less by definition, e.g.
\begin{align}
\psi_{ijk}^{\left(2\right)} &= \sum_{x=1}^{n} \sum_{y\neq x} \mathbb{P}_{\MSS}^{\circ}\left[ X_{i}=X_{j}=x, X_{k}=y \right] \nonumber \\
&= \sum_{x=1}^{n} \mathbb{P}_{\MSS}^{\circ}\left[ X_{i}=X_{j}=x, X_{k}\in\left\{1,\dots ,n\right\}\right] - \sum_{x=1}^{n} \mathbb{P}_{\MSS}^{\circ}\left[ X_{i}=X_{j}=X_{k}=x \right] \nonumber \\
&= \sum_{x=1}^{n} \mathbb{P}_{\MSS}^{\circ}\left[ X_{i}=X_{j}=x\right] - \sum_{x=1}^{n} \mathbb{P}_{\MSS}^{\circ}\left[ X_{i}=X_{j}=X_{k}=x \right] \nonumber \\
&= \phi_{ij}-\psi_{ijk}^{\left(1\right)} .
\end{align}
The last equation can be derived from the first three by noting that
\begin{align}
\psi_{ijk}^{\left(1\right)} + \psi_{ijk}^{\left(2\right)} + \psi_{ijk}^{\left(3\right)} + \psi_{ijk}^{\left(4\right)} + \psi_{ijk}^{\left(5\right)} &= 1 ,
\end{align}
which gives the desired expressions for $\psi_{ijk}^{\left(L\right)}$ ($L=2,3,4,5$) in terms of $\psi_{ijk}^{\left(1\right)}$.
\end{proof}

To simplify notation, we let $\phi_{ijk}\coloneqq\psi_{ijk}^{\left(1\right)}$. If $n=2$, then $\phi_{ijk}=\left(\phi_{ij}+\phi_{ik}+\phi_{jk}-1\right) /2$ \citep[][Equation 103]{allen:JMB:2019}, so calculating $\left\{\phi_{ijk}\right\}_{i,j,k=1}^{N}$ is trivial once we know $\left\{\phi_{ij}\right\}_{i,j=1}^{N}$. However, no such formula can hold generically when there are $n\geqslant 3$ strategies, as the following lemma shows:
\begin{lemma}\label{lem:no_triplet_to_pair}
For $n\geqslant 3$, there does not exist an expression of the form
\begin{align}
\phi_{ijk} &= c_{0}\left(n\right) +c_{1}\left(n\right)\phi_{ij}+c_{2}\left(n\right)\phi_{ik}+c_{3}\left(n\right)\phi_{jk}
\end{align}
that holds generally for every $i,j,k\in\left\{1,\dots ,N\right\}$.
\end{lemma}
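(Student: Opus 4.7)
The plan is to work in a simple symmetric population, show the proposed identity collapses there to a one-variable linear relation, and then rule out that linear relation by direct computation at three values of $u$.

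First I would take the well-mixed population of size $N$ under a Moran-style replacement rule in which, at each step, a uniformly chosen site is replaced by the offspring (mutated with probability $u$) of a uniformly chosen parent. The neutral MSS distribution is invariant under every permutation of the $N$ sites, so every pair of distinct sites shares a single IBS probability $\phi$ and every triple of distinct sites shares a single triple-IBS probability $\phi_{3}$. The conjectured identity then collapses to
\begin{align*}
\phi_{3} &= c_{0}(n) + \bigl(c_{1}(n)+c_{2}(n)+c_{3}(n)\bigr)\phi ,
\end{align*}
so, for fixed $n$ and $N$, the pair $(\phi,\phi_{3})$ would have to lie on a single straight line as $u$ varies.

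Second I would compute $\phi$ and $\phi_{3}$ in closed form. The pairwise recurrence of Proposition 1 specializes to $\phi = (uN + n(1-u))/(n(1+u(N-1)))$. An analogous one-step decomposition for the triple---partitioned by whether the dying site lies in $\{i,j,k\}$ and, if so, by whether the uniformly chosen parent falls inside or outside $\{i,j,k\}$---yields $\phi_{3} = \phi \cdot (2n+u(N-2n)) / (n(2+u(N-2)))$.

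Finally I would verify non-collinearity. Taking $n=3$ and $N=4$, the three mutation rates $u=1$, $u=1/2$, and $u=1/3$ give $(\phi,\phi_{3}) = (1/3,1/9), (7/15,7/27), (5/9,10/27)$, whose consecutive secant slopes $10/9$ and $5/4$ disagree; no single linear relation $\phi_{3} = c_{0}+C\phi$ can accommodate all three points, so the conjectured identity fails. The same construction gives a counterexample for every $n \ge 3$. The main technical obstacle is deriving the triple recurrence, which requires the extra case split on whether the uniformly chosen parent of the replaced tracked site itself belongs to $\{i,j,k\}$; once that is written down, the closed forms and the collinearity check reduce to routine algebra.
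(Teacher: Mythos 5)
Your proof is correct, and it takes a genuinely different route from the paper's. The paper never computes any identity probabilities: it observes that the putative identity must hold for \emph{degenerate} index triples, so setting $i=j=k$ and $i\neq j=k$ forces $c_{1}(n)=c_{2}(n)=c_{3}(n)=1/2$ and $c_{0}(n)=-1/2$ by symmetry, whereupon Lemma~\ref{lem:psi_ijk} gives $\psi_{ijk}^{(5)}=1+2\phi_{ijk}-\phi_{ij}-\phi_{ik}-\phi_{jk}=0$, contradicting the generically positive probability that three sites carry three distinct strategies when $n\geqslant 3$. You instead fix a concrete fully exchangeable process and vary $u$; your closed forms $\phi=(uN+n(1-u))/(n(1+u(N-1)))$ and $\phi_{3}=\phi\,(uN+2n(1-u))/(n(2+u(N-2)))$ are correct for the Moran rule with the parent drawn uniformly from all $N$ sites (I checked them against Proposition~\ref{prop:phi_ij} and the Corollary for single-replacement rules, including the limits $u\to 0,1$), and the three points $(1/3,1/9)$, $(7/15,7/27)$, $(5/9,10/27)$ are indeed non-collinear. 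Each approach buys something the other does not. The paper's argument is computation-free and rules out even coefficients that depend on the process (hence on $u$ and the structure), because the contradiction is reached within a single fixed process; but it leans on requiring validity at coincident indices and on an unexhibited ``generic'' triple with $\psi^{(5)}_{ijk}>0$. Your argument supplies a fully explicit numerical counterexample, but it only excludes coefficients that are uniform in $u$: in your exchangeable setting, for any \emph{fixed} $u$ the single equation $\phi_{3}=c_{0}+C\phi$ is trivially solvable on distinct triples, so varying $u$ (or the structure) is essential to your method, and your reading of ``holds generally'' as quantifying over processes is doing real work. Two small items to tighten: verify the non-collinearity for general $n\geqslant 3$ rather than asserting it (with $N=4$ one can check that $\phi_{3}$ is a genuinely quadratic, non-affine rational function of $\phi$ for every $n\geqslant 3$, so this is routine), and state explicitly that the parent in your Moran rule may coincide with the dying site, since excluding self-replacement changes both closed forms.
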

\begin{proof}
Suppose that $\phi_{ijk}=c_{0}\left(n\right) +c_{1}\left(n\right)\phi_{ij}+c_{2}\left(n\right)\phi_{ik}+c_{3}\left(n\right)\phi_{jk}$. By symmetry, it must be true that $c_{1}\left(n\right) =c_{2}\left(n\right) =c_{3}\left(n\right)$. Moreover, since this equation must hold for every $i,j,k\in\left\{1,\dots ,N\right\}$, letting $i=j=k$ gives $c_{0}\left(n\right) +3c_{1}\left(n\right) =1$. If $i\neq j=k$, then $c_{0}\left(n\right) +c_{1}\left(n\right)\left(2\phi_{ij}+1\right) =\phi_{ij}$. Solving this system of equations yields $c_{0}\left(n\right) =-1/2$ and $c_{1}\left(n\right) =1/2$. By \eq{psi_ijk}, we then see that $\psi_{ijk}^{\left(5\right)}=0$, which is true in general only for $n=2$ strategies. For $n\geqslant 3$, there is (generically) a nonzero probability that $X_{i}\neq X_{j}\neq X_{k}\neq X_{i}$.
\end{proof}

Therefore, unlike the case $n=2$, here we see no way to reduce $\left\{\phi_{ijk}\right\}_{i,j,k=1}^{N}$ to $\left\{\phi_{ij}\right\}_{i,j=1}^{N}$ when $n\geqslant 3$. Instead, $\left\{\phi_{ijk}\right\}_{i,j,k=1}^{N}$ can be calculated by solving a recurrence relation:
\begin{proposition}\label{prop:phi_ijk}
If $i\neq j\neq k\neq i$, then $\phi_{ijk}$ satisfies
\begin{align}
\phi_{ijk} &= \underbrace{u^{3}\frac{1}{n^{2}}\sum_{\substack{\left(R,\alpha\right) \\ \left|\left\{i,j,k\right\}\cap R\right| =3}} p_{\left(R,\alpha\right)}^{\circ}}_{3\textrm{ mutations into }\left\{i,j,k\right\}} + \underbrace{u^{2}\frac{1}{n^{2}}\sum_{\substack{\left(R,\alpha\right) \\ \left|\left\{i,j,k\right\}\cap R\right| =2}} p_{\left(R,\alpha\right)}^{\circ} + 3u^{2}\left(1-u\right)\frac{1}{n^{2}}\sum_{\substack{\left(R,\alpha\right) \\ \left|\left\{i,j,k\right\}\cap R\right| =3}} p_{\left(R,\alpha\right)}^{\circ}}_{2\textrm{ mutations into }\left\{i,j,k\right\}} \nonumber \\[0.5cm]
&\qquad + u\frac{1}{n}\phi_{jk}\sum_{\substack{\left(R,\alpha\right) \\ \left\{i,j,k\right\}\cap R=\left\{i\right\}}} p_{\left(R,\alpha\right)}^{\circ} + u\frac{1}{n}\phi_{ik}\sum_{\substack{\left(R,\alpha\right) \\ \left\{i,j,k\right\}\cap R=\left\{j\right\}}} p_{\left(R,\alpha\right)}^{\circ} + u\frac{1}{n}\phi_{ij}\sum_{\substack{\left(R,\alpha\right) \\ \left\{i,j,k\right\}\cap R=\left\{k\right\}}} p_{\left(R,\alpha\right)}^{\circ} \nonumber \\
&\qquad + u\left(1-u\right)\frac{1}{n}\sum_{\substack{\left(R,\alpha\right) \\ \left\{i,j,k\right\}\cap R=\left\{j,k\right\}}} p_{\left(R,\alpha\right)}^{\circ} \left(\phi_{i\alpha\left(j\right)}+\phi_{i\alpha\left(k\right)}\right) \nonumber \\
&\qquad + u\left(1-u\right)\frac{1}{n}\sum_{\substack{\left(R,\alpha\right) \\ \left\{i,j,k\right\}\cap R=\left\{i,k\right\}}} p_{\left(R,\alpha\right)}^{\circ} \left(\phi_{\alpha\left(i\right) j}+\phi_{j\alpha\left(k\right)}\right) \nonumber \\
&\qquad + u\left(1-u\right)\frac{1}{n}\sum_{\substack{\left(R,\alpha\right) \\ \left\{i,j,k\right\}\cap R=\left\{i,j\right\}}} p_{\left(R,\alpha\right)}^{\circ} \left(\phi_{\alpha\left(i\right) k}+\phi_{\alpha\left(j\right) k}\right) \nonumber \\
&\qquad \underbrace{+\hspace{0.5ex} u\left(1-u\right)^{2}\frac{1}{n}\sum_{\substack{\left(R,\alpha\right) \\ \left|\left\{i,j,k\right\}\cap R\right| =3}} p_{\left(R,\alpha\right)}^{\circ} \left(\phi_{\alpha\left(i\right)\alpha\left(j\right)}+\phi_{\alpha\left(i\right)\alpha\left(k\right)}+\phi_{\alpha\left(j\right)\alpha\left(k\right)}\right)\hspace{3cm}}_{1\textrm{ mutation into }\left\{i,j,k\right\}} \nonumber \\[0.5cm]
&\qquad + \underbrace{\sum_{\left(R,\alpha\right)} p_{\left(R,\alpha\right)}^{\circ} \left(1-u\right)^{\left|\left\{i,j,k\right\}\cap R\right|} \phi_{\widetilde{\alpha}\left(i\right)\widetilde{\alpha}\left(j\right)\widetilde{\alpha}\left(k\right)}}_{0\textrm{ mutations into }\left\{i,j,k\right\}} . \label{eq:phi_ijk_recurrence}
\end{align}
The terms in \eq{phi_ijk_recurrence} involving fewer than three distinct individuals can be calculated using \prop{phi_ij}. Thus, \eq{phi_ijk_recurrence} together with \prop{phi_ij} uniquely define $\left\{\phi_{ijk}\right\}_{i,j,k=1}^{N}$.
\end{proposition}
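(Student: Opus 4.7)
The strategy of proof mirrors that of \prop{phi_ij}: perform a one-step analysis of the neutral process, condition on the replacement event $(R,\alpha)$, and then regroup the contributions by the number of mutations occurring among $\{i,j,k\}$ in that step. By stationarity of $\pi^\circ_{\MSS}$, $\phi_{ijk}=\mathbb{E}^\circ\bigl[\mathbb{P}(X_i^{t+1}=X_j^{t+1}=X_k^{t+1}\mid \vX^t,(R,\alpha))\bigr]$. For $\ell\in\{i,j,k\}\cap R$, the offspring at $\ell$ either copies $X_{\alpha(\ell)}^t$ (probability $1-u$) or takes a uniform strategy in $\{1,\dots,n\}$ (probability $u$) independently of everything else; for $\ell\notin R$, $X_\ell^{t+1}=X_\ell^t$. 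The proof then proceeds by enumerating $|\{i,j,k\}\cap R|\in\{0,1,2,3\}$ and the binomially distributed sub-case specifying which of the replaced positions mutate.

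Conditional on a mutation pattern with $r$ mutations among the three positions, the event that $X_i^{t+1}, X_j^{t+1}, X_k^{t+1}$ all agree factors cleanly because each mutant independently equals any fixed anchor strategy with probability $1/n$. Explicitly: for $r=3$, $\mathbb{P}(\text{agree})=n\cdot(1/n)^3=1/n^2$; for $r=2$, $\mathbb{P}(\text{agree})=1/n^2$ independently of the unique anchor's value; for $r=1$, $\mathbb{P}(\text{agree})=(1/n)\cdot\phi_{ab}$, where $a,b$ are the ancestral labels $\widetilde\alpha(\cdot)$ of the two non-mutating positions; and for $r=0$, $\mathbb{P}(\text{agree})=\phi_{\widetilde\alpha(i)\widetilde\alpha(j)\widetilde\alpha(k)}$, with the convention that triple indices with repetitions collapse to the corresponding pair or singleton. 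Here \lem{symmetry} is invoked to convert joint probabilities under $\pi^\circ_{\MSS}$ into the quantities $\phi_{\cdot\cdot}$ and $\phi_{\cdot\cdot\cdot}$.

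Multiplying each sub-case by its binomial weight $u^r(1-u)^{|\{i,j,k\}\cap R|-r}$ times the number of patterns producing $r$ mutations, and summing against $p^\circ_{(R,\alpha)}$, I then sort the result into the four groups appearing in \eq{phi_ijk_recurrence}. The three-mutation group arises only from $|\{i,j,k\}\cap R|=3$ with all three mutating. The two-mutation group combines $|\{i,j,k\}\cap R|=2$ with both replaced positions mutating and $|\{i,j,k\}\cap R|=3$ with exactly two mutating (binomial factor $3u^2(1-u)$). The one-mutation group splits by which member of $\{i,j,k\}$ carries the single mutation, yielding the three terms involving $\phi_{jk},\phi_{ik},\phi_{ij}$ when that member is the sole element of $\{i,j,k\}\cap R$, plus the $\phi_{\cdot\widetilde\alpha(\cdot)}$ terms arising from $|\{i,j,k\}\cap R|\in\{2,3\}$ with a single mutation (binomial factors $u(1-u)$ and $3u(1-u)^2$, respectively, distributed over the three choices of the mutating site). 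The zero-mutation group collapses into the compact sum $\sum_{(R,\alpha)} p^\circ_{(R,\alpha)} (1-u)^{|\{i,j,k\}\cap R|}\phi_{\widetilde\alpha(i)\widetilde\alpha(j)\widetilde\alpha(k)}$.

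Existence of $\{\phi_{ijk}\}$ as a solution is automatic from the existence of $\pi^\circ_{\MSS}$. For uniqueness, I treat \eq{phi_ijk_recurrence} as a linear system on the vector indexed by triples of distinct sites, with the $\phi_{\cdot\cdot}$ values (supplied by \prop{phi_ij}) and the $1/n^2$ constants as inhomogeneities; the operator coming from the zero-mutation term has spectral radius strictly below $1$ when $u>0$, because in the backward ancestral process some mutation almost surely arrives on at least one of the three lineages after finitely many steps. The only real obstacle is bookkeeping: the case tree on $|\{i,j,k\}\cap R|$ and mutation count has ten cells, each requiring careful tracking of which subset of $\{i,j,k\}$ is mutating and which ancestral labels appear, but no conceptual ingredient beyond those used in the pairwise case is needed.
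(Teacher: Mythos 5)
Your proof is correct and has the same skeleton as the paper's: a one-step stationarity analysis conditioned on the replacement event $\left(R,\alpha\right)$, followed by regrouping the contributions according to the number of mutations landing in $\left\{i,j,k\right\}$. The genuine difference is in the intermediate bookkeeping. The paper first expands the one-step equation in the basis of the five equivalence classes $\psi^{\left(L\right)}$ of the ancestors' configuration and then invokes \lem{psi_ijk} to collapse sums such as $\psi^{\left(1\right)}+\psi^{\left(4\right)}=\phi_{jk}$; you instead compute the agreement probability directly conditional on the mutation pattern ($1/n^{2}$ for two or three mutants, $\left(1/n\right)\phi_{ab}$ for one mutant with $a,b$ the ancestral labels of the non-mutating positions, $\phi_{\widetilde{\alpha}\left(i\right)\widetilde{\alpha}\left(j\right)\widetilde{\alpha}\left(k\right)}$ for none). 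The two computations agree precisely because mutation is parent-independent and uniform, but your route never needs the five-class decomposition, so it is somewhat leaner; I checked the binomial weights ($3u^{2}\left(1-u\right)/n^{2}$ from three replacements with two mutations, and the single-mutation weight distributed over the three choices of mutating site) and they reproduce \eq{phi_ijk_recurrence} exactly. You also add something the paper omits: an actual uniqueness argument, via sub-stochasticity of the zero-mutation operator and the fact that the fixation axiom forces every lineage to be replaced, hence to mutate, with positive probability, so that operator has spectral radius below one. That step is stated tersely but is the right argument, and it is a useful supplement since the paper asserts uniqueness without proof.
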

\begin{proof}
For $i\neq j\neq k\neq i$, a one-step analysis of the neutral process gives
\begin{align}
\phi_{ijk} &= \mathbb{P}^{\circ}\left[\left|\left\{i,j,k\right\}\cap R\right| =0\right] \phi_{ijk} + \sum_{\substack{\left(R,\alpha\right) \\ \left\{i,j,k\right\}\cap R=\left\{i\right\}}} p_{\left(R,\alpha\right)}^{\circ} \left[\substack{\psi_{\alpha\left(i\right) jk}^{\left(1\right)}\left(1-u+u\frac{1}{n}\right) \\ +\psi_{\alpha\left(i\right) jk}^{\left(4\right)}\left(u\frac{1}{n}\right)}\right] \nonumber \\
&\qquad + \sum_{\substack{\left(R,\alpha\right) \\ \left\{i,j,k\right\}\cap R=\left\{j\right\}}} p_{\left(R,\alpha\right)}^{\circ} \left[\substack{\psi_{i\alpha\left(j\right) k}^{\left(1\right)}\left(1-u+u\frac{1}{n}\right) \\ +\psi_{i\alpha\left(j\right) k}^{\left(3\right)}\left(u\frac{1}{n}\right)}\right] + \sum_{\substack{\left(R,\alpha\right) \\ \left\{i,j,k\right\}\cap R=\left\{k\right\}}} p_{\left(R,\alpha\right)}^{\circ} \left[\substack{\psi_{ij\alpha\left(k\right)}^{\left(1\right)}\left(1-u+u\frac{1}{n}\right) \\ +\psi_{ij\alpha\left(k\right)}^{\left(2\right)}\left(u\frac{1}{n}\right)}\right] \nonumber \\
&\qquad + \sum_{\substack{\left(R,\alpha\right) \\ \left\{i,j,k\right\}\cap R=\left\{j,k\right\}}} p_{\left(R,\alpha\right)}^{\circ} \left[\substack{\psi_{i\alpha\left(j\right)\alpha\left(k\right)}^{\left(1\right)}\left(\left(1-u\right)^{2}+2u\left(1-u\right)\frac{1}{n}+u^{2}\frac{1}{n^{2}}\right) \\ +\left(\psi_{i\alpha\left(j\right)\alpha\left(k\right)}^{\left(2\right)}+\psi_{i\alpha\left(j\right)\alpha\left(k\right)}^{\left(3\right)}\right)\left(u\left(1-u\right)\frac{1}{n}+u^{2}\frac{1}{n^{2}}\right) \\ +\left(\psi_{i\alpha\left(j\right)\alpha\left(k\right)}^{\left(4\right)}+\psi_{i\alpha\left(j\right)\alpha\left(k\right)}^{\left(5\right)}\right)\left(u^{2}\frac{1}{n^{2}}\right)}\right] \nonumber \\
&\qquad + \sum_{\substack{\left(R,\alpha\right) \\ \left\{i,j,k\right\}\cap R=\left\{i,k\right\}}} p_{\left(R,\alpha\right)}^{\circ} \left[\substack{\psi_{\alpha\left(i\right) j\alpha\left(k\right)}^{\left(1\right)}\left(\left(1-u\right)^{2}+2u\left(1-u\right)\frac{1}{n}+u^{2}\frac{1}{n^{2}}\right) \\ +\left(\psi_{\alpha\left(i\right) j\alpha\left(k\right)}^{\left(2\right)}+\psi_{\alpha\left(i\right) j\alpha\left(k\right)}^{\left(4\right)}\right)\left(u\left(1-u\right)\frac{1}{n}+u^{2}\frac{1}{n^{2}}\right) \\ +\left(\psi_{\alpha\left(i\right) j\alpha\left(k\right)}^{\left(3\right)}+\psi_{\alpha\left(i\right) j\alpha\left(k\right)}^{\left(5\right)}\right)\left(u^{2}\frac{1}{n^{2}}\right)}\right] \nonumber \\
&\qquad + \sum_{\substack{\left(R,\alpha\right) \\ \left\{i,j,k\right\}\cap R=\left\{i,j\right\}}} p_{\left(R,\alpha\right)}^{\circ} \left[\substack{\psi_{\alpha\left(i\right)\alpha\left(j\right) k}^{\left(1\right)}\left(\left(1-u\right)^{2}+2u\left(1-u\right)\frac{1}{n}+u^{2}\frac{1}{n^{2}}\right) \\ +\left(\psi_{\alpha\left(i\right)\alpha\left(j\right) k}^{\left(3\right)}+\psi_{\alpha\left(i\right)\alpha\left(j\right) k}^{\left(4\right)}\right)\left(u\left(1-u\right)\frac{1}{n}+u^{2}\frac{1}{n^{2}}\right) \\ +\left(\psi_{\alpha\left(i\right)\alpha\left(j\right) k}^{\left(2\right)}+\psi_{\alpha\left(i\right)\alpha\left(j\right) k}^{\left(5\right)}\right)\left(u^{2}\frac{1}{n^{2}}\right)}\right] \nonumber \\
&\qquad + \sum_{\substack{\left(R,\alpha\right) \\ \left|\left\{i,j,k\right\}\cap R\right| =3}} p_{\left(R,\alpha\right)}^{\circ} \left[\substack{\psi_{\alpha\left(i\right)\alpha\left(j\right)\alpha\left(k\right)}^{\left(1\right)}\left(\left(1-u\right)^{3}+3u\left(1-u\right)^{2}\frac{1}{n}+3u^{2}\left(1-u\right)\frac{1}{n^{2}} +u^{3}\frac{1}{n^{2}}\right) \\ +\left(\psi_{\alpha\left(i\right)\alpha\left(j\right)\alpha\left(k\right)}^{\left(2\right)}+\psi_{\alpha\left(i\right)\alpha\left(j\right)\alpha\left(k\right)}^{\left(3\right)}+\psi_{\alpha\left(i\right)\alpha\left(j\right)\alpha\left(k\right)}^{\left(4\right)}\right)\left(u\left(1-u\right)^{2}\frac{1}{n}+3u^{2}\left(1-u\right)\frac{1}{n^{2}}+u^{3}\frac{1}{n^{2}}\right) \\ +\psi_{\alpha\left(i\right)\alpha\left(j\right)\alpha\left(k\right)}^{\left(5\right)}\left(3u^{2}\left(1-u\right)\frac{1}{n^{2}} +u^{3}\frac{1}{n^{2}}\right)}\right] . \label{eq:phi_ijk_oneStep}
\end{align}
Like what we did in the proof of \prop{phi_ij}, to obtain \eq{phi_ijk_recurrence} we decompose \eq{phi_ijk_oneStep} based on the number of mutations into $\left\{i,j,k\right\}$ (which is now between zero and three, inclusive).

The portion of \eq{phi_ijk_oneStep} associated to exactly three mutations into $\left\{i,j,k\right\}$ is
\begin{align}
\sum_{\substack{\left(R,\alpha\right) \\ \left|\left\{i,j,k\right\}\cap R\right| =3}} p_{\left(R,\alpha\right)}^{\circ} \left[\substack{\psi_{\alpha\left(i\right)\alpha\left(j\right)\alpha\left(k\right)}^{\left(1\right)}\left(u^{3}\frac{1}{n^{2}}\right) \\ +\left(\psi_{\alpha\left(i\right)\alpha\left(j\right)\alpha\left(k\right)}^{\left(2\right)}+\psi_{\alpha\left(i\right)\alpha\left(j\right)\alpha\left(k\right)}^{\left(3\right)}+\psi_{\alpha\left(i\right)\alpha\left(j\right)\alpha\left(k\right)}^{\left(4\right)}\right)\left(u^{3}\frac{1}{n^{2}}\right) \\ +\psi_{\alpha\left(i\right)\alpha\left(j\right)\alpha\left(k\right)}^{\left(5\right)}\left(u^{3}\frac{1}{n^{2}}\right)}\right] = u^{3}\frac{1}{n^{2}}\sum_{\substack{\left(R,\alpha\right) \\ \left|\left\{i,j,k\right\}\cap R\right| =3}} p_{\left(R,\alpha\right)}^{\circ} .
\end{align}
The portion of \eq{phi_ijk_oneStep} associated to exactly two mutations into $\left\{i,j,k\right\}$ is
\begin{align}
&\sum_{\substack{\left(R,\alpha\right) \\ \left\{i,j,k\right\}\cap R=\left\{j,k\right\}}} p_{\left(R,\alpha\right)}^{\circ} \left[\substack{\psi_{i\alpha\left(j\right)\alpha\left(k\right)}^{\left(1\right)}\left(u^{2}\frac{1}{n^{2}}\right) \\ +\left(\psi_{i\alpha\left(j\right)\alpha\left(k\right)}^{\left(2\right)}+\psi_{i\alpha\left(j\right)\alpha\left(k\right)}^{\left(3\right)}\right)\left(u^{2}\frac{1}{n^{2}}\right) \\ +\left(\psi_{i\alpha\left(j\right)\alpha\left(k\right)}^{\left(4\right)}+\psi_{i\alpha\left(j\right)\alpha\left(k\right)}^{\left(5\right)}\right)\left(u^{2}\frac{1}{n^{2}}\right)}\right] \nonumber \\
&\qquad + \sum_{\substack{\left(R,\alpha\right) \\ \left\{i,j,k\right\}\cap R=\left\{i,k\right\}}} p_{\left(R,\alpha\right)}^{\circ} \left[\substack{\psi_{\alpha\left(i\right) j\alpha\left(k\right)}^{\left(1\right)}\left(u^{2}\frac{1}{n^{2}}\right) \\ +\left(\psi_{\alpha\left(i\right) j\alpha\left(k\right)}^{\left(2\right)}+\psi_{\alpha\left(i\right) j\alpha\left(k\right)}^{\left(4\right)}\right)\left(u^{2}\frac{1}{n^{2}}\right) \\ +\left(\psi_{\alpha\left(i\right) j\alpha\left(k\right)}^{\left(3\right)}+\psi_{\alpha\left(i\right) j\alpha\left(k\right)}^{\left(5\right)}\right)\left(u^{2}\frac{1}{n^{2}}\right)}\right] \nonumber \\
&\qquad + \sum_{\substack{\left(R,\alpha\right) \\ \left\{i,j,k\right\}\cap R=\left\{i,j\right\}}} p_{\left(R,\alpha\right)}^{\circ} \left[\substack{\psi_{\alpha\left(i\right)\alpha\left(j\right) k}^{\left(1\right)}\left(u^{2}\frac{1}{n^{2}}\right) \\ +\left(\psi_{\alpha\left(i\right)\alpha\left(j\right) k}^{\left(3\right)}+\psi_{\alpha\left(i\right)\alpha\left(j\right) k}^{\left(4\right)}\right)\left(u^{2}\frac{1}{n^{2}}\right) \\ +\left(\psi_{\alpha\left(i\right)\alpha\left(j\right) k}^{\left(2\right)}+\psi_{\alpha\left(i\right)\alpha\left(j\right) k}^{\left(5\right)}\right)\left(u^{2}\frac{1}{n^{2}}\right)}\right] \nonumber \\
&\qquad + \sum_{\substack{\left(R,\alpha\right) \\ \left|\left\{i,j,k\right\}\cap R\right| =3}} p_{\left(R,\alpha\right)}^{\circ} \left[\substack{\psi_{\alpha\left(i\right)\alpha\left(j\right)\alpha\left(k\right)}^{\left(1\right)}\left(3u^{2}\left(1-u\right)\frac{1}{n^{2}}\right) \\ +\left(\psi_{\alpha\left(i\right)\alpha\left(j\right)\alpha\left(k\right)}^{\left(2\right)}+\psi_{\alpha\left(i\right)\alpha\left(j\right)\alpha\left(k\right)}^{\left(3\right)}+\psi_{\alpha\left(i\right)\alpha\left(j\right)\alpha\left(k\right)}^{\left(4\right)}\right)\left(3u^{2}\left(1-u\right)\frac{1}{n^{2}}\right) \\ +\psi_{\alpha\left(i\right)\alpha\left(j\right)\alpha\left(k\right)}^{\left(5\right)}\left(3u^{2}\left(1-u\right)\frac{1}{n^{2}}\right)}\right] \nonumber \\
&= u^{2}\frac{1}{n^{2}}\sum_{\substack{\left(R,\alpha\right) \\ \left|\left\{i,j,k\right\}\cap R\right| =2}} p_{\left(R,\alpha\right)}^{\circ} + 3u^{2}\left(1-u\right)\frac{1}{n^{2}}\sum_{\substack{\left(R,\alpha\right) \\ \left|\left\{i,j,k\right\}\cap R\right| =3}} p_{\left(R,\alpha\right)}^{\circ} .
\end{align}
From \lem{psi_ijk}, the portion of \eq{phi_ijk_oneStep} associated to exactly one mutation into $\left\{i,j,k\right\}$ is
\begin{align}
&\sum_{\substack{\left(R,\alpha\right) \\ \left\{i,j,k\right\}\cap R=\left\{i\right\}}} p_{\left(R,\alpha\right)}^{\circ} \left[\substack{\psi_{\alpha\left(i\right) jk}^{\left(1\right)}\left(u\frac{1}{n}\right) \\ +\psi_{\alpha\left(i\right) jk}^{\left(4\right)}\left(u\frac{1}{n}\right)}\right] + \sum_{\substack{\left(R,\alpha\right) \\ \left\{i,j,k\right\}\cap R=\left\{j\right\}}} p_{\left(R,\alpha\right)}^{\circ} \left[\substack{\psi_{i\alpha\left(j\right) k}^{\left(1\right)}\left(u\frac{1}{n}\right) \\ +\psi_{i\alpha\left(j\right) k}^{\left(3\right)}\left(u\frac{1}{n}\right)}\right] \nonumber \\
&\qquad + \sum_{\substack{\left(R,\alpha\right) \\ \left\{i,j,k\right\}\cap R=\left\{k\right\}}} p_{\left(R,\alpha\right)}^{\circ} \left[\substack{\psi_{ij\alpha\left(k\right)}^{\left(1\right)}\left(u\frac{1}{n}\right) \\ +\psi_{ij\alpha\left(k\right)}^{\left(2\right)}\left(u\frac{1}{n}\right)}\right] + \sum_{\substack{\left(R,\alpha\right) \\ \left\{i,j,k\right\}\cap R=\left\{j,k\right\}}} p_{\left(R,\alpha\right)}^{\circ} \left[\substack{\psi_{i\alpha\left(j\right)\alpha\left(k\right)}^{\left(1\right)}\left(2u\left(1-u\right)\frac{1}{n}\right) \\ +\left(\psi_{i\alpha\left(j\right)\alpha\left(k\right)}^{\left(2\right)}+\psi_{i\alpha\left(j\right)\alpha\left(k\right)}^{\left(3\right)}\right)\left(u\left(1-u\right)\frac{1}{n}\right)}\right] \nonumber \\
&\qquad + \sum_{\substack{\left(R,\alpha\right) \\ \left\{i,j,k\right\}\cap R=\left\{i,k\right\}}} p_{\left(R,\alpha\right)}^{\circ} \left[\substack{\psi_{\alpha\left(i\right) j\alpha\left(k\right)}^{\left(1\right)}\left(2u\left(1-u\right)\frac{1}{n}\right) \\ +\left(\psi_{\alpha\left(i\right) j\alpha\left(k\right)}^{\left(2\right)}+\psi_{\alpha\left(i\right) j\alpha\left(k\right)}^{\left(4\right)}\right)\left(u\left(1-u\right)\frac{1}{n}\right)}\right] \nonumber \\
&\qquad + \sum_{\substack{\left(R,\alpha\right) \\ \left\{i,j,k\right\}\cap R=\left\{i,j\right\}}} p_{\left(R,\alpha\right)}^{\circ} \left[\substack{\psi_{\alpha\left(i\right)\alpha\left(j\right) k}^{\left(1\right)}\left(2u\left(1-u\right)\frac{1}{n}\right) \\ +\left(\psi_{\alpha\left(i\right)\alpha\left(j\right) k}^{\left(3\right)}+\psi_{\alpha\left(i\right)\alpha\left(j\right) k}^{\left(4\right)}\right)\left(u\left(1-u\right)\frac{1}{n}\right)}\right] \nonumber \\
&\qquad + \sum_{\substack{\left(R,\alpha\right) \\ \left|\left\{i,j,k\right\}\cap R\right| =3}} p_{\left(R,\alpha\right)}^{\circ} \left[\substack{\psi_{\alpha\left(i\right)\alpha\left(j\right)\alpha\left(k\right)}^{\left(1\right)}\left(3u\left(1-u\right)^{2}\frac{1}{n}\right) \\ +\left(\psi_{\alpha\left(i\right)\alpha\left(j\right)\alpha\left(k\right)}^{\left(2\right)}+\psi_{\alpha\left(i\right)\alpha\left(j\right)\alpha\left(k\right)}^{\left(3\right)}+\psi_{\alpha\left(i\right)\alpha\left(j\right)\alpha\left(k\right)}^{\left(4\right)}\right)\left(u\left(1-u\right)^{2}\frac{1}{n}\right)}\right] \nonumber \\
&= u\frac{1}{n}\phi_{jk}\sum_{\substack{\left(R,\alpha\right) \\ \left\{i,j,k\right\}\cap R=\left\{i\right\}}} p_{\left(R,\alpha\right)}^{\circ} + u\frac{1}{n}\phi_{ik}\sum_{\substack{\left(R,\alpha\right) \\ \left\{i,j,k\right\}\cap R=\left\{j\right\}}} p_{\left(R,\alpha\right)}^{\circ} + u\frac{1}{n}\phi_{ij}\sum_{\substack{\left(R,\alpha\right) \\ \left\{i,j,k\right\}\cap R=\left\{k\right\}}} p_{\left(R,\alpha\right)}^{\circ} \nonumber \\
&\qquad + u\left(1-u\right)\frac{1}{n}\sum_{\substack{\left(R,\alpha\right) \\ \left\{i,j,k\right\}\cap R=\left\{j,k\right\}}} p_{\left(R,\alpha\right)}^{\circ} \left(\phi_{i\alpha\left(j\right)}+\phi_{i\alpha\left(k\right)}\right) \nonumber \\
&\qquad + u\left(1-u\right)\frac{1}{n}\sum_{\substack{\left(R,\alpha\right) \\ \left\{i,j,k\right\}\cap R=\left\{i,k\right\}}} p_{\left(R,\alpha\right)}^{\circ} \left(\phi_{\alpha\left(i\right) j}+\phi_{j\alpha\left(k\right)}\right) \nonumber \\
&\qquad + u\left(1-u\right)\frac{1}{n}\sum_{\substack{\left(R,\alpha\right) \\ \left\{i,j,k\right\}\cap R=\left\{i,j\right\}}} p_{\left(R,\alpha\right)}^{\circ} \left(\phi_{\alpha\left(i\right) k}+\phi_{\alpha\left(j\right) k}\right) \nonumber \\
&\qquad + u\left(1-u\right)^{2}\frac{1}{n}\sum_{\substack{\left(R,\alpha\right) \\ \left|\left\{i,j,k\right\}\cap R\right| =3}} p_{\left(R,\alpha\right)}^{\circ} \left(\phi_{\alpha\left(i\right)\alpha\left(j\right)}+\phi_{\alpha\left(i\right)\alpha\left(k\right)}+\phi_{\alpha\left(j\right)\alpha\left(k\right)}\right) .
\end{align}
Finally, the portion of \eq{phi_ijk_oneStep} associated to zero mutations into $\left\{i,j,k\right\}$ is
\begin{align}
\sum_{\left(R,\alpha\right)} p_{\left(R,\alpha\right)}^{\circ} \left(1-u\right)^{\left|\left\{i,j,k\right\}\cap R\right|} \phi_{\widetilde{\alpha}\left(i\right)\widetilde{\alpha}\left(j\right)\widetilde{\alpha}\left(k\right)} .
\end{align}
Putting everything together then gives \eq{phi_ijk_recurrence}, as desired.
\end{proof}
It follows that we can find $\left\{\phi_{ijk}\right\}_{i,j,k=1}^{N}$ by solving a linear system with $O\left(N^{3}\right)$ terms.

\begin{corollary}
If at most one individual is replaced in each time step, then, for $i\neq j\neq k\neq i$,
\begin{align}
\phi_{ijk} &= u\frac{1}{n}\phi_{jk}\sum_{\substack{\left(R,\alpha\right) \\ \left\{i,j,k\right\}\cap R=\left\{i\right\}}} p_{\left(R,\alpha\right)}^{\circ} + u\frac{1}{n}\phi_{ik}\sum_{\substack{\left(R,\alpha\right) \\ \left\{i,j,k\right\}\cap R=\left\{j\right\}}} p_{\left(R,\alpha\right)}^{\circ} + u\frac{1}{n}\phi_{ij}\sum_{\substack{\left(R,\alpha\right) \\ \left\{i,j,k\right\}\cap R=\left\{k\right\}}} p_{\left(R,\alpha\right)}^{\circ} \nonumber \\
&\qquad + \phi_{ijk} \sum_{\substack{\left(R,\alpha\right) \\ \left|\left\{i,j,k\right\}\cap R\right| =0}} p_{\left(R,\alpha\right)}^{\circ} + \left(1-u\right)\sum_{\substack{\left(R,\alpha\right) \\ \left|\left\{i,j,k\right\}\cap R\right| =1}} p_{\left(R,\alpha\right)}^{\circ} \phi_{\widetilde{\alpha}\left(i\right)\widetilde{\alpha}\left(j\right)\widetilde{\alpha}\left(k\right)} .
\end{align}
\end{corollary}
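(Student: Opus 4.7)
The plan is to specialize the general recurrence of \prop{phi_ijk} to the standing hypothesis and observe that most terms in \eq{phi_ijk_recurrence} collapse. The assumption ``at most one individual is replaced in each time step'' is equivalent to $p_{(R,\alpha)}^{\circ} = 0$ whenever $|R| \geq 2$. Consequently, for any distinct triple $i,j,k$, every sum in \eq{phi_ijk_recurrence} indexed by a condition of the form $\{i,j,k\} \cap R = \{j,k\}, \{i,k\}, \{i,j\}$ or $|\{i,j,k\} \cap R| \in \{2,3\}$ is vacuous and drops out.

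First, the ``3 mutations'' and ``2 mutations'' blocks vanish identically, since both are supported on replacement events with $|\{i,j,k\} \cap R| \in \{2,3\}$, which force $|R| \geq 2$. Next, within the ``1 mutation'' block, only the three summands indexed by the singletons $\{i,j,k\} \cap R = \{i\}, \{j\}, \{k\}$ survive; the three summands indexed by the pairs $\{j,k\}, \{i,k\}, \{i,j\}$ and the final summand indexed by $|\{i,j,k\} \cap R| = 3$ all vanish for the same reason. These surviving terms are, by direct inspection, precisely the first three terms in the statement of the corollary.

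Finally, the plan is to unpack the ``0 mutations'' block $\sum_{(R,\alpha)} p_{(R,\alpha)}^{\circ}(1-u)^{|\{i,j,k\}\cap R|} \phi_{\widetilde{\alpha}(i)\widetilde{\alpha}(j)\widetilde{\alpha}(k)}$ by splitting on the value of $|\{i,j,k\} \cap R|$, which under the hypothesis can only be $0$ or $1$. The case $|\{i,j,k\} \cap R| = 0$ contributes $\phi_{ijk}\sum_{|\{i,j,k\}\cap R|=0} p_{(R,\alpha)}^{\circ}$, since $\widetilde{\alpha}$ fixes each of $i,j,k$ on such events and $(1-u)^0 = 1$. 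The case $|\{i,j,k\} \cap R| = 1$ contributes $(1-u)\sum_{|\{i,j,k\}\cap R|=1} p_{(R,\alpha)}^{\circ}\, \phi_{\widetilde{\alpha}(i)\widetilde{\alpha}(j)\widetilde{\alpha}(k)}$. These are exactly the final two terms in the corollary, and assembling the surviving contributions completes the derivation. No step presents a real obstacle; the only work is the term-by-term bookkeeping to verify that the survivors line up with the claimed identity.
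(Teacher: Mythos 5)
Your proposal is correct and matches the paper's (implicit) argument: the corollary is stated without proof as an immediate specialization of \prop{phi_ijk}, and your term-by-term elimination of all summands supported on events with $\left|R\right|\geqslant 2$, followed by splitting the zero-mutation block on $\left|\left\{i,j,k\right\}\cap R\right|\in\left\{0,1\right\}$, is exactly the intended bookkeeping. Nothing is missing.
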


\section{The structure-coefficient theorem}
Our main result is an expanded version of the structure-coefficient theorem of \citet{tarnita:PNAS:2011}, which provides for the computation of structure coefficients.
\begin{theorem}\label{thm:mainTheorem}
For every $y\in\left\{1,\dots ,n\right\}$, the mean frequency of strategy $y$ satisfies
\begin{align}
\frac{1}{N} &\sum_{i=1}^{N}\mathbb{P}_{\MSS}\left[ X_{i}=y\right] \nonumber \\
&= \frac{1}{n} + \delta\frac{1}{n}\left(\Lambda_{1}\left(a_{yy}-\overline{a_{\ast\ast}}\right) + \Lambda_{2}\left(\overline{a_{y\ast}}-\overline{a_{\ast y}}\right) + \Lambda_{3}\left(\overline{a_{y\ast}}-\overline{a}\right)\right) + O\left(\delta^{2}\right) , \label{eq:mainTheorem}
\end{align}
where
\begin{subequations}
\begin{align}
\overline{a_{y\ast}} &= \frac{1}{n}\sum_{z=1}^{n}a_{yz} ; \\
\overline{a_{\ast y}} &= \frac{1}{n}\sum_{z=1}^{n}a_{zy}; \\
\overline{a_{\ast\ast}} &= \frac{1}{n}\sum_{z=1}^{n}a_{zz} ; \\
\overline{a} &= \frac{1}{n^{2}}\sum_{z,w=1}^{n}a_{zw} ;
\end{align}
\end{subequations}
and
\begin{subequations}\label{eq:lambda_expressions}
\begin{align}
\Lambda_{1} &= \frac{1}{u}\sum_{i,j,k=1}^{N} \pi_{i}^{\textrm{mut}} m_{k}^{ji} \sum_{\ell =1}^{N} \Omega_{k\ell} \left(\frac{\substack{-n^{2}\phi_{ik\ell} + n^{2}\left(1-u\right)\phi_{jk\ell} + n\phi_{ik} + n\phi_{i\ell} \\ - n\left(1-u\right)\phi_{jk} - n\left(1-u\right)\phi_{j\ell} + nu\phi_{k\ell} - 2u}}{\left(n-1\right)\left(n-2\right)}\right) ; \\
\Lambda_{2} &= \frac{1}{u}\sum_{i,j,k=1}^{N} \pi_{i}^{\textrm{mut}} m_{k}^{ji} \sum_{\ell =1}^{N} \Omega_{k\ell} \left(\frac{\substack{-n^{2}\phi_{ik\ell} + n^{2}\left(1-u\right)\phi_{jk\ell} + n\phi_{ik} + n\left(n-1\right)\phi_{i\ell} \\ - n\left(1-u\right)\phi_{jk} - n\left(n-1\right)\left(1-u\right)\phi_{j\ell} + nu\phi_{k\ell} - nu}}{\left(n-1\right)\left(n-2\right)}\right) ; \\
\Lambda_{3} &= \frac{1}{u}\sum_{i,j,k=1}^{N} \pi_{i}^{\textrm{mut}} m_{k}^{ji} \sum_{\ell =1}^{N} \Omega_{k\ell} \left(\frac{\substack{2n^{2}\phi_{ik\ell} - 2n^{2}\left(1-u\right)\phi_{jk\ell} - n^{2}\phi_{ik} - n^{2}\phi_{i\ell} \\ + n^{2}\left(1-u\right)\phi_{jk} + n^{2}\left(1-u\right)\phi_{j\ell} - 2nu\phi_{k\ell} + 2nu}}{\left(n-1\right)\left(n-2\right)}\right) .
\end{align}
\end{subequations}
Moreover, $\Lambda_{1}$, $\Lambda_{2}$ and $\Lambda_{3}$ are independent of the number of strategies, $n$.
\end{theorem}
\begin{remark}
The fact that \eq{mainTheorem} holds for \emph{some} $\Lambda_{1}$, $\Lambda_{2}$, and $\Lambda_{3}$ (which are independent of $n$) is the main result of \citep{tarnita:PNAS:2011}. Our contribution is the calculation of these coefficients (\eq{lambda_expressions}).
\end{remark}
\begin{proof}[Proof of \thm{mainTheorem}]
By \eq{mainBeforeJoint}, together with the definitions of $\phi_{ij}$ and $\psi_{ijk}^{\left(L\right)}$, we see that
\begin{align}
\frac{d}{d\delta}\Bigg\vert_{\delta =0}\frac{1}{N}\sum_{i=1}^{N}\mathbb{P}_{\MSS}\left[ X_{i}=y\right] &= \frac{1}{u}\sum_{i,j,k=1}^{N} \pi_{i}^{\textrm{mut}} m_{k}^{ji} \sum_{\ell =1}^{N} \Omega_{k\ell} \chi_{1} a_{yy} \nonumber \\
&\qquad +\frac{1}{u}\sum_{i,j,k=1}^{N} \pi_{i}^{\textrm{mut}} m_{k}^{ji} \sum_{\ell =1}^{N} \Omega_{k\ell} \chi_{2} \sum_{y_{\ell}\neq y} a_{yy_{\ell}} \nonumber \\
&\qquad +\frac{1}{u}\sum_{i,j,k=1}^{N} \pi_{i}^{\textrm{mut}} m_{k}^{ji} \sum_{\ell =1}^{N} \Omega_{k\ell} \chi_{3} \sum_{y_{k}\neq y} a_{y_{k}y} \nonumber \\
&\qquad +\frac{1}{u}\sum_{i,j,k=1}^{N} \pi_{i}^{\textrm{mut}} m_{k}^{ji} \sum_{\ell =1}^{N} \Omega_{k\ell} \chi_{4} \sum_{y_{k}\neq y} a_{y_{k}y_{k}} \nonumber \\
&\qquad +\frac{1}{u}\sum_{i,j,k=1}^{N} \pi_{i}^{\textrm{mut}} m_{k}^{ji} \sum_{\ell =1}^{N} \Omega_{k\ell} \chi_{5} \sum_{y_{k}\neq y}\sum_{y_{\ell}\neq y,y_{k}} a_{y_{k}y_{\ell}} , \label{eq:expdelprime}
\end{align}
where
\begin{align}
\chi_{L} &= 
\begin{cases}
\displaystyle \left(1-u\right)\frac{\psi_{jk\ell}^{\left(1\right)}}{n} - \frac{\psi_{ik\ell}^{\left(1\right)}}{n} + \frac{u}{n} \frac{\phi_{k\ell}}{n} & \displaystyle L=1 , \\
& \\
\displaystyle \left(1-u\right)\frac{\psi_{jk\ell}^{\left(2\right)}}{n\left(n-1\right)} - \frac{\psi_{ik\ell}^{\left(2\right)}}{n\left(n-1\right)} + \frac{u}{n} \frac{1-\phi_{k\ell}}{n\left(n-1\right)} & \displaystyle L=2 , \\
& \\
\displaystyle \left(1-u\right)\frac{\psi_{jk\ell}^{\left(3\right)}}{n\left(n-1\right)} - \frac{\psi_{ik\ell}^{\left(3\right)}}{n\left(n-1\right)} + \frac{u}{n} \frac{1-\phi_{k\ell}}{n\left(n-1\right)} & \displaystyle L=3 , \\
& \\
\displaystyle \left(1-u\right)\frac{\psi_{jk\ell}^{\left(4\right)}}{n\left(n-1\right)} - \frac{\psi_{ik\ell}^{\left(4\right)}}{n\left(n-1\right)} + \frac{u}{n} \frac{\phi_{k\ell}}{n} & \displaystyle L=4 , \\
& \\
\displaystyle \left(1-u\right)\frac{\psi_{jk\ell}^{\left(5\right)}}{n\left(n-1\right)\left(n-2\right)} - \frac{\psi_{ik\ell}^{\left(5\right)}}{n\left(n-1\right)\left(n-2\right)} + \frac{u}{n} \frac{1-\phi_{k\ell}}{n\left(n-1\right)} & \displaystyle L=5 .
\end{cases}
\end{align}
We first want to express \eq{expdelprime} in terms of $a_{yy}$, $\overline{a_{y\ast}}$, $\overline{a_{\ast y}}$, $\overline{a_{\ast\ast}}$, and $\overline{a}$. For fixed $\alpha$, $\beta$, $\gamma$, $\delta$, and $\epsilon$,
\begin{align}
\alpha &a_{yy} + \beta \overline{a_{y\ast}} + \gamma \overline{a_{\ast y}} + \delta \overline{a_{\ast\ast}} + \varepsilon \overline{a} \nonumber \\
&= \left(\alpha + \frac{1}{n} \beta + \frac{1}{n} \gamma + \frac{1}{n} \delta + \frac{1}{n^{2}} \varepsilon\right) a_{yy} \nonumber \\
&\qquad + \left(\frac{1}{n} \beta + \frac{1}{n^{2}} \varepsilon\right) \sum_{y_{\ell}\neq y} a_{yy_{\ell}} + \left(\frac{1}{n}\gamma + \frac{1}{n^{2}} \varepsilon\right) \sum_{y_{k}\neq y} a_{y_{k}y} \nonumber \\
&\qquad + \left(\frac{1}{n} \delta + \frac{1}{n^{2}} \varepsilon\right) \sum_{y_{k}\neq y} a_{y_{k}y_{k}} + \left(\frac{1}{n^{2}} \varepsilon\right) \sum_{y_{k}\neq y}\sum_{y_{\ell}\neq y,y_{k}} a_{y_{k}y_{\ell}} .
\end{align}
If this expression is to be equal to \eq{expdelprime}, then we must have
\begin{subequations}
\begin{align}
\alpha &= \frac{1}{u}\sum_{i,j,k=1}^{N} \pi_{i}^{\textrm{mut}} m_{k}^{ji} \sum_{\ell =1}^{N} \Omega_{k\ell} \left(\chi_{1} - \chi_{2} - \chi_{3} - \chi_{4} + 2\chi_{5}\right) ; \\
\beta &= n\frac{1}{u}\sum_{i,j,k=1}^{N} \pi_{i}^{\textrm{mut}} m_{k}^{ji} \sum_{\ell =1}^{N} \Omega_{k\ell} \left(\chi_{2}-\chi_{5}\right) ; \\
\gamma &= n\frac{1}{u}\sum_{i,j,k=1}^{N} \pi_{i}^{\textrm{mut}} m_{k}^{ji} \sum_{\ell =1}^{N} \Omega_{k\ell} \left(\chi_{3}-\chi_{5}\right) ; \\
\delta &= n\frac{1}{u}\sum_{i,j,k=1}^{N} \pi_{i}^{\textrm{mut}} m_{k}^{ji} \sum_{\ell =1}^{N} \Omega_{k\ell} \left(\chi_{4}-\chi_{5}\right) ; \\
\varepsilon &= n^{2}\frac{1}{u}\sum_{i,j,k=1}^{N} \pi_{i}^{\textrm{mut}} m_{k}^{ji} \sum_{\ell =1}^{N} \Omega_{k\ell} \chi_{5} .
\end{align}
\end{subequations}
Using the expressions for $\psi_{ijk}^{\left(L\right)}$ ($L=2,3,4,5$) in terms of $\phi_{ijk}=\psi_{ijk}^{\left(1\right)}$ (\lem{psi_ijk}), we obtain
\begin{subequations}
\begin{align}
\alpha &= \frac{1}{u}\sum_{i,j,k=1}^{N} \pi_{i}^{\textrm{mut}} m_{k}^{ji} \sum_{\ell =1}^{N} \Omega_{k\ell} \left(\frac{\substack{-n^{2}\phi_{ik\ell} + n^{2}\left(1-u\right)\phi_{jk\ell} + n\phi_{ik} + n\phi_{i\ell} \\ - n\left(1-u\right)\phi_{jk} - n\left(1-u\right)\phi_{j\ell} + nu\phi_{k\ell} - 2u}}{n\left(n-1\right)\left(n-2\right)}\right) ; \\
\beta &= \frac{1}{u}\sum_{i,j,k=1}^{N} \pi_{i}^{\textrm{mut}} m_{k}^{ji} \sum_{\ell =1}^{N} \Omega_{k\ell} \left(\frac{\substack{n\phi_{ik\ell} - n\left(1-u\right)\phi_{jk\ell} - \left(n-1\right)\phi_{ik} - \phi_{i\ell} \\ + \left(n-1\right)\left(1-u\right)\phi_{jk} + \left(1-u\right)\phi_{j\ell} - u\phi_{k\ell} + u}}{\left(n-1\right)\left(n-2\right)}\right) ; \\
\gamma &= \frac{1}{u}\sum_{i,j,k=1}^{N} \pi_{i}^{\textrm{mut}} m_{k}^{ji} \sum_{\ell =1}^{N} \Omega_{k\ell} \left(\frac{\substack{n\phi_{ik\ell} - n\left(1-u\right)\phi_{jk\ell} - \phi_{ik} - \left(n-1\right)\phi_{i\ell} \\ + \left(1-u\right)\phi_{jk} + \left(n-1\right)\left(1-u\right)\phi_{j\ell} - u\phi_{k\ell} + u}}{\left(n-1\right)\left(n-2\right)}\right) ; \\
\delta &= \frac{1}{u}\sum_{i,j,k=1}^{N} \pi_{i}^{\textrm{mut}} m_{k}^{ji} \sum_{\ell =1}^{N} \Omega_{k\ell} \left(\frac{\substack{n^{2}\phi_{ik\ell} - n^{2}\left(1-u\right)\phi_{jk\ell} - n\phi_{ik} - n\phi_{i\ell} \\ + n\left(1-u\right)\phi_{jk} + n\left(1-u\right)\phi_{j\ell} - nu\phi_{k\ell} + 2u}}{n\left(n-1\right)\left(n-2\right)}\right) ; \\
\varepsilon &= \frac{1}{u}\sum_{i,j,k=1}^{N} \pi_{i}^{\textrm{mut}} m_{k}^{ji} \sum_{\ell =1}^{N} \Omega_{k\ell} \left(\frac{\substack{-2n\phi_{ik\ell} + 2n\left(1-u\right)\phi_{jk\ell} + n\phi_{ik} + n\phi_{i\ell} \\ - n\left(1-u\right)\phi_{jk} - n\left(1-u\right)\phi_{j\ell} + 2u\phi_{k\ell} - 2u}}{\left(n-1\right)\left(n-2\right)}\right) .
\end{align}
\end{subequations}
From these expressions, we immediately see that $\alpha +\delta =\beta +\gamma +\varepsilon =0$, consistent with \citep{tarnita:PNAS:2011}. Following \citet{tarnita:PNAS:2011}, we let $\lambda_{1}\coloneqq\alpha$, $\lambda_{2}\coloneqq -\gamma$, and $\lambda_{3}\coloneqq -\varepsilon$, which allows one to write
\begin{align}
\frac{d}{d\delta}\Bigg\vert_{\delta =0}\frac{1}{N}\sum_{i=1}^{N}\mathbb{P}_{\MSS}\left[ X_{i}=y\right] &= \alpha a_{yy} + \beta \overline{a_{y\ast}} + \gamma \overline{a_{\ast y}} + \delta \overline{a_{\ast\ast}} + \varepsilon \overline{a} \nonumber \\
&= \lambda_{1}\left(a_{yy}-\overline{a_{\ast\ast}}\right) + \lambda_{2}\left(\overline{a_{y\ast}}-\overline{a_{\ast y}}\right) + \lambda_{3}\left(\overline{a_{y\ast}}-\overline{a}\right) . \label{eq:lambdaEquation}
\end{align}

In what ways do $\lambda_{1}$, $\lambda_{2}$, and $\lambda_{3}$ depend on the number of strategies, $n$? By \prop{phi_ij}, we see that $n\phi_{ij}$ is a linear function of $n$ for every $i$ and $j$. Similarly, by \prop{phi_ijk}, $n^{2}\phi_{ijk}$ is a quadratic function of $n$ for every $i$, $j$, and $k$. Turning to the terms depending on $n$ in \eq{lambda_expressions}, let
\begin{subequations}
\begin{align}
f_{1}^{ijk\ell}\left(n\right) &\coloneqq -n^{2}\phi_{ik\ell} + n^{2}\left(1-u\right)\phi_{jk\ell} + n\phi_{ik} + n\phi_{i\ell} \nonumber \\ 
&\qquad - n\left(1-u\right)\phi_{jk} - n\left(1-u\right)\phi_{j\ell} + nu\phi_{k\ell} - 2u ; \\
f_{2}^{ijk\ell}\left(n\right) &\coloneqq -n^{2}\phi_{ik\ell} + n^{2}\left(1-u\right)\phi_{jk\ell} + n\phi_{ik} + n\left(n-1\right)\phi_{i\ell} \nonumber \\ 
&\qquad - n\left(1-u\right)\phi_{jk} - n\left(n-1\right)\left(1-u\right)\phi_{j\ell} + nu\phi_{k\ell} - nu ; \\
f_{3}^{ijk\ell}\left(n\right) &\coloneqq 2n^{2}\phi_{ik\ell} - 2n^{2}\left(1-u\right)\phi_{jk\ell} - n^{2}\phi_{ik} - n^{2}\phi_{i\ell} \nonumber \\
&\qquad + n^{2}\left(1-u\right)\phi_{jk} + n^{2}\left(1-u\right)\phi_{j\ell} - 2nu\phi_{k\ell} + 2nu
\end{align}
\end{subequations}
be the numerators of the fractions in \eq{lambda_expressions}. These functions are polynomials in $n$ of at most degree two. Since $\phi_{ijk}=\phi_{ij}=1$ for every $i$, $j$, and $k$ when $n=1$, we see immediately that $f_{1}^{ijk\ell}\left(1\right) =f_{2}^{ijk\ell}\left(1\right) =f_{3}^{ijk\ell}\left(1\right) =0$. Since $\phi_{ijk}=\left(\phi_{ij}+\phi_{ik}+\phi_{jk}-1\right) /2$ for every $i$, $j$, and $k$ when $n=2$, a straightforward calculation gives $f_{1}^{ijk\ell}\left(2\right) =f_{2}^{ijk\ell}\left(2\right) =f_{3}^{ijk\ell}\left(2\right) =0$. Therefore, there exist $g_{1}^{ijk\ell}$, $g_{2}^{ijk\ell}$, and $g_{3}^{ijk\ell}$, which are independent of $n$, such that $f_{1}^{ijk\ell}\left(n\right) =\left(n-1\right)\left(n-2\right) g_{1}^{ijk\ell}$, $f_{2}^{ijk\ell}\left(n\right) =\left(n-1\right)\left(n-2\right) g_{2}^{ijk\ell}$, and $f_{3}^{ijk\ell}\left(n\right) =\left(n-1\right)\left(n-2\right) g_{3}^{ijk\ell}$. It follows that $\Lambda_{1}\coloneqq n\lambda_{1}$, $\Lambda_{2}\coloneqq n\lambda_{2}$, and $\Lambda_{3}\coloneqq n\lambda_{3}$ are independent of $n$. This independence was  established in \citep{tarnita:PNAS:2011} using a qualitative argument. Here, we can see it directly from \eq{lambda_expressions}. We then get \eq{mainTheorem} from \eq{lambdaEquation}, as desired.
\end{proof}

\subsection{Additive games with $n$ strategies}
If a matrix game is additive, then the payoff for strategy $x$ against $y$ can be decomposed into a component coming from the player using $x$ plus a component from the player using $y$. In other words, there exist $\left\{s_{x}\right\}_{x=1}^{n}$ and $\left\{t_{y}\right\}_{y=1}^{n}$ such that $a_{xy}=s_{x}+t_{y}$ for every $x,y\in\left\{1,\dots ,n\right\}$.
\begin{example}
Suppose that strategy $x$ pays $c_{x}$ to donate $b_{x}$ to the co-player. This model is a generalization of the donation game that can cover more than two different investment levels. In this case, we have $s_{x}=-c_{x}$ and $t_{y}=b_{y}$, so that the payoff for $x$ against $y$ is $a_{xy}=-c_{x}+b_{y}$.
\end{example}
\begin{corollary}
For every $y\in\left\{1,\dots ,n\right\}$, the mean-frequency of $y$ satisfies
\begin{align}
\frac{1}{N} \sum_{i=1}^{N} \mathbb{P}_{\MSS}\left[ X_{i}=y\right] &= \frac{1}{n} + \delta\frac{1}{n}\left(K_{1} \left(s_{y}-\overline{s}\right) + K_{2} \left(t_{y}-\overline{t}\right)\right) +O\left(\delta^{2}\right) ,
\end{align}
where $\overline{s}=\left(1/n\right)\sum_{z=1}^{n}s_{z}$, $\overline{t}=\left(1/n\right)\sum_{z=1}^{n}t_{z}$, and
\begin{subequations}\label{eq:zeta_additive}
\begin{align}
K_{1} &= \frac{1}{u}\sum_{i,j,k=1}^{N} \pi_{i}^{\textrm{mut}} m_{k}^{ji} \sum_{\ell =1}^{N} \Omega_{k\ell} \left( \frac{- n\phi_{ik} + n\left(1-u\right)\phi_{jk} + u}{n-1} \right) ; \\
K_{2} &= \frac{1}{u}\sum_{i,j,k=1}^{N} \pi_{i}^{\textrm{mut}} m_{k}^{ji} \sum_{\ell =1}^{N} \Omega_{k\ell} \left( \frac{- n\phi_{i\ell} + n\left(1-u\right)\phi_{j\ell} + u}{n-1} \right) .
\end{align}
\end{subequations}
\end{corollary}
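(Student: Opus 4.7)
The plan is to reduce the corollary to a direct substitution in \thm{mainTheorem}. Under the additive hypothesis $a_{xy} = s_x + t_y$, the three payoff combinations that appear in \eq{mainTheorem} should each become linear in $s_y - \overline{s}$ and $t_y - \overline{t}$, and the corollary will follow by collecting coefficients and then simplifying the resulting combinations of $\widetilde{\lambda}_1, \widetilde{\lambda}_2, \widetilde{\lambda}_3$.

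First I would compute the five averages built from the payoff matrix under additivity. One finds $a_{yy} = s_y + t_y$, $\overline{a_{y\ast}} = s_y + \overline{t}$, $\overline{a_{\ast y}} = \overline{s} + t_y$, and $\overline{a_{\ast\ast}} = \overline{a} = \overline{s} + \overline{t}$ (the last two coincide because the diagonal average and the grand average of an additive matrix agree). Inserting these into the three payoff differences in \eq{mainTheorem} gives $(s_y - \overline{s}) + (t_y - \overline{t})$, $(s_y - \overline{s}) - (t_y - \overline{t})$, and $s_y - \overline{s}$, respectively. Regrouping by $s_y - \overline{s}$ and $t_y - \overline{t}$ then identifies $\zeta_1 = \widetilde{\lambda}_1 + \widetilde{\lambda}_2 + \widetilde{\lambda}_3$ and $\zeta_2 = \widetilde{\lambda}_1 - \widetilde{\lambda}_2$ as the required coefficients.

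The remaining step is to verify that these two particular combinations of the formulas in \eq{lambda_expressions} collapse to the compact forms asserted in \eq{zeta_additive}. I expect the three-index terms $\phi_{ik\ell}$ and $\phi_{jk\ell}$ to cancel in both sums, since they carry opposite signs between $\widetilde{\lambda}_3$ and $\widetilde{\lambda}_1+\widetilde{\lambda}_2$ and identical contributions in $\widetilde{\lambda}_1$ and $\widetilde{\lambda}_2$. In $\widetilde{\lambda}_1 + \widetilde{\lambda}_2 + \widetilde{\lambda}_3$, the $\phi_{i\ell}$, $\phi_{j\ell}$, and $\phi_{k\ell}$ coefficients should also vanish, leaving a common factor of $(n-2)$ multiplying $-n\phi_{ik} + n(1-u)\phi_{jk} + u$, which yields $\zeta_1$ after dividing by $(n-1)(n-2)$; symmetrically, in $\widetilde{\lambda}_1 - \widetilde{\lambda}_2$ the $\phi_{ik}$, $\phi_{jk}$, and $\phi_{k\ell}$ coefficients should vanish, leaving $(n-2)[-n\phi_{i\ell} + n(1-u)\phi_{j\ell} + u]$, which yields $\zeta_2$.

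The only real obstacle is the careful bookkeeping of these cancellations and the clean appearance of the common factor $(n-2)$ (which parallels the $n$-independence argument used at the end of the proof of \thm{mainTheorem}). No input beyond \thm{mainTheorem} is required, and no new recurrences for the $\phi_{ij}$ or $\phi_{ijk}$ need to be invoked.
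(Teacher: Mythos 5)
Your proposal is correct and follows essentially the same route as the paper's proof: substitute the additive decomposition into the three payoff differences of the symmetric theorem, identify $\zeta_{1}=\widetilde{\lambda}_{1}+\widetilde{\lambda}_{2}+\widetilde{\lambda}_{3}$ and $\zeta_{2}=\widetilde{\lambda}_{1}-\widetilde{\lambda}_{2}$, and verify the cancellations in \eq{lambda_expressions}. The specific cancellations you predict (including the common factor of $n-2$) all check out, so nothing is missing.
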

\begin{proof}
With respect to a fixed strategy, $y$, an additive game satisfies
\begin{subequations}
\begin{align}
a_{yy} &= s_{y}+t_{y} ; \\
\overline{a_{y\ast}} &= s_{y}+\overline{t} ; \\
\overline{a_{\ast y}} &= \overline{s}+t_{y} ; \\
\overline{a_{\ast\ast}} &= \overline{s}+\overline{t} ; \\
\overline{a} &= \overline{s}+\overline{t} .
\end{align}
\end{subequations}
With $\Lambda_{1}$, $\Lambda_{2}$, and $\Lambda_{3}$ the coefficients of \thm{mainTheorem}, we have
\begin{align}
\Lambda_{1} &\left(a_{yy}-\overline{a_{\ast\ast}}\right) + \Lambda_{2}\left(\overline{a_{y\ast}}-\overline{a_{\ast y}}\right) + \Lambda_{3}\left(\overline{a_{y\ast}}-\overline{a}\right) \nonumber \\
&= \left(\Lambda_{1}+\Lambda_{2}+\Lambda_{3}\right)\left(s_{y}-\overline{s}\right) + \left(\Lambda_{1}-\Lambda_{2}\right)\left(t_{y}-\overline{t}\right) .
\end{align}
Using \eq{lambda_expressions}, we see that $\Lambda_{1}+\Lambda_{2}+\Lambda_{3}=K_{1}$ and $\Lambda_{1}-\Lambda_{2}=K_{2}$ as defined in \eq{zeta_additive}.
\end{proof}

The structure-coefficient theorem for additive games (with any finite number of strategies) can thus be reduced to solving a simpler linear system of size $O\left(N^{2}\right)$ rather than $O\left(N^{3}\right)$.

\subsection{Games with two strategies}
Another case in which the structure-coefficient theorem can be simplified is when there are $n=2$ strategies. The game need not be additive, so we assume that the payoff matrix of \eq{symmetricPayoffMatrix} involves four generic parameters, $a_{11}$, $a_{12}$, $a_{21}$, and $a_{22}$. Here, the structure coefficient reduces to the main result of \citep{tarnita:JTB:2009} (recapitulated below), and our contributions are formulas for evaluating this result. Without a loss of generality, we assume that the strategy in question is $y=1$.
\begin{corollary}
In a two-strategy matrix game, the mean frequency of strategy $y=1$ satisfies
\begin{align}
\frac{1}{N}\sum_{i=1}^{N}\mathbb{P}_{\MSS}\left[ X_{i}=1\right] &= \frac{1}{2} + \delta\frac{1}{2}\left(K_{1}\left(a_{11}-a_{22}\right) +K_{2}\left(a_{12}-a_{21}\right)\right) + O\left(\delta^{2}\right) ,
\end{align}
where
\begin{subequations}\label{eq:coefficients_n2}
\begin{align}
K_{1} &= \frac{1}{2u} \sum_{i,j,k=1}^{N} \pi_{i}^{\textrm{mut}} m_{k}^{ji} \sum_{\ell =1}^{N} \Omega_{k\ell} \left( - \left(\phi_{ik}+\phi_{i\ell}\right) + \left(1-u\right)\left(\phi_{jk}+\phi_{j\ell}\right) + u \right) ; \\
K_{2} &= \frac{1}{2u} \sum_{i,j,k=1}^{N} \pi_{i}^{\textrm{mut}} m_{k}^{ji} \sum_{\ell =1}^{N} \Omega_{k\ell} \left( - \left(\phi_{ik}-\phi_{i\ell}\right) + \left(1-u\right)\left(\phi_{jk}-\phi_{j\ell}\right) \right) .
\end{align}
\end{subequations}
\end{corollary}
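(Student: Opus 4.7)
The strategy is to deduce the corollary as a direct specialization of \thm{mainTheorem} at $n=2$. First I would record the payoff averages in the two-strategy setting with $y=1$: namely $\overline{a_{\ast\ast}} = \tfrac{1}{2}(a_{11}+a_{22})$, $\overline{a_{y\ast}} = \tfrac{1}{2}(a_{11}+a_{12})$, $\overline{a_{\ast y}} = \tfrac{1}{2}(a_{11}+a_{21})$, and $\overline{a} = \tfrac{1}{4}(a_{11}+a_{12}+a_{21}+a_{22})$. Taking the three differences that appear in \eq{mainTheorem} gives
\begin{align*}
a_{yy}-\overline{a_{\ast\ast}} &= \tfrac{1}{2}(a_{11}-a_{22}), \\
\overline{a_{y\ast}}-\overline{a_{\ast y}} &= \tfrac{1}{2}(a_{12}-a_{21}), \\
\overline{a_{y\ast}}-\overline{a} &= \tfrac{1}{4}\bigl((a_{11}-a_{22})+(a_{12}-a_{21})\bigr).
\end{align*}
Regrouping then converts the first-order term of \eq{mainTheorem} into $(a_{11}-a_{22})\bigl(\tfrac{1}{2}\widetilde{\lambda}_{1}+\tfrac{1}{4}\widetilde{\lambda}_{3}\bigr)+(a_{12}-a_{21})\bigl(\tfrac{1}{2}\widetilde{\lambda}_{2}+\tfrac{1}{4}\widetilde{\lambda}_{3}\bigr)$, so it suffices to identify
\[
k_{1} \;=\; \tfrac{1}{2}\widetilde{\lambda}_{1}+\tfrac{1}{4}\widetilde{\lambda}_{3}, \qquad k_{2} \;=\; \tfrac{1}{2}\widetilde{\lambda}_{2}+\tfrac{1}{4}\widetilde{\lambda}_{3},
\]
with the right-hand sides matching \eq{coefficients_n2}.

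To verify this identification I would exploit the $n$-independence of $\widetilde{\lambda}_{1},\widetilde{\lambda}_{2},\widetilde{\lambda}_{3}$, which is part of \thm{mainTheorem}, and evaluate them at $n=2$ using the two-strategy relation $\phi_{ijk}=(\phi_{ij}+\phi_{ik}+\phi_{jk}-1)/2$. Substituting this identity into each numerator $f_{1}^{ijk\ell}(n)$, $f_{2}^{ijk\ell}(n)$, $f_{3}^{ijk\ell}(n)$ defined in the proof of \thm{mainTheorem}, factoring out $(n-1)(n-2)$ (which is guaranteed to divide evenly, as shown there), and then evaluating the resulting $g_{1}$, $g_{2}$, $g_{3}$ at $n=2$ produces closed-form expressions for each $\widetilde{\lambda}_{j}$ in terms of $\phi_{ik}$, $\phi_{i\ell}$, $\phi_{jk}$, $\phi_{j\ell}$, and $\phi_{k\ell}$ alone. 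Forming the two linear combinations above, the $\phi_{k\ell}$ contributions should cancel (leaving no dependence on pairwise identity at the interaction end), and the remaining terms should collapse into precisely the expressions on the right-hand side of \eq{coefficients_n2}.

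The main obstacle is the bookkeeping in this last step: each $\widetilde{\lambda}_{j}$ involves seven nontrivial $\phi$-terms plus a constant, and one must carry out the $(n-1)(n-2)$ division symbolically before setting $n=2$, since a direct substitution yields an indeterminate form. An alternative (and perhaps more transparent) route that avoids any limiting argument is to restart from \eq{mainBeforeJoint} with $n=2$ and $y=1$: the inner double sum over $(y_{k},y_{\ell})\in\{1,2\}^{2}$ then has only four terms, each of whose neutral expectations can be written directly in terms of $\phi_{ij}$, $\phi_{ik}$, $\phi_{jk}$ via \lem{symmetry} and the two-strategy identity for $\phi_{ijk}$. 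Either route reduces the proof to a bounded algebraic identity, so the key conceptual work has already been done in establishing \thm{mainTheorem}; what remains is the careful grouping of terms so that the coefficients of $a_{11}$ and $a_{22}$ (and likewise of $a_{12}$ and $a_{21}$) come out as negatives of one another, which is guaranteed by the label-swap symmetry of \lem{symmetry}.
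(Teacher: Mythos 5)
Your proposal is correct and follows essentially the same route as the paper's own proof: specialize \thm{mainTheorem} to $n=2$, compute the payoff averages for $y=1$, regroup the three differences to identify $k_{1}=\left(2\widetilde{\lambda}_{1}+\widetilde{\lambda}_{3}\right)/4$ and $k_{2}=\left(2\widetilde{\lambda}_{2}+\widetilde{\lambda}_{3}\right)/4$, and verify these against \eq{lambda_expressions}. One simplification worth noting: in these particular linear combinations the $\phi_{ik\ell}$, $\phi_{jk\ell}$, and $\phi_{k\ell}$ terms cancel identically and an explicit factor of $\left(n-2\right)$ appears in the numerator, so neither the two-strategy identity for $\phi_{ijk}$ nor any symbolic division before setting $n=2$ is actually required.
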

\begin{proof}
With respect to strategy $y=1$ in a two-strategy game,
\begin{subequations}
\begin{align}
a_{yy} &= a_{11} ; \\
\overline{a_{y\ast}} &= \frac{1}{2}\left(a_{11}+a_{12}\right) ; \\
\overline{a_{\ast y}} &= \frac{1}{2}\left(a_{11}+a_{21}\right) ; \\
\overline{a_{\ast\ast}} &= \frac{1}{2}\left(a_{11}+a_{22}\right) ; \\
\overline{a} &= \frac{1}{4}\left(a_{11}+a_{12}+a_{21}+a_{22}\right) .
\end{align}
\end{subequations}
With $\Lambda_{1}$, $\Lambda_{2}$, and $\Lambda_{3}$ the coefficients of \thm{mainTheorem}, we have
\begin{align}
\Lambda_{1} &\left(a_{yy}-\overline{a_{\ast\ast}}\right) + \Lambda_{2}\left(\overline{a_{y\ast}}-\overline{a_{\ast y}}\right) + \Lambda_{3}\left(\overline{a_{y\ast}}-\overline{a}\right) \nonumber \\
&= \frac{1}{4}\left(2\Lambda_{1}+\Lambda_{3}\right) \left(a_{11}-a_{22}\right) + \frac{1}{4}\left(2\Lambda_{2}+\Lambda_{3}\right) \left(a_{12}-a_{21}\right) .
\end{align}
Using \eq{lambda_expressions}, we have $\left(2\Lambda_{1}+\Lambda_{3}\right) /4=K_{1}$ and $\left(2\Lambda_{2}+\Lambda_{3}\right) /4=K_{2}$ as in \eq{coefficients_n2}.
\end{proof}

Therefore, even for non-additive games with two strategies, evaluating the structure-coefficient theorem can be reduced to solving a linear system with $O\left(N^{2}\right)$ (rather than $O\left(N^{3}\right)$) terms.

\section{Extension to asymmetric games} \label{sec:asymmetric}
Here, we address the generality of the structure-coefficient theorem of \citet{tarnita:JTB:2009,tarnita:PNAS:2011} and give an example of a natural model of finite-population evolution that falls outside of its original scope but is covered by the methods presented here.

Instead of \eq{symmetricPayoffMatrix} describing every interaction, there could be a separate payoff matrix for every pair. That is, the payoff matrix for $i$ against $j$ could depend on both $i$ and $j$, of the form
\begin{align}
\bordermatrix{
&\ 1 &\ 2 &\ \cdots &\ n \cr
1\,\,\, &\ a_{11}^{ij} &\ a_{12}^{ij} &\ \cdots &\ a_{1n}^{ij} \cr
2 &\ a_{21}^{ij} &\ a_{22}^{ij} &\ \cdots &\ a_{2n}^{ij} \cr
\,\vdots &\ \vdots & \vdots & \ddots &\ \vdots \cr
n &\ a_{n1}^{ij} &\ a_{n2}^{ij} &\ \cdots &\ a_{nn}^{ij}
} , \label{eq:asymmetricPayoffMatrix}
\end{align}
which is an asymmetric game \citep{maynardsmith:CUP:1982,broom:TF:2013}. Now the total payoff to individual $i$ in state $\vx$ is
\begin{align}
U_{i}\left(\vx\right) &= \sum_{\ell =1}^{N} \sum_{y_{i},y_{\ell}=1}^{n} \mathbb{1}_{y_{i}}\left(x_{i}\right)\mathbb{1}_{y_{\ell}}\left(x_{\ell}\right) a_{y_{i}y_{\ell}}^{i\ell} .
\end{align}
In contrast to \eq{Ui_symmetric}, here we suppress $\Omega_{i\ell}$ due to the dependence of $a_{y_{i}y_{\ell}}^{i\ell}$ on both $i$ and $\ell$.

We have the following structure-coefficient theorem for asymmetric matrix games, such as the games described in \citet{mcavoy:JRSI:2015}.
\begin{theorem}\label{thm:asymmetricTheorem}
For every $y\in\left\{1,\dots ,n\right\}$, the mean frequency of strategy $y$ satisfies
\begin{align}
\frac{1}{N} &\sum_{i=1}^{N}\mathbb{P}_{\MSS}\left[ X_{i}=y\right] \nonumber \\
&= \frac{1}{n} + \delta\frac{1}{n}\sum_{k,\ell =1}^{N}\left(\Lambda_{1}^{k\ell}\left(a_{yy}^{k\ell}-\overline{a_{\ast\ast}^{k\ell}}\right) + \Lambda_{2}^{k\ell}\left(\overline{a_{y\ast}^{k\ell}}-\overline{a_{\ast y}^{k\ell}}\right) + \Lambda_{3}^{k\ell}\left(\overline{a_{y\ast}^{k\ell}}-\overline{a^{k\ell}}\right)\right) + O\left(\delta^{2}\right) , \label{eq:asymmetricTheorem}
\end{align}
where
\begin{subequations}\label{eq:lambda_expressions_asymmetric}
\begin{align}
\Lambda_{1}^{k\ell} &= \frac{1}{u}\sum_{i,j=1}^{N} \pi_{i}^{\textrm{mut}} m_{k}^{ji} \left(\frac{\substack{-n^{2}\phi_{ik\ell} + n^{2}\left(1-u\right)\phi_{jk\ell} + n\phi_{ik} + n\phi_{i\ell} \\ - n\left(1-u\right)\phi_{jk} - n\left(1-u\right)\phi_{j\ell} + nu\phi_{k\ell} - 2u}}{\left(n-1\right)\left(n-2\right)}\right) ; \\
\Lambda_{2}^{k\ell} &= \frac{1}{u}\sum_{i,j=1}^{N} \pi_{i}^{\textrm{mut}} m_{k}^{ji} \left(\frac{\substack{-n^{2}\phi_{ik\ell} + n^{2}\left(1-u\right)\phi_{jk\ell} + n\phi_{ik} + n\left(n-1\right)\phi_{i\ell} \\ - n\left(1-u\right)\phi_{jk} - n\left(n-1\right)\left(1-u\right)\phi_{j\ell} + nu\phi_{k\ell} - nu}}{\left(n-1\right)\left(n-2\right)}\right) ; \\
\Lambda_{3}^{k\ell} &= \frac{1}{u}\sum_{i,j=1}^{N} \pi_{i}^{\textrm{mut}} m_{k}^{ji} \left(\frac{\substack{2n^{2}\phi_{ik\ell} - 2n^{2}\left(1-u\right)\phi_{jk\ell} - n^{2}\phi_{ik} - n^{2}\phi_{i\ell} \\ + n^{2}\left(1-u\right)\phi_{jk} + n^{2}\left(1-u\right)\phi_{j\ell} - 2nu\phi_{k\ell} + 2nu}}{\left(n-1\right)\left(n-2\right)}\right) .
\end{align}
\end{subequations}

Moreover, $\Lambda_{1}^{ij}$, $\Lambda_{2}^{ij}$ and $\Lambda_{3}^{ij}$ are independent of the number of strategies, $n$.
\end{theorem}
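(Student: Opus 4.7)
The plan is to adapt the proof of \thm{mainTheorem} essentially verbatim. The only structural difference between the symmetric and asymmetric settings lies in the payoff function: in \eq{ui_asymmetric} the entry $a_{y_i y_\ell}^{i\ell}$ both absorbs the role played by $\omega_{i\ell}$ in \eq{ui_symmetric} and varies with the pair of locations. Because the derivation of \eq{mainBeforeJoint} only treats the payoffs as coefficients of indicator products $\mathbb{1}_{y_k}(x_k)\mathbb{1}_{y_\ell}(x_\ell)$, nothing about that derivation changes: we simply replace $\omega_{k\ell} a_{y_k y_\ell}$ by $a_{y_k y_\ell}^{k\ell}$.

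First, I would rederive the analog of \eq{expdelprime}. Since the derivative of $e_{ij}$ at $\delta=0$ is $\sum_k m_k^{ji} u_k(\vx)$, the same one-step calculation yields
\begin{align*}
\frac{1}{N}\sum_{i=1}^{N}\mathbb{P}_{\MSS}'\left[X_{i}=y\right] &= \sum_{k,\ell=1}^{N} \sum_{i,j=1}^{N} v_i^\ast m_k^{ji} \sum_{L=1}^{5} \chi_L^{k\ell} \sum_{(y_k,y_\ell)\in\mathcal{C}^{(L)}_{y}} a_{y_k y_\ell}^{k\ell},
\end{align*}
where the weights $\chi_L^{k\ell}$ have the same functional form as the $\chi_L$ appearing in the proof of \thm{mainTheorem}, since they encode only the neutral joint distributions of $(X_j,X_k,X_\ell)$ and $(X_i,X_k,X_\ell)$ together with the mutation-bookkeeping factor $u/n$. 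In particular, the $\chi_L^{k\ell}$ depend on $(k,\ell)$ only through $\phi_{ik}$, $\phi_{i\ell}$, $\phi_{jk}$, $\phi_{j\ell}$, $\phi_{k\ell}$, $\phi_{ik\ell}$, and $\phi_{jk\ell}$.

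Next, for each fixed pair $(k,\ell)$ I would run the linear-algebra step of the symmetric proof pointwise in $(k,\ell)$. That is, I expand a generic linear combination $\alpha^{k\ell} a_{yy}^{k\ell}+\beta^{k\ell}\overline{a_{y\ast}^{k\ell}}+\gamma^{k\ell}\overline{a_{\ast y}^{k\ell}}+\delta^{k\ell}\overline{a_{\ast\ast}^{k\ell}}+\varepsilon^{k\ell}\overline{a^{k\ell}}$ into its five equivalence-class contributions and match coefficients against the inner sum displayed above. This yields the same solutions for $\alpha^{k\ell},\ldots,\varepsilon^{k\ell}$ as in the symmetric proof, but with the outer sums over $i,j,k,\ell$ (and the factor $\omega_{k\ell}$) stripped away. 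The identities $\alpha^{k\ell}+\delta^{k\ell}=0$ and $\beta^{k\ell}+\gamma^{k\ell}+\varepsilon^{k\ell}=0$ then hold pointwise for the same algebraic reasons, allowing me to introduce $\lambda_1^{k\ell}\coloneqq\alpha^{k\ell}$, $\lambda_2^{k\ell}\coloneqq-\gamma^{k\ell}$, $\lambda_3^{k\ell}\coloneqq-\varepsilon^{k\ell}$, and finally $\widetilde\lambda_i^{k\ell}\coloneqq n\lambda_i^{k\ell}$, producing exactly \eq{lambda_expressions_asymmetric}.

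Finally, I would observe that the $n$-independence of $\widetilde\lambda_i^{k\ell}$ is inherited from the symmetric case without any new work. The functions $f_1^{ijk\ell}(n),f_2^{ijk\ell}(n),f_3^{ijk\ell}(n)$ used in the proof of \thm{mainTheorem} are polynomials of degree at most two in $n$ whose vanishing at $n=1$ and $n=2$ followed purely from $\phi_{ij}(1)=\phi_{ijk}(1)=1$ and from the identity $\phi_{ijk}=(\phi_{ij}+\phi_{ik}+\phi_{jk}-1)/2$ at $n=2$. These facts concern only the neutral identity-by-state probabilities and are untouched by the asymmetry of the payoffs, so dividing each $f_r^{ijk\ell}$ by $(n-1)(n-2)$ yields $n$-independent coefficients as before. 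I do not anticipate any serious obstacle: the only point requiring care is to verify that none of the intermediate manipulations silently used symmetry of the payoff matrix; a careful reading of the symmetric proof confirms that they did not, so the derivation transports cleanly.
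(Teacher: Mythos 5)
Your proposal is correct and follows exactly the route the paper takes: the authors' entire proof is the remark that \thm{asymmetricTheorem} is ``extracted from that of \thm{mainTheorem} by not summing over $k$ and $\ell$,'' which is precisely your pointwise-in-$(k,\ell)$ rerun of the symmetric argument with $\omega_{k\ell}a_{y_k y_\ell}$ replaced by $a_{y_k y_\ell}^{k\ell}$. Your additional checks (that the $\chi_L$ weights and the $n$-independence argument never use symmetry of the payoffs) are sound and simply make explicit what the paper leaves implicit.
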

The proof of \thm{asymmetricTheorem} can be extracted from that of \thm{mainTheorem} by not summing over $k$ and $\ell$. In particular, no further work is needed to get the structure coefficients for asymmetric games.

\begin{example}
Let $\left(\Omega_{ij}\right)_{i,j=1}^{N}$ be a fixed matrix and suppose that $n=2$. If $a_{11}^{ij}=\Omega_{ij}a_{11}$, $a_{12}^{ij}=\Omega_{ij}a_{12}$, $a_{21}^{ij}=\Omega_{ij}a_{21}$, and $a_{22}^{ij}=\Omega_{ij}a_{22}$ for some fixed $a_{11},a_{12},a_{21},a_{22}\in\mathbb{R}$, then we define
\begin{subequations}\label{eq:compositeCoefficients}
\begin{align}
K_{1}\left(\Omega\right) &\coloneqq \sum_{k,\ell =1}^{N} \Omega_{k\ell} K_{1}^{k\ell} ; \\
K_{2}\left(\Omega\right) &\coloneqq \sum_{k,\ell =1}^{N} \Omega_{k\ell} K_{2}^{k\ell} .
\end{align}
\end{subequations}
For example, in the case that $\Omega_{ij}=w_{ij}$ (adjacency matrix), we get the two coefficients of \citet[][Equation 15]{tarnita:JTB:2009} under accumulated payoffs. When $\Omega_{ij}=w_{ij}/w_{i}$ (where $w_{i}=\sum_{k=1}^{N}w_{ik}$), we get the two coefficients of \citet[][Equation 15]{tarnita:JTB:2009} under averaged payoffs. In other words, we can easily realize structure coefficients for symmetric games as ``composite'' coefficients.
\end{example}

\section{Examples illustrating the effects of selection intensity}
To supplement the examples in the main text, we consider the effects of varying selection intensity beyond the value considered there ($\delta =0.05$). Since our results are formulated for weak selection, we expect agreement between the simulation and analytical results when $\delta$ is sufficiently small. \fig{ff_vs_pp} shows that for $\delta =0.05$, the analytical results are already in close agreement with those of the simulated process. As one might expect, reducing $\delta$ from $0.05$ to $0.01$ results in even closer agreement, which we illustrate in \fig{ff_vs_pp_weaker}. However, since there is comparatively more noise in the process when $\delta =0.01$ than when $\delta =0.05$, we see more variation in the simulation data for this smaller selection intensity when we use the same number of generations ($10^{8}$).

\begin{figure}
\centering
\includegraphics[width=0.8\textwidth]{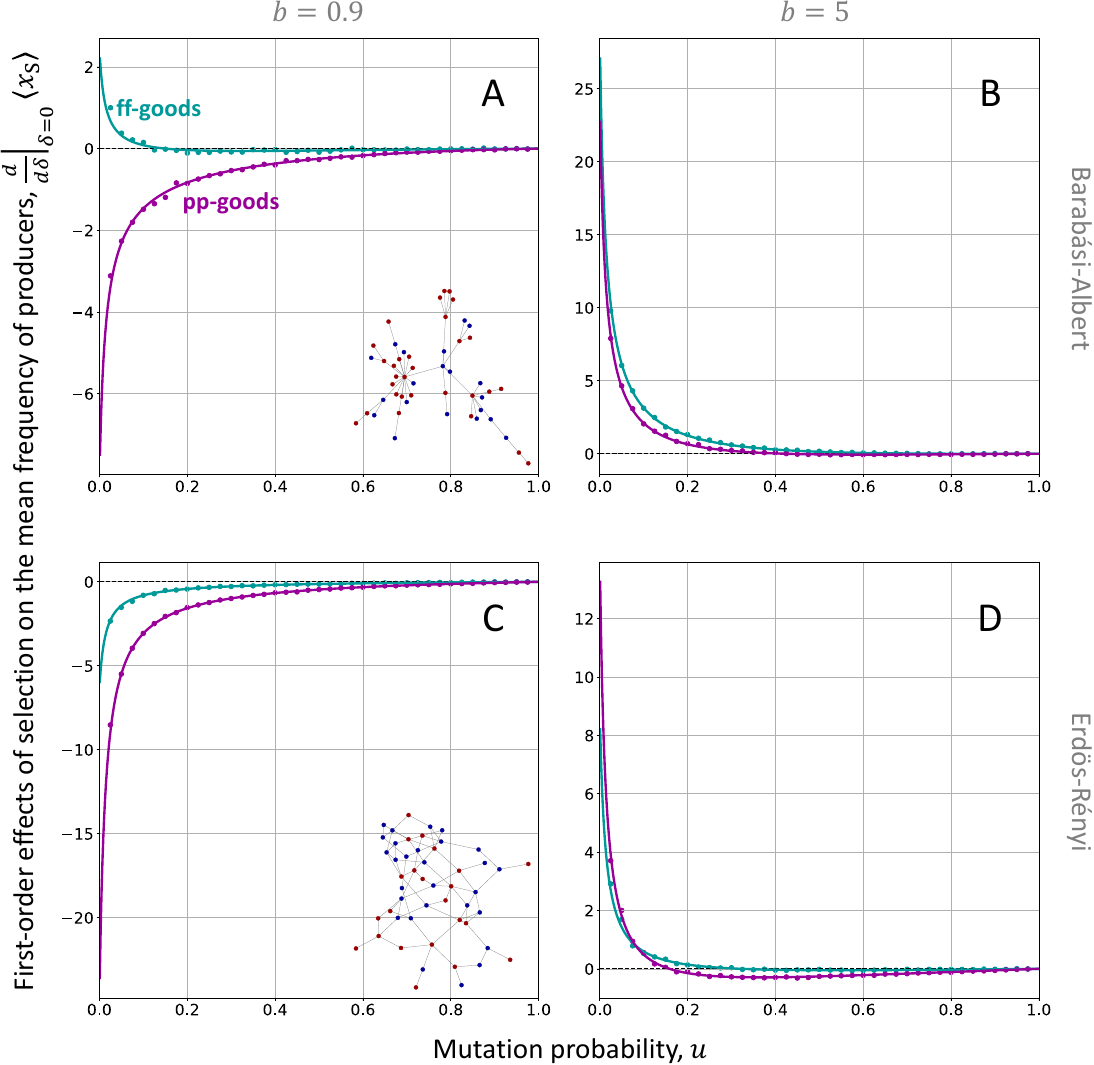}
\caption{Effects of selection on the mean frequencies of producers of ff- and pp-goods. The panels here are identical to those of \fig{ff_vs_pp} in the main text, with the exception that here simulations are performed for a selection intensity of $\delta =0.01$ rather than $\delta =0.05$. The result is that the first-order effects via simulations more closely resemble those of the calculations, but there is also more noise when the mean is taken over the same number of generations ($10^{8}$).\label{fig:ff_vs_pp_weaker}}
\end{figure}

\fig{stronger_selection} shows the effects of increasing the selection intensity, from $\delta=0.05$, which was used for the simulations in \fig{ff_vs_pp}, to $\delta=0.2$ and then to $\delta=0.5$. \fig{stronger_selection} gives a slightly different illustration of mutation-selection dynamics, showing the effects on the mean frequencies of producers directly rather than on the first-order effects of selection as in \fig{ff_vs_pp}. When the selection intensity is small, the analytical predictions of \eq{xA_expansion} agree closely with the results of simulations (panels A and D of \fig{stronger_selection}, or, equivalently, panels A and D of \fig{ff_vs_pp}, since the parameters of these respective panels are the same in both figures). Moreover, all of the qualitative conclusions based on the analytical results, which assume weak selection, continue to hold as $\delta$ increases. Quantitatively they differ, most obviously in that the magnitude of effects on the mean frequency is larger when $\delta$ is larger and more subtly in that changing $\delta$ shifts the crossover point for the direction of selection on the prosocial behavior (panels B--C and E--F of \fig{stronger_selection}). The first may not be surprising, as selection is a deterministic force and all we are doing here is changing its strength, but it is clear that conclusions based on weak mutation are less robust than those based on weak selection. The simulations suggest that the second, subtler effect depends on population structure. In panels B--C of \fig{stronger_selection}, increasing $\delta$ shifts the crossover point to smaller values of $u$.  In panels E--F of \fig{stronger_selection}, increasing $\delta$ shifts the crossover point to larges values of $u$.

\begin{figure}
\centering
\includegraphics[width=\textwidth]{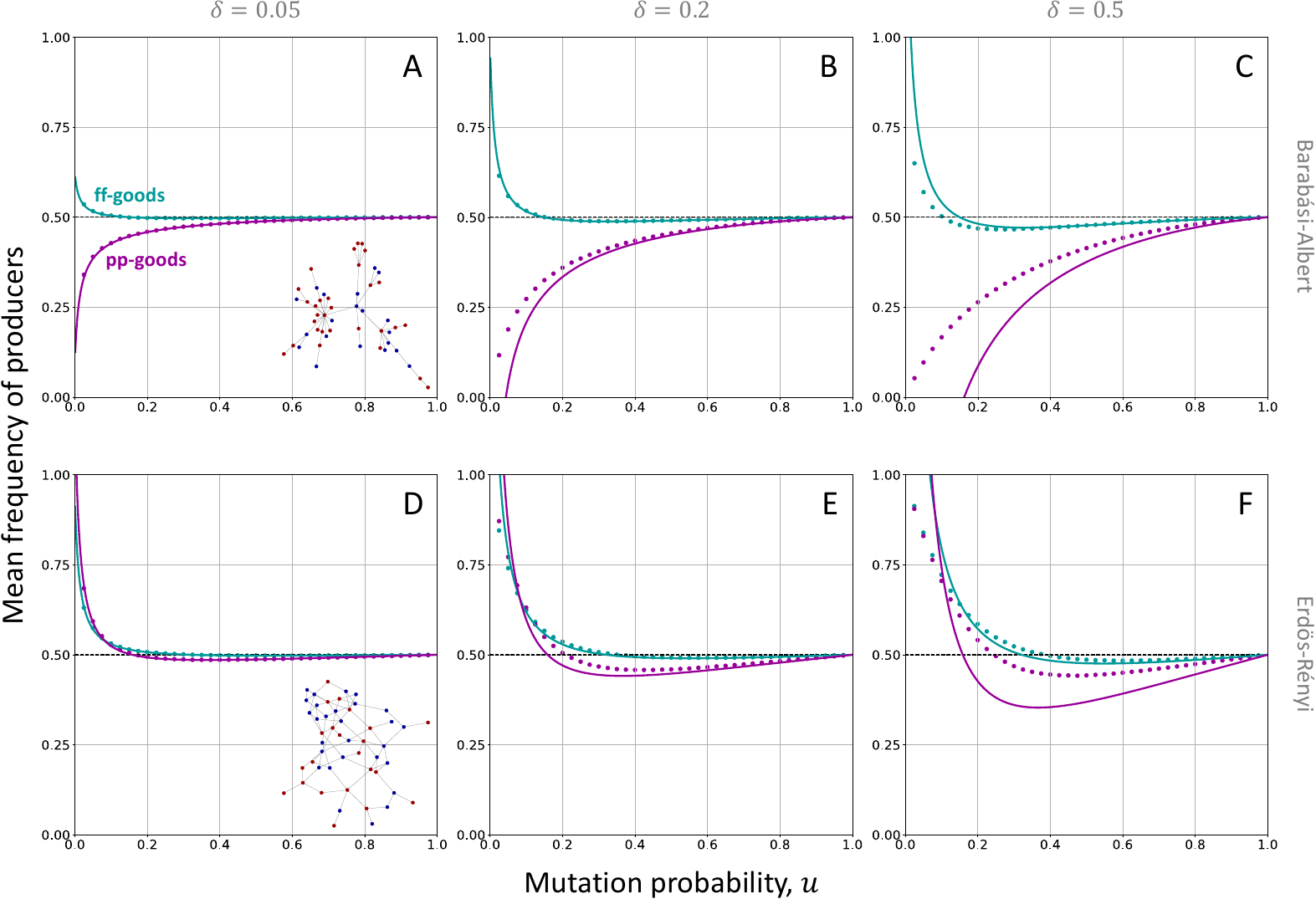}
\caption{Mean frequencies of simple prosocial behaviors as the selection intensity varies. Panels A--C depict results for the Barab\'{a}si-Albert graph and payoff values of \fig{ff_vs_pp}A ($b=0.9$ and $c=1$), while panels D--F show producer frequencies for the Erd\"{o}s-R\'{e}nyi graph and payoff values of \fig{ff_vs_pp}D ($b=5$ and $c=1$). Simulation results (dots) shown are obtained by averaging the mean frequencies of producers over $10^{8}$ generations. The solid lines depict the quantity $1/2+\delta\frac{d}{d\delta}\Big\vert_{\delta =0}\left<x_{\textrm{S}}\right>$, ignoring higher-order terms in $\delta$. As expected, these higher-order terms matter more for larger $\delta$, where the first-order calculations begin to differ from simulation results. However, they still capture many of the qualitative features of the effects of mutation probabilities on mean frequencies, e.g. whether a social good is favored for small versus large $u$.\label{fig:stronger_selection}}
\end{figure}

Finally, in \fig{stronger_selection_same_plot}, we depict the results of \fig{stronger_selection} in a slightly different way, with different selection intensities shown in the same panel. We do not include the predictions of weak selection in this panel because \fig{stronger_selection} illustrates that the exact calculations do not always closely match up with the simulations when $\delta$ is larger. These examples reveal an effect akin to amplifying selection: When the mean frequency of producers is larger than $1/2$, increasing the strength of selection further increases this frequency. When it is below $1/2$, increasing the strength of selection further decreases this frequency. This finding does not strictly hold for all points (i.e. near the crossover point mentioned above), but it is a broad trend supported by all four panels. We leave this as a topic for future investigation.

\begin{figure}
\centering
\includegraphics[width=0.8\textwidth]{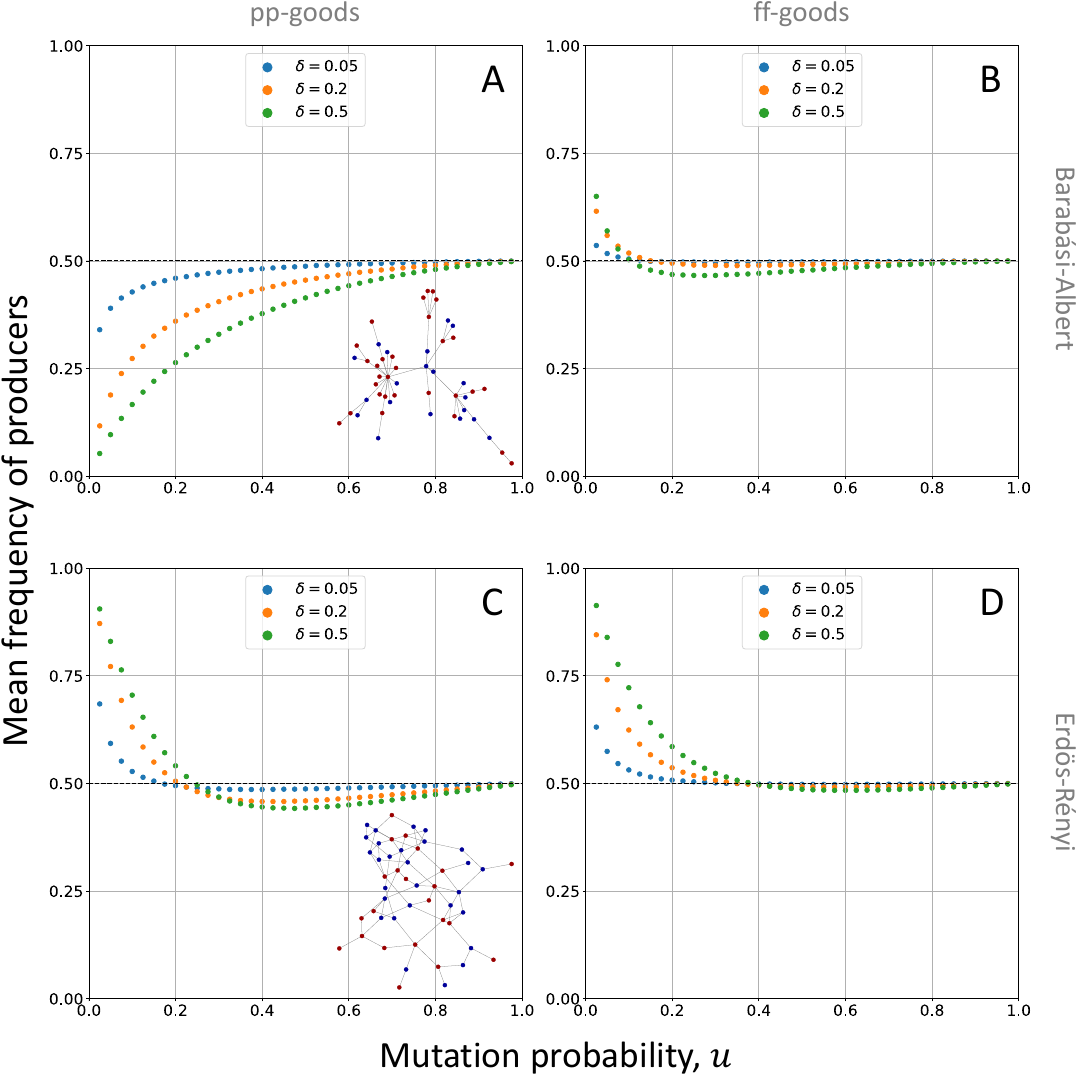}
\caption{Mean frequencies of simple prosocial behaviors, with different selection intensities depicted on the same panel. All parameters are the same as in \fig{stronger_selection}. The comparison here is of simulation results as the selection intensity varies. All panels seem to indicate (roughly) an amplification effect of selection, with stronger selection increasing (decreasing) the advantage of producers who are favored (disfavored) under weak selection.\label{fig:stronger_selection_same_plot}}
\end{figure}

\end{document}